\documentclass[journal,onecolumn]{IEEEtran}
\usepackage[T1]{fontenc}
\usepackage{amsmath,amsfonts,amsthm, mathabx}
\usepackage{algorithmic,lipsum}
\usepackage{array}
\usepackage[caption=false,font=normalsize,labelfont=sf,textfont=sf]{subfig}
\usepackage{textcomp,tcolorbox}
\usepackage{stfloats}
\usepackage{hyperref}       
\hypersetup{
  colorlinks   = true, 
  urlcolor     = blue, 
  linkcolor    = blue, 
  citecolor   = blue 
}
\usepackage{url}
\usepackage{float}
\usepackage{verbatim}
\usepackage{graphicx}
\hyphenation{op-tical net-works semi-conduc-tor IEEE-Xplore}
\def\BibTeX{{\rm B\kern-.05em{\sc i\kern-.025em b}\kern-.08em
    T\kern-.1667em\lower.7ex\hbox{E}\kern-.125emX}}
\usepackage{bbm}
\usepackage{physics}
\newtheorem{theorem}{Theorem}[section]
\newtheorem{corollary}[theorem]{Corollary}
\newtheorem{lemma}[theorem]{Lemma}
\newtheorem{remark}[theorem]{Remark}
\newtheorem{definition}[theorem]{Definition}
\newtheorem{prop}[theorem]{Proposition}

\newtheorem{conjecture}[theorem]{Conjecture}
\usepackage{tikz-cd}
\usepackage{tikz}
\usetikzlibrary{decorations.pathreplacing}
\usepackage[
  left=0.5in,
  right=0.5in,
  top=0.5in,
  bottom=0.5in,
]{geometry}
\usetikzlibrary{arrows, automata}
\usetikzlibrary{shapes.geometric, arrows}
\tikzstyle{block} = [rectangle, rounded corners, minimum width=2.5cm, minimum height=1cm, text centered, draw=black, fill=blue!20]
\tikzstyle{arrow} = [thick,->,>=stealth]
\tikzstyle{shield} = [rectangle, rounded corners, minimum width=2.5cm, minimum height=1cm, text centered, draw=black, fill=red!20]

\usepackage{color,hyperref}
\usepackage{physics}

\newcommand{\wt}{\widetilde}

\newcommand{\mb}{\mathbb}
\newcommand{\mc}{\mathcal}
\newcommand{\om}{\omega}
\newcommand{\ot}{\otimes}

\newcommand{\Choi}{Choi–Jamio\l{}kowski\ }

\newcommand{\sym}{\mathrm{sym}}
\newcommand{\asym}{\mathrm{asym}}

\definecolor{cool_green}{rgb}{0.0, 0.5, 0.0}

\newcommand{\pw}[1]{{\color{red}[PW: #1]}}

\newcommand{\added}[1]{\textcolor{blue}{#1}\normalcolor}

\title{Quantum capacity amplification via privacy}
\begin{document}
\author{%
Peixue Wu$^{1,2,*}$ and Yunkai Wang$^{1,2,3}$ \\[1ex]
\small{
\begin{tabular}{l}
$^{1}$\textit{Institute for Quantum Computing, University of Waterloo,}
\textit{200 University Avenue West, Waterloo, ON N2L 3G1, Canada} \\[1ex]
$^{2}$\textit{Department of Applied Mathematics, University of Waterloo,}
\textit{200 University Avenue West, Waterloo, ON N2L 3G1, Canada} \\[1ex]
$^{3}$\textit{Perimeter Institute for Theoretical Physics}
\textit{31 Caroline St N, Waterloo, ON N2L 2Y5, Canada} \\[1ex]
\end{tabular}
}
\thanks{*Contact author: p33wu@uwaterloo.ca.}
}
\maketitle

\begin{abstract}
We investigate superadditivity of quantum capacity through private channels whose Choi–Jamiołkowski operators are private states. This perspective links the security structure of private states to quantum capacity and clarifies the role of the shield system: information encoded in the shield system that would otherwise leak to the environment can be recycled when paired with an assisting channel, thereby boosting capacity. Our main contributions are threefold: Firstly, we develop a general framework that provides a sufficient condition for capacity amplification, which is formulated in terms of the assisting channel’s Holevo information. As examples, we give explicit, dimension and parameter dependent amplification thresholds for erasure and depolarizing channels. Secondly, assuming the Spin alignment conjecture, we derive a single-letter expression for the quantum capacity of a family of private channels that are neither degradable, anti-degradable, nor PPT; as an application, we construct channels with vanishing quantum capacity yet unbounded private capacity. Thirdly, we further analyze approximate private channels: we give an alternative proof of superactivation that extends its validity to a broader parameter regime, and, by combining amplification bounds with continuity estimates, we establish a metric separation showing that channels exhibiting capacity amplification have nonzero diamond distance from the set of anti-degradable channels, indicating that existing approximate-(anti)degradability bounds are not tight. We also revisit the computability of the regularized quantum capacity and modestly suggest that this fundamental question still remains open.
\end{abstract}

\tableofcontents

\newpage
\section{Introduction}
\subsection{Background and Motivation}
Quantum channels, the fundamental objects describing information transmission in quantum mechanics, have attracted intensive study in a variety of contexts such as quantum computing, quantum cryptography, and quantum communication. One of the most intriguing effects for quantum channels is the super-additivity of quantum capacities given by regularized coherent information \cite{lloyd1997capacity, shor2002, devetak2005capacity}. This effect enables innovative and counterintuitive phenomena, with instances demonstrated for depolarizing channels~\cite{lloyd1997capacity, bhalerao2025}; constructions based on perturbative methods~\cite{leditzky2018dephrasure, leditzky2023generic, leditzky2023platypus, siddhu2020leaking, siddhu2021positivity, Smith_2025, wuzhen2025_2, wuzhen2025_1}; and superadditivity for Gaussian channels~\cite{lim2018activation, lim2019activation}. Nevertheless, a comprehensive theory is still lacking, and fundamental questions about the scope and structure of superadditivity remain open.

In this work, we study the superadditivity of quantum capacities in the context of channels induced by private states, which provide a general framework for investigating these phenomena. Introduced in \cite{Horodecki_2005, Horodecki_2009}, private states represent a natural quantum extension of secret classical correlations, augmented by a ``shield'' system. Formally, a private state $\gamma^{a_0 b_0 A_0 B_0}$ is composed of two principal components: the $a_0 b_0$ subsystem, designated for classical key distribution between two parties (Alice and Bob), and the $A_0 B_0$ subsystem, which serves to obscure these correlations from an eavesdropper. A notable special case is the \emph{pbit} (private bit), corresponding to a two-dimensional key system $a_0, b_0 \cong \mathbb{C}^2$. Perfect private states exemplify this framework by exhibiting classical correlations on $a_0 b_0$ that are entirely decoupled from any eavesdropper, serving as an archetype for secure quantum communication.

Viewing a private state as the Choi--Jamio\l kowski operator of a suitable quantum channel $\mathcal{N}^{A\to B}$ (which we call a \emph{private channel}) with $A=a_0A_0$ and $B=b_0B_0$, allows one to translate structural features of private state $\gamma^{a_0 b_0 A_0 B_0}$ directly into statements about its quantum capacity. This \emph{channel perspective} is powerful: it ties the security guarantees of private states to core questions in capacity theory. Intuitively, information stored in the shield leaks to the environment and is therefore useless for quantum communication. However, when combined with a second channel, the leaked information can be \emph{recycled} to enhance end-to-end transmission. This intuition underpins the phenomena of \emph{super-activation}, \emph{amplification}, and \emph{super-amplification} in the channel setting.

\textit{Super-activation}, first discovered by Smith and Yard~\cite{Smith_2008}, occurs when two channels, each individually having zero quantum capacity, can be combined (via the tensor product) to produce a strictly positive capacity. If only one of the channels has positive capacity and combining them yields a strictly higher capacity than that single channel alone, we refer to the effect as \emph{amplification}. If both channels have nonzero capacities but their combination exceeds the sum of those individual capacities, we call it \emph{super-amplification}. For a comprehensive discussion of these effects, which underscore the profoundly non-additive nature of quantum capacity, contrasting with the classical capacity of classical channels, see Ref.~\cite{Koudia_2022}.  

Beyond demonstrating that capacities can be nonadditive, \emph{amplification} is practically important: it shows how to enhance the ultimate rate of reliable quantum communication by pairing channels. Prior work has typically isolated only one slice of this landscape: Ref.~\cite{siddhu2021entropic} verified superadditivity of the \emph{maximal coherent information} (a one-shot quantity), and Ref.~\cite{Smith_2008} established \emph{superactivation} (two zero-capacity channels combining to yield positive capacity). In contrast, we rigorously present a unified framework that exhibits \emph{super-activation}, \emph{amplification}, and \emph{super-amplification} within the same family of constructions, together with dimension- and parameter-dependent thresholds. This idea is closely related to the \emph{potential capacity} of a quantum channel, introduced in Ref.~\cite{Winter_2016}, asking for the maximum capacity a channel can attain when used together with an arbitrary assisting channel. We complement this general notion with concrete examples: in our constructions, we identify channels whose \emph{potential quantum capacity} strictly exceeds their standalone quantum capacity, and we quantify the gap in terms of the shield/key parameters that appear naturally in the private state picture.


\subsection{Our Contributions}
We summarize our contributions in three parts.

\paragraph{A general framework for quantum capacity amplification}
A guiding question is: for a private (or approximately private) channel $\mc N$ and an arbitrary channel $\mc M$, under what conditions do we have
\begin{equation}\label{intro: amplification}
    \mc Q(\mc N \otimes \mc M)\;>\;\mc Q(\mc N)\;+\;\mc Q(\mc M)\,?
\end{equation}
Here $\mc Q(\cdot)$ is the quantum capacity.
\begin{theorem}[Informal; see Theorem~\ref{thm:amplification}]
A sufficient condition relating the Holevo information of $\mc M$ to $\mc Q(\mc N)+\mc Q(\mc M)$ implies \eqref{intro: amplification}.
\end{theorem}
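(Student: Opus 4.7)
My plan is to lower bound $\mc Q(\mc N \otimes \mc M)$ by evaluating the coherent information on a carefully chosen single-letter input, and then to show this lower bound strictly exceeds $\mc Q(\mc N)+\mc Q(\mc M)$ whenever the Holevo information $\chi(\mc M)$ is large enough. The structural feature I would exploit is that the Choi state of $\mc N$ is (approximately) a private state: its key register $a_0$ can carry essentially $\log|a_0|$ units of private information, while the shield $A_0$ leaks correlations to the environment, which is precisely what prevents this content from being transmitted coherently when $\mc N$ is used alone.

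The first step is to construct a classical-quantum ansatz of the form
\begin{equation*}
\rho^{X\,A_{\mc N}\,A_{\mc M}\,R} \;=\; \sum_x p_x\,\ket{x}\!\bra{x}^X \otimes \sigma_x^{A_{\mc M}} \otimes \ket{\psi_x}\!\bra{\psi_x}^{A_{\mc N} R},
\end{equation*}
where $\{p_x,\sigma_x\}$ is an ensemble on the input of $\mc M$ that approaches its Holevo capacity $\chi(\mc M)$, and each $\ket{\psi_x}$ is a purification of an input to $\mc N$ aligned with a twisting unitary indexed by $x$ in the private-state decomposition of $\mc N$'s Choi operator. The classical label $x$ plays the role of a ``synchronization key'' transmitted through $\mc M$: at the decoder, knowledge of $x$ will let the receiver untwist the output of $\mc N$ and coherently recover the content that would otherwise stay entangled with Eve through the shield.

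The second step is to evaluate the coherent information of this ansatz through $\mc N \otimes \mc M$ by decomposing the output entropies along the classical register $X$. I expect an estimate of the form
\begin{equation*}
I_c(\rho, \mc N \otimes \mc M) \;\geq\; \log|a_0| \;+\; \chi(\mc M) \;-\; \delta(\mc N),
\end{equation*}
where $\delta(\mc N)$ measures the failure of $\mc N$'s Choi state to be a perfect private state (so $\delta(\mc N)=0$ for an ideal private channel). Combining this with the single-letter bound $\mc Q(\mc N\otimes \mc M)\geq I_c$ and the hypothesis that $\chi(\mc M)+\log|a_0|-\delta(\mc N) > \mc Q(\mc N)+\mc Q(\mc M)$ immediately yields \eqref{intro: amplification}.

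The main obstacle will be the second step, namely the bookkeeping of the environment-side entropies. I will have to show that $\mc M$, when used as a classical side channel for the twist label $X$, decouples Eve from the shield of $\mc N$ with enough fidelity that the rate gain is converted into \emph{coherent} information rather than only private information. This reduces to a decoupling-style estimate on the joint environment of $\mc N \otimes \mc M$ evaluated on the chosen ansatz, and is where the precise interplay between $\chi(\mc M)$, $\mc Q(\mc M)$, and the key/shield entropies of $\mc N$ must be tracked carefully to pin down the final sufficient condition.
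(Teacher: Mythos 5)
Your high-level intuition—that $\mc M$ serves to transmit information that helps ``untwist'' the shield of $\mc N$—is the right heuristic, but the precise quantities you propose to track are not the ones that work, and your envisioned lower bound cannot be correct.

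The paper's proof of Theorem~\ref{thm:amplification} uses a much simpler ansatz than your classical-quantum construction: it just feeds the half of a maximally entangled state $\ket{\psi}^{a_0a_0'}\otimes\ket{\Psi}^{A_0A_0'}$ into $\mc N_{q,d}\otimes\mc M$ (with $\mc M$ acting on $A_0$). There is no extra classical register $X$ and no ensemble approaching the Holevo capacity of $\mc M$. Because the private state has the flagged form
$\gamma_{q,d}=q\,\ketbra{\psi_+}{\psi_+}^{a_0b_0}\otimes\frac{P_\sym}{d_\sym}+(1-q)\,\ketbra{\psi_-}{\psi_-}^{a_0b_0}\otimes\frac{P_\asym}{d_\asym}$,
the output after applying $\mc M$ to $A_0$ remains block-structured along $\{\psi_+,\psi_-\}$, and a one-line entropy computation gives
$I_c = 1 - h(q) + H(\mc J_\mc M^{\sym}, \mc J_\mc M^{\asym}; q, 1-q)$.
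Crucially, the Holevo information appearing here is that of the \emph{fixed two-element ensemble} $\{(q,\mc J_\mc M^{\sym}),(1-q,\mc J_\mc M^{\asym})\}$ determined by the shield's symmetric/antisymmetric projectors; it is bounded above by $h(q)\le 1$, and by construction you have no freedom to replace it with an ensemble approaching the Holevo capacity $\chi(\mc M)$.

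This is where your plan breaks. Your target inequality $I_c \ge \log|a_0| + \chi(\mc M) - \delta(\mc N)$ would, for a noiseless $\mc M$ of large output dimension, give $I_c$ well above $1$ bit; but the key register $a_0$ of $\mc N_{q,d}$ is a single qubit, and the construction cannot extract more than $1$ bit of coherent information from the joint use because $\mc M$ only assists by distinguishing two shield states, not an arbitrary Holevo-optimal ensemble. The paper's erasure-channel example makes this explicit: for $\mc M = \mc E_{\lambda,d}$ the Holevo term evaluates to $(1-\lambda)h(q)$, which restores at most $h(q)$ bits rather than adding $\chi(\mc E_{\lambda,d})$. So the gap is not mere bookkeeping in step two: you have misidentified which Holevo-type quantity is relevant, and the remedy is to drop the classical register $X$, feed the maximally entangled state directly, and read the Holevo information off the shield-induced ensemble rather than off $\mc M$'s Holevo capacity.
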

We illustrate this condition for erasure and depolarizing channels, leveraging recent upper bounds on quantum capacity (e.g., \cite{Fanizza_2020, Zhu_2025}) to obtain explicit, dimension- and parameter-dependent amplification thresholds.

\paragraph{Gap between private capacity and quantum capacity}
In this section, we address a fundamental question: how large can the gap between a channel’s private and quantum capacities be? Since coherent transmission and privacy are closely related, it was initially conjectured that the two capacities would coincide. However, Horodecki et al.~\cite{Horodecki_2005} and Smith-Yard~\cite{Smith_2008} showed that certain channels are too noisy to transmit quantum information yet can still support private communication, although they did not quantify how large this gap could be. Leung et al.~\cite{Leung_2014} later demonstrated that the gap can grow with the input dimension, while the quantum capacity remains bounded below by a constant, see also~\cite{Smith_2009}. Building on this line of work, we construct an even stronger example where the quantum capacity vanishes while the private capacity diverges, providing a sharper manifestation of their fundamental separation.
\begin{theorem}[Section~\ref{subsec:construction}]
There exist a sequence of channels $\{\mc M_n=\mc M_n^{A_n\to B_n}\}_{n\ge 1}$ such that
\[
    \mc Q(\mc M_n)=\frac{1}{n}\to 0,\qquad \mc P(\mc M_n)=n\to \infty,
\]
where $\mc P(\cdot)$ is the private capacity.
\end{theorem}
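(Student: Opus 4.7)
The plan is to exploit the single-letter quantum capacity formula established in the paper's second contribution (the Spin-alignment-conditional result) together with the intrinsic private capacity lower bound from the private state structure, organizing the construction as a one-parameter family in which both capacities can be controlled simultaneously. Concretely, I would take $\mc M_n$ to be the private channel in that family whose \Choi operator is a private state $\gamma_n^{a_0b_0A_0B_0}$ with key subsystem of dimension $d_{a_0}=d_{b_0}=2^n$ and with shield parameters left as free tuning knobs.

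For the private capacity, the lower bound $\mc P(\mc M_n)\ge n$ follows from the Horodecki–Horodecki–Horodecki–Oppenheim private-state argument: the Choi state is a perfect private state with $n$ key bits, so $n$ bits of secret key per channel use are achievable by sending half of $\gamma_n$ through $\mc M_n$ and distilling the key subsystem. The matching upper bound $\mc P(\mc M_n)\le n$ would come from the explicit structure of the construction, either by observing that the relative entropy of entanglement of $\gamma_n$ equals $n$ (so the regularized private capacity is bounded by $n$ via the standard entanglement-measure sandwich), or by a dimension bound on the output key register.

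For the quantum capacity, I would invoke the single-letter formula from the preceding theorem to express $\mc Q(\mc M_n)$ as an explicit, continuous function of the shield's free parameter (for instance, the noise parameter of the twisting unitary or a mixing weight in the shield). This function vanishes at one endpoint and is strictly positive elsewhere, so by continuity and monotonicity one can pick the shield parameter $\epsilon_n$ making $\mc Q(\mc M_n)=1/n$ exactly. Combining the two steps yields the advertised sequence with $\mc Q(\mc M_n)=1/n\to 0$ and $\mc P(\mc M_n)=n\to\infty$.

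The main obstacle is that both capacities depend on the same shield data, so tuning $\epsilon_n$ to hit $1/n$ must not degrade the perfect-private-state property that pins $\mc P(\mc M_n)$ at $n$. I would address this by choosing the family so that the key subsystem remains exactly decoupled from the environment for every admissible shield parameter (only the shield degrees of freedom are modified), which keeps the private capacity at $n$ independently of $\epsilon_n$ while the Spin-alignment-based single-letter formula ranges over a full interval of quantum capacity values containing $1/n$. A secondary technical point is verifying applicability of the single-letter formula on this entire parameter range, i.e., checking that the family stays inside the class covered by the theorem; this reduces to a direct structural check on the Choi operator and should not require further conjectural input beyond the Spin alignment assumption already in force.
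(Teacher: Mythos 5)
Your proposal diverges substantially from the paper's construction, and the deviations introduce genuine gaps. The paper does \emph{not} build a single pdit channel with key dimension $2^n$ and tune a continuous shield parameter; it takes $\mc M_n := \mc N^{\otimes n}$, the $n$-fold tensor power of the single pbit channel $\mc N = \mc N_{\frac{d+1}{2d},d}$ (key dimension $2$, shield dimension $d = n^2$). The SAC-based result (Theorem~\ref{main:quantum capacity}) is not a single-letter formula for an arbitrary private channel: it is the weak-additivity statement $\mc Q^{(1)}(\mc N^{\otimes n}) = n\,\mc Q^{(1)}(\mc N) = n/d$ for that specific pbit channel at the isolated parameter value $q = \frac{d+1}{2d}$. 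Your "direct structural check on the Choi operator" to extend this to a key-dimension-$2^n$ pdit family with a free continuous shield parameter is not a routine verification; it would require redoing the entire SAC reduction for a different channel, and the paper explicitly does not establish the single-letter property away from $q = \frac{d+1}{2d}$. Consequently, your intermediate-value argument has no theorem behind it: the only parameter along which $\mc Q$ is known in closed form is the discrete shield dimension $d$, and the paper simply chooses $d=n^2$ to realize $1/n$.

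The private capacity side has the same issue. Your lower bound $\mc P(\mc M_n)\geq n$ from the perfect private-state structure is fine, but your suggested upper bounds do not close the argument. A dimension bound on the key register ignores the shield and the key is only part of the output, so it cannot by itself cap $\mc P$ at $n$ (the output dimension is $2^n d^2$, not $2^n$). The relative-entropy-of-entanglement route requires the channel to be simulable from its Choi state (e.g., teleportation covariance), and even then one must separately prove that $E_R(\gamma_n)=n$ \emph{exactly}, i.e.\ that the shield contributes nothing; that is precisely the non-trivial content you are trying to establish, so the argument as stated is circular. The paper avoids all of this by bounding $\mc P(\mc N)\le \mc C(\mc N)\le\log\beta(\mc N)$ via the Wang--Fang--Duan SDP and exhibiting a feasible point giving $\beta(\mc N)\le 2$; since $\log\beta$ is additive, this propagates to $\mc P(\mc N^{\otimes n})\le n$ and pins $\mc P(\mc M_n)=n$. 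You would need either this SDP argument or a concrete substitute for it; neither REE nor a dimension count is one in the generality you invoke them.
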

The construction of this example relies on deriving a single-letter formula for the quantum capacity of private channels. Owing to their structure, information encoded in the shield leaks to the environment, leaving only the private subsystem to carry quantum information. By formalizing this intuition through the \emph{Spin Alignment Conjecture} (SAC)~\cite{leditzky2023platypus}, we obtain a single-letter expression for the quantum capacity of a family of private channels that are neither degradable, anti-degradable, nor PPT.


\paragraph{Approximate private channels and their applications}
Finally, we develop a robust, private-state–based mechanism for capacity amplification. First, we prove quantitative \emph{amplification} bounds for $\varepsilon$–approximate private channels when paired with a quantum erasure channel (Proposition~\ref{prop:amplification 2}), turning approximate privacy into positive coherent information in a controlled regime and yielding explicit assisted lower bounds on quantum capacity. This in particular provides an alternative proof of the superactivation effect \cite{Smith_2008}. Second, combining these bounds with continuity estimates, we obtain a \emph{metric} separation: channels exhibiting  capacity amplification effect are at nonzero diamond distance from the set of anti-degradable channels, clarifying the approximate-(anti)degradability bound for quantum capacities provided in Ref.~\cite{Sutter_2017} is loose for this class of channels.  Third, we revisit the construction in Ref.~\cite{Cubitt_2015} concerning the fundamental question of computability of the regularized quantum capacity. In addition to some further simplification of their construction, we also notice an important fact leading to the remark below:

\medskip
\noindent\textit{Computability remark.}
The constructions in Ref.~\cite{Sutter_2017}, while insightful, do \emph{not} resolve whether the regularized quantum capacity
$\mc Q(\mc N)=\lim_{k\to\infty}\frac{1}{k}\,\mc Q^{(1)}\!\big(\mc N^{\otimes k}\big)$
is computable. By “computable” we mean that for any channel $\mc N$, there exists a finite $N\ge 1$ such that
$\mc Q(\mc N)=\frac{1}{N}\,\mc Q^{(1)}\!\big(\mc N^{\otimes N}\big)$.
Neither our construction nor Ref.~\cite{Cubitt_2015} rules out this possibility. In fact, Ref.~\cite{Cubitt_2015} shows that for each fixed $N\ge1$ there exists a (dependently constructed) channel $\mc N_N$ with strict superadditivity at $N$:
\[
\mc Q(\mc N_N)>\frac{1}{N}\,\mc Q^{(1)}\!\big(\mc N_N^{\otimes N}\big).
\]
However, for such $\mc N_N$ it remains possible that some larger $K>N$, we have 
\[
\mc Q(\mc N_N)=\frac{1}{K}\,\mc Q^{(1)}\!\big(\mc N_N^{\otimes K}\big).
\]

\subsection{Organization of This Paper}
\begin{itemize}
\item \textbf{Section~\ref{sec:prelim}.}
We review channel notation and representations (Kraus, Choi), coherent information and its basic properties (data processing, direct-sum, and flagged-channel rules), and the structure of private states and perfect pbits, including the coherent-information lemmas used throughout.

\item \textbf{Section~\ref{sec:amp}.}
We develop a general framework for \emph{quantum capacity amplification}. Building on private-state induced channels, we prove quantitative amplification bounds (e.g., Proposition~\ref{prop:amplification 2}), give explicit \emph{amplification} and \emph{super-amplification} examples, and illustrate the conditions for erasure and depolarizing channels.

\item \textbf{Section~\ref{sec:single letter}.}
Assuming the Spin Alignment Conjecture (SAC) \cite{leditzky2023platypus}, we show a single-letter formula for the quantum capacity of private channels in a specific regime, even though the channels are neither degradable, anti-degradable, nor PPT. As an application, we construct a family of channels exhibiting a vanishing quantum capacity alongside an unbounded private capacity.

\item \textbf{Section~\ref{sec:approximate}.}
We analyze approximate private channels. First, we obtain a metric separation from the anti-degradable set (diamond-norm lower bounds) via the amplification effect explored in our work and continuity estimates. Second, we give an alternative superactivation proof in the approximate setting (extending the Smith–Yard phenomenon \cite{Smith_2008}). Third, we revisit the construction in Ref.~\cite{Cubitt_2015} concerning the computability of quantum capacity. We clarify that these constructions do \emph{not} resolve computability of the regularized quantum capacity.
\end{itemize}
\section{Preliminary}\label{sec:prelim}
\textbf{Notation.}
\begin{itemize}
    \item Capital letters $A,B,C,E$ (``Alice, Bob, Charlie, Eve'') denote finite-dimensional Hilbert spaces with dimensions $d_A,d_B,\ldots$.
    \item $\mc B(A,B)$ is the space of linear operators from $A$ to $B$; we write $\mc B(A):=\mc B(A,A)$. The identity operator on $A$ is $\mb I^A$; the identity superoperator on $\mc B(A)$ is $id^{A\to A}$ (often just $id$).
    \item States (density operators) on $AB$ are denoted $\rho^{AB}$. Superscripts indicate the subsystems on which an operator acts nontrivially.
    \item Linear maps (superoperators) $\mc N^{A\to B}:\mc B(A)\to\mc B(B)$ denote quantum channels if they are completely positive and trace-preserving (CPTP).
    \item The trace $\tr$ is taken over the indicated subsystem, e.g.\ $\tr_E$; the partial transpose on $B$ is $(\cdot)^{T_B}$ with respect to a fixed computational basis.
    \item We use $\|\cdot\|_1$ for trace norm and $\|\cdot\|_\diamond$ for the diamond norm.
\end{itemize}

\subsection{Quantum channels and their representations}

\paragraph{Stinespring dilation and complementary channels.}
Let $A,B,E$ be finite-dimensional Hilbert spaces. An isometry $V:A\to B\otimes E$ (so $V^\dagger V=\mb I^A$) induces a pair of CPTP maps
\begin{equation}\label{eq:stinespring}
    \mc N^{A\to B}(\rho) = \Tr_E  \bigl[V\rho V^\dagger\bigr],
    \qquad
    (\mc N^c)^{A\to E}(\rho) = \Tr_B  \bigl[V\rho V^\dagger\bigr],
\end{equation}
called \emph{complementary channels}. Every channel admits such a dilation, unique up to local unitaries.

\paragraph{Kraus representation.}
A channel $\mc N^{A\to B}$ admits an operator-sum (Kraus) decomposition
\begin{equation}\label{eq:Kraus}
    \mc N(X) = \sum_{i=1}^m K_i X K_i^\dagger,\qquad K_i\in \mc B(A,B),\qquad \sum_{i=1}^m K_i^\dagger K_i=\mb I^A.
\end{equation}
The minimal number $m$ of Kraus operators equals the rank of the Choi operator (below).

\paragraph{Choi--Jamio\l{}kowski representation.}
Fix an orthonormal basis $\{\ket{i}\}_{i=0}^{d_A-1}$ of $A$ and let
\begin{equation}\label{eq:maxent}
    \ket{\Psi}^{AA'} := \frac{1}{\sqrt{d_A}}\sum_{i=0}^{d_A-1}\ket{i}_A\otimes\ket{i}_{A'},
\end{equation}
where $A'\cong A$. The (normalized) \Choi operator of $\mc N^{A'\to B}$ is
\begin{equation}\label{eq:CJ}
    J_{\mc N}^{AB} := (id_A\otimes \mc N^{A'\to B})  \bigl(\ket{\Psi}\bra{\Psi}^{AA'}\bigr) \in \mc B(A\otimes B).
\end{equation}
Equivalently, in the chosen basis,
\begin{equation}\label{eq:CJ-matrix}
    J_{\mc N} = \frac{1}{d_A}\sum_{i,j=0}^{d_A-1} \ketbra{i}{j}_A \otimes \mc N(\ketbra{i}{j}_{A'}).
\end{equation}
Well-known equivalences:
\begin{equation}\label{eq:CP-TP-CJ}
    \mc N \text{ CP } \Longleftrightarrow J_{\mc N}\ge 0,
    \qquad
    \mc N \text{ TP } \Longleftrightarrow \Tr_B(J_{\mc N})=\frac{\mb I^A}{d_A}.
\end{equation}
(Some authors use the unnormalized \Choi operator $\mc J_{\mc N}:=d_A J_{\mc N}$, which then satisfies $\Tr_B(\mc J_{\mc N})=\mb I^A$.)

\paragraph{Reconstruction and reshuffling identities.}
The action of $\mc N$ can be recovered from its \Choi operator via
\begin{equation}\label{eq:reconstruct}
    \mc N(X) =  d_A\,\Tr_A  \bigl[\,J_{\mc N}\,(X^{T}\otimes \mb I^B)\,\bigr],
\end{equation}
where ${}^{T}$ denotes matrix transpose in the basis of \eqref{eq:maxent}.  
Conversely, if $J\in\mc B(A\otimes B)$ satisfies $J\ge 0$ and $\Tr_B(J)=\mb I^A/d_A$, then \eqref{eq:reconstruct} defines a unique CPTP map with \Choi operator $J$.

\paragraph{From \Choi to Kraus (one convenient choice).}
Let the spectral decomposition be $J_{\mc N}=\sum_{k=1}^r \lambda_k \ketbra{\psi_k}{\psi_k}^{AB}$ with $\lambda_k>0$ and $\ket{\psi_k}\in A\otimes B$. Write
$\ket{\psi_k}=\sum_{i,j} c^k_{ij}\,\ket{i}_A\otimes \ket{j}_B$ and set
\begin{equation}\label{eq:CJ-to-Kraus}
    K_k = \sqrt{d_A\,\lambda_k} \sum_{i,j} c^k_{ij}\,\ket{j}_B\bra{i}_A  \in  \mc B(A,B).
\end{equation}
Then $\mc N(X)=\sum_{k=1}^r K_k X K_k^\dagger$ and $\sum_k K_k^\dagger K_k=\mb I^A$. (If you use the unnormalized \Choi operator $\mc J_{\mc N}=d_A J_{\mc N}$, drop the factor $d_A$ in \eqref{eq:CJ-to-Kraus}.)

\paragraph{Norms and distances.}
For later use we recall the diamond norm
\begin{equation}\label{eq:diamond}
    \|\mc N-\mc M\|_\diamond := \sup_{X\neq 0} 
    \frac{\|(\mc N\otimes   id_R)(X)-(\mc M\otimes   id_R)(X)\|_1}{\|X\|_1},
\end{equation}
where $R$ is any system with $\dim R\ge d_A$; the supremum is attained on a purification. Contractivity under CPTP post-processing implies
\begin{equation}\label{eq:diamond-contract}
    \|\mc N\otimes \mc E-\mc M\otimes \mc E\|_\diamond  \le  \|\mc N-\mc M\|_\diamond
    \qquad \text{for every CPTP }\mc E.
\end{equation}
The \Choi–diamond relation is $\|\mc N-\mc M\|_\diamond  \le  d_A\,\|J_{\mc N}-J_{\mc M}\|_1$.

\subsection{Capacities of quantum channels and their properties}
Suppose a complementary pair $(\mc N,\mc N^c)$ is generated by an isometry
$V_{\mc N}:A\to B\otimes E$ as in \eqref{eq:stinespring}. The \emph{quantum capacity} $\mc Q(\mc N)$
is the supremum of all achievable rates for reliable quantum information transmission through $\mc N$.
By the Lloyd–Shor–Devetak (LSD) theorem~\cite{lloyd1997capacity, shor2002, devetak2005private},
the coherent information is an achievable rate.

For any input state $\rho^A\in \mc B(A)$, let $\ket{\psi}^{RA}$ be a purification on $R\otimes A$,
and define $\ket{\Psi}^{RBE}:=(\mb I^R\otimes V_{\mc N})\ket{\psi}^{RA}$.
Let $\rho^{RB}:=\Tr_E\ketbra{\Psi}{\Psi}^{RBE}$ and $\rho^{E}:=\Tr_{RB}\ketbra{\Psi}{\Psi}^{RBE}$.
The coherent information is
\begin{equation}\label{def:coherent information}
\begin{aligned}
    I_c(\rho^A,\mc N)
    &:= I(R\rangle B)_{\rho^{RB}}
      =  S(\rho^B)-S(\rho^{RB})
      =  S(\rho^B)-S(\rho^{E}),
\end{aligned}
\end{equation}
where $S(\cdot)$ is the von Neumann entropy. Different choices of purification yield the same value.
For brevity we often write $S(\rho^B)$ as $S(B)$. The one-shot coherent information is
\begin{equation}\label{def:max coherent information}
    \mc Q^{(1)}(\mc N) = \max_{\rho^A} I_c(\rho^A,\mc N),
\end{equation}
and the LSD theorem gives the regularized capacity
\[
\mc Q(\mc N) = \lim_{n\to\infty}\frac{1}{n}\,\mc Q^{(1)}(\mc N^{\otimes n}).
\]

The \emph{private information} of $\mc N$ for an ensemble $\{p_x,\rho_x^A\}$ is
\[
    I_p(\{p_x,\rho_x^A\},\mc N) :=  I(\mc X;B) - I(\mc X;E),
\qquad
    \mc P^{(1)}(\mc N) := \sup_{\{p_x,\rho_x^A\}} I_p(\{p_x,\rho_x^A\},\mc N).
\]
The private (classical) capacity $\mc P(\mc N)$ is given by \cite{devetak2005private}
\[
\mc P(\mc N) = \lim_{n\to\infty}\frac{1}{n}\,\mc P^{(1)}(\mc N^{\otimes n}).
\]

\paragraph{Direct sum of channels and flagged channels}
Let $\{\Phi_k^{A_k\to B_k}\}_{k=1}^n$ be channels, the \emph{direct sum} channel
\[
  \bigoplus_{k=1}^n \Phi_k^{A_k\to B_k}: 
  \mc B \Bigl(\bigoplus_{k=1}^n A_k\Bigr)\longrightarrow \mc B \Bigl(\bigoplus_{k=1}^n B_k\Bigr)
\]
acts block-diagonally (off-diagonal blocks are sent to $0$; see~\cite{Fukuda_2007}, and~\cite{Chessa_2021,wuzhen2025_3} for generalizations).
Explicitly, for $X=\sum_{k,l} \ket{k}\bra{l} \otimes X_{k\ell}$,
\[
  \Bigl(\,\bigoplus_{k=1}^n \Phi_k^{A_k\to B_k}\Bigr)(X)
   = 
  \begin{pmatrix}
    \Phi_1(X_{11}) & 0 & \cdots & 0 \\
    0 & \Phi_2(X_{22}) & \cdots & 0 \\
    \vdots & \vdots & \ddots & \vdots \\
    0 & 0 & \cdots & \Phi_n(X_{nn})
  \end{pmatrix}.
\]
Given a classical flag register $F$ with basis $\{\ket{i}^F\}_{i=0}^{d_F-1}$,
a channel $\mc N^{A\to FB}$ is \emph{flagged} if
\[
    \mc N^{A\to FB} = \sum_{i=0}^{d_F-1} p_i\,\ketbra{i}{i}^F\otimes \mc N_i^{A\to B},
    \qquad p_i\ge 0, \sum_i p_i=1.
\]
A canonical example is the (binary) \emph{erasure channel} with parameter $\lambda\in[0,1]$:
\begin{equation}\label{eqn:erasure channel}
    \mc E_{\lambda}^{A\to FA}
     = 
    (1-\lambda)\,\ketbra{0}{0}^F\otimes id^{A\to A}
     + 
    \lambda\,\ketbra{1}{1}^F\otimes \mc E_{1}^{A\to A},
\end{equation}
where $\mc E_{1}$ maps every input to a fixed state on $A$.

\paragraph{Coherent information of direct-sum and flagged channels}
Recall the definition \eqref{def:coherent information} and the one-shot quantity \eqref{def:max coherent information}. We also use the shorthand
\begin{equation}\label{def:Q^1}
    \mc Q^{(1)}(\mc N^{A\to B})  =  \max_{\rho^A} I_c(\rho^A,\mc N^{A\to B}),
\end{equation}
which is consistent with \eqref{def:max coherent information}.
The following properties are standard.

\begin{lemma}\label{lemma:basic}
    Let $\mc N=\mc N_0\oplus \mc N_1$ and $n\ge 1$. Then
    \begin{equation}\label{coh:switch}
        \mc Q^{(1)}(\mc N^{\otimes n})
         = 
        \max_{0\le \ell\le n} 
        \mc Q^{(1)} \bigl(\mc N_0^{\otimes \ell}\otimes \mc N_1^{\otimes (n-\ell)}\bigr).
    \end{equation}
    If $\mc N^{A\to FB}=\sum_{i=0}^{d_F-1} p_i\,\ketbra{i}{i}^F\otimes \mc N_i^{A\to B}$ is a flagged channel, then for every input $\rho^A$,
    \begin{equation}\label{coh:flag}
        I_c\bigl(\mc N^{A\to FB},\rho^A\bigr)
         = 
        \sum_{i=0}^{d_F-1} p_i\, I_c\bigl(\mc N_i^{A\to B},\rho^A\bigr).
    \end{equation}
\end{lemma}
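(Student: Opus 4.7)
The plan is to prove the flagged-channel identity first (part (2)) and then bootstrap to the direct-sum formula (part (1)), reducing everything to the elementary fact that a classical-quantum state $\sum_i p_i|i\rangle\langle i|\otimes \sigma_i$ has entropy $H(p)+\sum_i p_i S(\sigma_i)$.

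For part (2), I pick a Stinespring dilation of the flagged channel of the form
\[ V^{A\to FBF'E} = \sum_i \sqrt{p_i}\,|i\rangle^F|i\rangle^{F'} V_i^{A\to BE_i}, \]
where $V_i$ dilates $\mc N_i$ and $F,F'$ are classical flags routed to Bob and Eve respectively. For any purification $|\psi\rangle^{RA}$ of $\rho$, Bob's reduced state on $FB$ and Eve's on $F'E$ are both classical-quantum, so their entropies each split as $H(p)$ plus the average of $S(\mc N_i(\rho))$ and $S(\mc N_i^c(\rho))$ respectively. The two $H(p)$ contributions cancel in $I_c(\rho,\mc N)=S(FB)-S(F'E)$, leaving exactly $\sum_i p_i I_c(\rho,\mc N_i)$.

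For part (1), I first establish the single-copy version $\mc Q^{(1)}(\mc N_0\oplus \mc N_1)=\max_k \mc Q^{(1)}(\mc N_k)$. Writing any purification as $|\psi\rangle^{RA}=\sum_k |\chi_k\rangle$ with $|\chi_k\rangle \in R\otimes A_k$, the Kraus operators of $\mc N_0\oplus\mc N_1$ factor through the block projectors $P_k$, so cross-block contributions $(\mc N\otimes id_R)(|\chi_k\rangle\langle \chi_l|)$ with $k\neq l$ vanish; consequently both $\mc N(\rho)$ and $(\mc N\otimes id_R)(|\psi\rangle\langle \psi|)$ are block-diagonal in the block index. The same classical-quantum entropy split then gives $I_c(\rho,\mc N)=\sum_k p_k I_c(\sigma_k,\mc N_k)$ with $p_k=\tr(P_k\rho P_k)$ and $\sigma_k=P_k\rho P_k/p_k$; since $\mc Q^{(1)}(\mc N_k)\ge 0$ (pure-state inputs yield a globally pure state with $S(B)=S(E)$), the maximum puts all weight on the better block. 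For $n$ copies, distributing tensor over direct sum gives $(\mc N_0\oplus\mc N_1)^{\otimes n} \cong \bigoplus_{\vec i \in \{0,1\}^n}\bigotimes_j \mc N_{i_j}$, so the single-copy direct-sum result reduces $\mc Q^{(1)}(\mc N^{\otimes n})$ to $\max_{\vec i}\mc Q^{(1)}(\bigotimes_j \mc N_{i_j})$, which by permutation-invariance of $\mc Q^{(1)}$ under reordering of tensor factors collapses to a maximum over $0\le \ell\le n$.

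The main place to be careful is verifying that the cross-block terms $(\mc N\otimes id_R)(|\chi_k\rangle\langle\chi_l|)$ genuinely vanish when $k\neq l$: this is the step that uses the precise block structure of the direct-sum Kraus operators (rather than only the block-diagonal form of $\mc N(\rho)$ itself) and is what allows the environment-side state to split cleanly into a classical-quantum form. Everything else is routine bookkeeping with Shannon–von Neumann entropy.
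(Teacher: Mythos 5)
Your proof is correct, and it follows essentially the same route as the paper's: both parts ultimately reduce to the entropy decomposition $S\bigl(\sum_i p_i\tau_i\bigr)=H(\{p_i\})+\sum_i p_i S(\tau_i)$ for mixtures of orthogonally supported states, applied simultaneously to Bob's and Eve's reduced states so that the $H(\{p_i\})$ terms cancel, followed by distributing the tensor product over the direct sum and invoking permutation invariance of $\mathcal{Q}^{(1)}$. The only substantive difference is one of self-containment: where the paper cites~\cite[Prop.~1]{Fukuda_2007} for the single-copy identity $\mathcal{Q}^{(1)}(\mathcal{N}_0\oplus\mathcal{N}_1)=\max_k\mathcal{Q}^{(1)}(\mathcal{N}_k)$, you reprove it, carefully tracking through the Stinespring dilation $V_0\oplus V_1$ to show that the cross-block terms $(\mathcal{N}\otimes id_R)(\ket{\chi_k}\bra{\chi_l})$ vanish for $k\neq l$ (because $E_k\perp E_l$ on the environment side and $B_k\perp B_l$ on the output side), which gives the block-orthogonality needed for the entropy split on both the $RB$ and $RE$ marginals; you then correctly note that $\mathcal{Q}^{(1)}(\mathcal{N}_k)\ge 0$ lets you push all input weight onto the best block. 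That extra detail is the step the paper delegates to Fukuda, and it is the one you rightly flagged as the place requiring care; everything else matches the paper's bookkeeping.
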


\begin{proof}
For the direct sum, expand $(\mc N_0\oplus \mc N_1)^{\otimes n}$ over bit-strings $\mathbf b\in\{0,1\}^n$ to get
$\bigoplus_{\mathbf b}\bigotimes_{j=1}^n \mc N_{b_j}$. Since coherent information of a direct sum equals
the maximum over the summands~\cite[Prop.~1]{Fukuda_2007}, and the order of tensor factors is irrelevant,
only the Hamming weight $\ell$ of $\mathbf b$ matters, giving \eqref{coh:switch}.
For the flagged channel, the output is a classical-quantum mixture with orthogonal flags, then using the entropy formula for probabilistic mixture of orthogonal states: 
\begin{align*}
      S(\sum_{i} p_i \tau_i) = H(\{p_i\}) + \sum_{i} p_i S(\tau_i),
\end{align*}
where $\{p_i\}$ is a probability distribution and $\{\tau_i\}$ is a set of orthogonal states, the coherent information is additive under flag mixture of channels. 
\end{proof}

\paragraph{Data processing for coherent information.}
For any bipartite state $\rho^{AB}$ and channel $\mc P^{B\to C}$,
\begin{equation}\label{DPI:coherent}
    I(A\rangle B)_{\rho^{AB}}
     \ge 
    I(A\rangle C)_{(id_A\otimes \mc P)(\rho^{AB})},
\end{equation}
see~\cite[Thm.~11.9.3]{wilde2011classical}. As immediate corollaries,
for channels $\mc N_1^{A\to B}$ and $\mc N_2^{B\to C}$ and any input $\rho^A$,
\begin{equation}\label{DPI:channel}
    I_c(\mc N_2 \circ \mc N_1,\rho^A)
     \le 
    \min\Bigl\{\,I_c(\mc N_2,\mc N_1(\rho^A)),\ I_c(\mc N_1,\rho^A)\Bigr\},
\end{equation}
and for a tripartite state $\rho^{AB_1B_2}$,
\begin{equation}\label{DPI:remove}
    I(A\rangle B_1B_2)_{\rho^{AB_1B_2}}
     \ge 
    I(A\rangle B_1)_{\rho^{AB_1}}.
\end{equation}
An equality case in \eqref{DPI:remove} is useful. Let $\mc E_1^{B_2\to B_2}$ be a \emph{replacement} (complete erasure) channel,
\[
\mc E_1(X)\;=\;\Tr(X)\,\sigma^{B_2},
\]
for some fixed state $\sigma^{B_2}$. Then for any state $\rho^{A B_1 B_2}$,
\begin{equation}\label{DPI:equality}
    I(A\rangle B_1 B_2)_{(id^{A B_1}\otimes \mc E_1^{B_2\to B_2})(\rho^{A B_1 B_2})}
    \;=\;
    I(A\rangle B_1)_{\rho^{A B_1}}.
\end{equation}

\subsection{Private states}
We consider a four-party mixed state $\gamma^{a_0b_0A_0B_0}$ with $\mathrm{dim}\,a_0=\mathrm{dim}\,b_0=d_0$ and $\mathrm{dim}\,A_0=\mathrm{dim}\,B_0=d$.
\begin{itemize}
    \item $a_0,A_0$ belong to Alice and $b_0,B_0$ belong to Bob. We denote
    \[
        A:=a_0A_0,\quad B:=b_0B_0.
    \]
    \item The subsystem $a_0b_0$ is called the \textit{key system}.
    \item The subsystem $A_0B_0$ is called the \textit{shield system}.
\end{itemize}
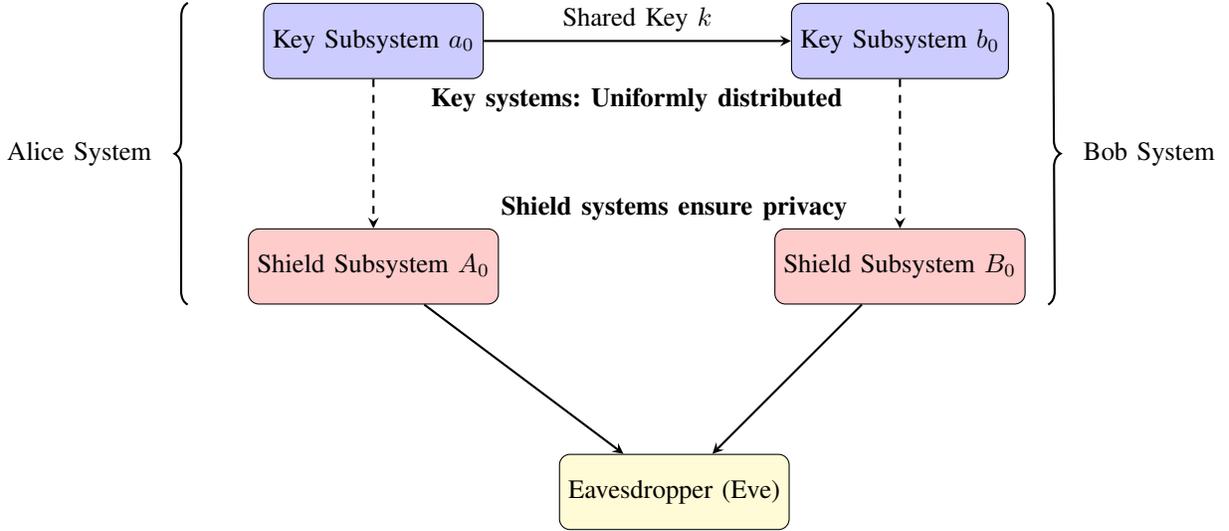
\begin{figure}[ht]
\begin{center}
\begin{tikzpicture}[node distance=3cm]

\node (alice_key) [block] {Key Subsystem $a_0$};
\node (alice_shield) [shield, below of=alice_key] {Shield Subsystem $A_0$};

\node (bob_key) [block, right of=alice_key, xshift=4cm] {Key Subsystem $b_0$};
\node (bob_shield) [shield, below of=bob_key] {Shield Subsystem $B_0$};

\node (eve) [block, below of=alice_shield, xshift=4cm, fill=yellow!20] {Eavesdropper (Eve)};

\draw [arrow] (alice_key) -- (bob_key) node[midway, above] {Shared Key $k$};
\draw [arrow] (alice_shield) -- (eve) node[midway, right] {};
\draw [arrow] (bob_shield) -- (eve) node[midway, right] {};

\draw [arrow, dashed] (alice_key) -- (alice_shield) node[midway, left] {};
\draw [arrow, dashed] (bob_key) -- (bob_shield) node[midway, right] {};

\node at (4,-1.95) [below] {\textbf{Shield systems ensure privacy}};
\node at (3.5,-0.5) [below] {\textbf{Key systems: Uniformly distributed}};

\draw [decorate, decoration={brace, amplitude=5pt, mirror}, thick] 
  ([xshift=-1cm]alice_key.north west) -- ([xshift=-0.8cm]alice_shield.south west) 
  node[midway, left=10pt] {Alice System};

\draw [decorate, decoration={brace, amplitude=5pt}, thick] 
  ([xshift=0.5cm]bob_key.north east) -- ([xshift=0.3cm]bob_shield.south east) 
  node[midway, right=10pt] {Bob System};
\end{tikzpicture}
\end{center}
\caption{Illustration of a perfect private state showing the key and shield subsystems for Alice and Bob. The shared key $a_0b_0$ is protected by shield subsystems \( A_0 \) and \( B_0 \), ensuring privacy against an eavesdropper (Eve).}
\label{fig:private-state}
\end{figure}

To formalize security, let $\ket{\gamma}^{ABE}=\ket{\gamma}^{a_0A_0b_0B_0E}$ be a purification of $\gamma^{a_0b_0A_0B_0}$. The system $E$ is the eavesdropper. Measure $\ket{\gamma}^{ABE}$ in the computational basis $\{\ket{ij}^{a_0b_0}:0\le i,j\le d_0-1\}$ on $a_0b_0$, followed by tracing out the shield $A_0B_0$. The resulting ccq state is
\begin{align}
 \widetilde\gamma^{a_0b_0E}
 &= \mathrm{tr}_{A_0B_0}\Bigl(\sum_{i,j=0}^{d_0-1} \ketbra{ij}{ij}^{a_0b_0} \otimes \bra{ij}^{a_0b_0} \ketbra{\gamma}{\gamma}^{a_0A_0b_0B_0E} \ket{ij}^{a_0b_0}\Bigr) \\
 &= \sum_{i,j=0}^{d_0-1} p_{ij}\, \ketbra{ij}{ij}^{a_0b_0} \otimes \rho_{ij}^E. \label{eqn:state after measurement}
\end{align}

\begin{definition}
We say that $\gamma^{a_0b_0A_0B_0}$ with key $a_0b_0$ and shield $A_0B_0$ is \emph{secure} if the state in \eqref{eqn:state after measurement} factorizes with $E$:
\[
\widetilde\gamma^{a_0b_0E}=\sum_{i,j=0}^{d_0-1} p_{ij}\, \ketbra{ij}{ij}^{a_0b_0} \otimes \rho^E.
\]
Moreover, $\gamma^{a_0b_0A_0B_0}$ is a \emph{perfect private state} if the ccq state has the form
\[
\widetilde\gamma^{a_0b_0E}=\sum_{i=0}^{d_0-1} \frac{1}{d_0}\, \ketbra{ii}{ii}^{a_0b_0} \otimes \rho^E.
\]
\end{definition}
\noindent To emphasize the key dimension, a perfect private state is called a \textit{pdit}; when $d_0=2$ it is a \textit{pbit}.

Now consider the quantum channel associated with a pbit. Let the maximally entangled state on $AA':=a_0A_0a_0'A_0'$ be
\[
\ket{\Psi}^{AA'}=\ket{\Psi}^{a_0A_0a_0'A_0'}=\frac{1}{\sqrt{d_0 d}}\sum_{i=0}^{d_0-1}\sum_{j=0}^{d-1} \ket{ij}^{a_0A_0}\otimes \ket{ij}^{a_0'A_0'}.
\]
There exists a channel $\mc N^{A'\to B}$ such that
\[
\gamma^{a_0b_0A_0B_0}=(id^{A\to A}\otimes \mc N^{A'\to B})(\Psi^{a_0A_0a_0'A_0'}).
\]
Realize $\mc N^{A'\to B}$ as an isometry $U_{\mc N}:A'\to BE$ we have $\ket{\gamma}^{ABE}=(id^{A\to A}\otimes U_{\mc N})(\ket{\Psi}^{AA'})$. 

We recall several features of perfect private bits~\cite{Horodecki_2005, Horodecki_2009}.
\begin{prop}
$\gamma^{a_0b_0A_0B_0}$ is a perfect private state with key $a_0b_0$ and shield $A_0B_0$ if it is of the form \cite{Horodecki_2005}
\begin{equation}\label{pbit:general form}
\gamma^{a_0b_0A_0B_0}=\frac{1}{d_0}\sum_{k,l=0}^{d_0-1} \ketbra{k}{l}^{a_0}\otimes \ketbra{k}{l}^{b_0}\otimes U_k^{A_0B_0}\,\sigma^{A_0B_0}\,(U_l^{A_0B_0})^\dagger
\end{equation}
for some mixed state $\sigma^{A_0B_0}$ and unitaries $U_k^{A_0B_0}$, $0\le k\le d_0-1$.
\end{prop}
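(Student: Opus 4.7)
The plan is to exhibit a convenient purification $\ket{\gamma}^{a_0b_0A_0B_0E}$ of the state in \eqref{pbit:general form}, perform the computational-basis measurement on the key register followed by the partial trace over the shield, and verify that the resulting ccq state has the perfect form.

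First, I would fix a spectral decomposition $\sigma^{A_0B_0}=\sum_\alpha p_\alpha \ketbra{\phi_\alpha}{\phi_\alpha}^{A_0B_0}$ and form the canonical purification
$\ket{\phi}^{A_0B_0E}:=\sum_\alpha \sqrt{p_\alpha}\,\ket{\phi_\alpha}^{A_0B_0}\ket{e_\alpha}^{E}$
on an auxiliary system $E$ with orthonormal basis $\{\ket{e_\alpha}^E\}$. The natural ansatz is
\begin{equation*}
\ket{\gamma}^{a_0b_0A_0B_0E}:=\frac{1}{\sqrt{d_0}}\sum_{k=0}^{d_0-1}\ket{k}^{a_0}\ket{k}^{b_0}\bigl(U_k^{A_0B_0}\otimes\mb I^E\bigr)\ket{\phi}^{A_0B_0E}.
\end{equation*}
Expanding $\ketbra{\gamma}{\gamma}$ and tracing out $E$ one gets $\tfrac{1}{d_0}\sum_{k,l}\ketbra{k}{l}^{a_0}\otimes\ketbra{k}{l}^{b_0}\otimes U_k\sigma U_l^\dagger$, which is exactly \eqref{pbit:general form}, so this is a valid purification.

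Second, I would project the key register onto $\{\ket{ij}^{a_0b_0}\}$. Since every branch of $\ket{\gamma}$ carries the same label on $a_0$ and $b_0$, only diagonal outcomes $(i,i)$ survive, each with probability $1/d_0$; the post-measurement state on $A_0B_0E$ in the $(i,i)$ branch is $\bigl(U_i^{A_0B_0}\otimes\mb I^E\bigr)\ketbra{\phi}{\phi}^{A_0B_0E}\bigl((U_i^{A_0B_0})^\dagger\otimes\mb I^E\bigr)$. Third — the decisive step — I would trace out the shield. Using the Schmidt structure of $\ket{\phi}$ together with $(U_i)^\dagger U_i=\mb I^{A_0B_0}$,
\begin{equation*}
\Tr_{A_0B_0}\!\Bigl[\bigl(U_i\otimes\mb I^E\bigr)\ketbra{\phi}{\phi}\bigl(U_i^\dagger\otimes\mb I^E\bigr)\Bigr]=\sum_\alpha p_\alpha \ketbra{e_\alpha}{e_\alpha}^{E}=:\rho^E,
\end{equation*}
which is independent of $i$ because $U_i$ acts only on the subsystem being traced and cancels under the cyclic property. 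Substituting back into \eqref{eqn:state after measurement} yields $\widetilde\gamma^{a_0b_0E}=\sum_{i=0}^{d_0-1}\tfrac{1}{d_0}\,\ketbra{ii}{ii}^{a_0b_0}\otimes\rho^E$, matching the perfect-private-state definition.

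There is no genuine obstacle: the crux is that any unitary supported entirely on the shield cancels inside $\Tr_{A_0B_0}$, so the $U_k$'s serve only to decouple the key outcomes from $E$ without affecting what Eve sees. Since any two purifications differ by a local isometry on $E$ that preserves the required factorized form, working with the explicit purification above loses no generality.
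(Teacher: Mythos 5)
The paper states this proposition without proof, citing Horodecki et al.\ \cite{Horodecki_2005}; there is therefore no in-text argument to compare against. Your proof is correct and reproduces the standard argument behind this characterization. The two essential observations are both present: first, the canonical ``twisted'' purification
\[
\ket{\gamma}^{a_0b_0A_0B_0E}=\frac{1}{\sqrt{d_0}}\sum_{k}\ket{k}^{a_0}\ket{k}^{b_0}\bigl(U_k^{A_0B_0}\otimes\mb I^E\bigr)\ket{\phi}^{A_0B_0E}
\]
is indeed a purification of \eqref{pbit:general form}, and second, any unitary supported solely on the shield $A_0B_0$ cancels under $\Tr_{A_0B_0}$, so the conditional state on $E$ is $\rho^E=\sum_\alpha p_\alpha\ketbra{e_\alpha}{e_\alpha}^E$ independently of the measurement outcome $i$. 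Your remark that any other purification differs from this one by an isometry on $E$, which maps $\rho^E$ to another fixed, outcome-independent state and hence preserves the factorized form, is exactly the right way to dispatch the remaining freedom in the definition. One minor point worth spelling out in a final write-up: you should verify that the off-diagonal outcomes $(i,j)$ with $i\neq j$ have probability zero, which follows immediately because every branch of $\ket{\gamma}$ carries matching labels on $a_0$ and $b_0$ — you state this, but it is the reason the ccq state is supported only on the diagonal, which is part of what ``perfect'' demands.
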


\begin{lemma}\label{pbit:basic}
Suppose $\gamma^{a_0b_0A_0B_0}$ is a perfect pbit. Then
\begin{align}
    & I(a_0\rangle b_0A_0B_0)_{\gamma^{a_0b_0A_0B_0}}=1, \label{pbit:equality 1}\\
    & I(a_0\rangle b_0)_{\gamma^{a_0b_0}}=1-h\Bigl(\frac{1+|c|}{2}\Bigr)\ge 0, \label{pbit:equality 2}
\end{align}
where
\begin{align}\label{pbit:quantity 1}
    h(x):=-x\log_2 x-(1-x)\log_2(1-x),\quad c:=\mathrm{tr}\bigl(U_0^{A_0B_0}\,\sigma^{A_0B_0}\,(U_1^{A_0B_0})^\dagger\bigr).
\end{align}
\end{lemma}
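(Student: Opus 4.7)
The strategy is to diagonalize $\sigma$ once and then read off all three reduced states of $\gamma^{a_0b_0A_0B_0}$ from a single orthogonal pure-state decomposition. Write $\sigma^{A_0B_0}=\sum_j p_j \ketbra{j}{j}$ and define
\[
    \ket{\Gamma_j}^{a_0b_0A_0B_0}
    := \frac{1}{\sqrt{2}}\sum_{k=0}^{1} \ket{kk}^{a_0b_0}\otimes U_k^{A_0B_0}\ket{j}^{A_0B_0}.
\]
A direct substitution into \eqref{pbit:general form} gives $\gamma^{a_0b_0A_0B_0}=\sum_j p_j \ketbra{\Gamma_j}{\Gamma_j}$, and unitarity of each $U_k$ together with orthogonality of the $\ket{j}$'s yields $\langle \Gamma_{j'}|\Gamma_j\rangle=\delta_{jj'}$. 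So the $\ket{\Gamma_j}$'s form an orthonormal family and $S(\gamma^{a_0b_0A_0B_0})=H(\{p_j\})=S(\sigma)$.

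For \eqref{pbit:equality 1}, I would trace out $a_0$ to get the classical–quantum state
\[
    \gamma^{b_0A_0B_0} = \frac{1}{2}\sum_{k=0}^{1} \ketbra{k}{k}^{b_0}\otimes U_k\sigma U_k^\dagger,
\]
since off-diagonal terms $\ket{00}\langle 11|$ vanish under $\tr_{a_0}$. Applying the mixture-of-orthogonal-supports entropy formula (already invoked in the flagged-channel lemma) and unitary invariance of von Neumann entropy yields $S(\gamma^{b_0A_0B_0})=1+S(\sigma)$. Subtracting the two entropies gives $I(a_0\rangle b_0A_0B_0)=1$.

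For \eqref{pbit:equality 2}, I would trace out both shields. The matrix elements $\tr(U_l\sigma U_k^\dagger)$ appear, and only the $2\times 2$ block on $\{\ket{00},\ket{11}\}^{a_0b_0}$ survives:
\[
    \gamma^{a_0b_0} = \frac{1}{2}\begin{pmatrix} 1 & c \\ \overline{c} & 1 \end{pmatrix}\quad\text{on }\mathrm{span}\{\ket{00},\ket{11}\},
\]
with $c$ as in \eqref{pbit:quantity 1}. Its eigenvalues are $(1\pm|c|)/2$, giving $S(\gamma^{a_0b_0})=h\bigl((1+|c|)/2\bigr)$. Because $\gamma^{b_0}=\mb I^{b_0}/2$ has entropy $1$, we conclude $I(a_0\rangle b_0)=1-h\bigl((1+|c|)/2\bigr)$. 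Nonnegativity follows from $|c|\le \|U_0\sigma U_1^\dagger\|_1 = 1$, which forces $(1+|c|)/2\in[1/2,1]$ and hence $h\bigl((1+|c|)/2\bigr)\le 1$.

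\textbf{Expected obstacle.} There is no real obstacle; the only care needed is bookkeeping of how the double sum $\sum_{k,l}\ketbra{k}{l}^{a_0}\otimes\ketbra{k}{l}^{b_0}\otimes U_k\sigma U_l^\dagger$ collapses after each partial trace. Introducing the orthogonal purifications $\ket{\Gamma_j}$ up front circumvents any cross-term arithmetic and makes all three entropies immediate.
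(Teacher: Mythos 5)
Your proof is correct and follows essentially the same strategy as the paper: observe that the controlled unitary $U=\sum_k\ketbra{kk}{kk}^{a_0b_0}\otimes U_k$ makes $\gamma$ unitarily equivalent to $\ketbra{\psi}{\psi}\otimes\sigma$ (you verify this via the eigenvector family $\{\ket{\Gamma_j}\}$, which is the same fact in spectral form), then read off $S(\gamma^{a_0b_0A_0B_0})=S(\sigma)$ and compute $\gamma^{b_0A_0B_0}$ as a flag-mixture. The only incremental addition is your explicit check $|c|\le\|U_0\sigma U_1^\dagger\|_1=1$, which cleanly justifies the nonnegativity claim in \eqref{pbit:equality 2} where the paper leaves it implicit.
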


\begin{proof}
Let $\ket{\psi}^{a_0b_0}=\frac{1}{\sqrt2}(\ket{00}+\ket{11})^{a_0b_0}$ and define
$U^{a_0b_0A_0B_0}:=\sum_{k=0}^1 \ketbra{kk}{kk}^{a_0b_0}\otimes U_k^{A_0B_0}$.
Then \eqref{pbit:general form} can be written as
\[
\gamma^{a_0b_0A_0B_0}=U^{a_0b_0A_0B_0}\bigl(\ketbra{\psi}{\psi}^{a_0b_0}\otimes \sigma^{A_0B_0}\bigr)(U^{a_0b_0A_0B_0})^\dagger.
\]
Moreover,
\[
\gamma^{b_0A_0B_0}=\frac{1}{2}\sum_{k=0}^1 \ketbra{k}{k}^{b_0}\otimes U_k^{A_0B_0}\,\sigma^{A_0B_0}\,(U_k^{A_0B_0})^\dagger.
\]
Unitary invariance of entropy gives
\[
I(a_0\rangle b_0A_0B_0)_{\gamma}=S(\gamma^{b_0A_0B_0})-S(\gamma^{a_0b_0A_0B_0})=S(\sigma^{A_0B_0})+1-S(\ketbra{\psi}{\psi}^{a_0b_0}\otimes \sigma^{A_0B_0})=1,
\]
proving \eqref{pbit:equality 1}. For \eqref{pbit:equality 2}, note that
\[
\gamma^{a_0b_0}=\frac{1}{2}\bigl(\ketbra{00}{00}^{a_0b_0}+\ketbra{11}{11}^{a_0b_0}+c\,\ketbra{00}{11}^{a_0b_0}+\overline{c}\,\ketbra{11}{00}^{a_0b_0}\bigr).
\]
Hence $S(\gamma^{b_0})=h(\tfrac12)=1$ and $S(\gamma^{a_0b_0})=h(\frac{1+|c|}{2})$, yielding
\[
I(a_0\rangle b_0)=S(\gamma^{b_0})-S(\gamma^{a_0b_0})=1-h\Bigl(\frac{1+|c|}{2}\Bigr)\ge 0.
\]
\end{proof}

A typical example of private states is as follows. Let $\ket{\psi_+}=\frac{1}{\sqrt2}(\ket{00}+\ket{11})$ and $\ket{\psi_-}=\frac{1}{\sqrt2}(\ket{00}-\ket{11})$.
Let $F^{A_0B_0}=\sum_{i,j=0}^{d-1}\ketbra{ij}{ji}^{A_0B_0}$ be the swap operator on $A_0B_0$, and let $\mb I^{A_0B_0}$ be the identity on $A_0B_0$.
Define the projectors onto the symmetric and antisymmetric subspaces of $\mb C^d\otimes \mb C^d$ by
\[
P_{\sym}^{A_0B_0}=\frac{1}{2}(\mb I^{A_0B_0}+F^{A_0B_0}),
\qquad
P_{\asym}^{A_0B_0}=\frac{1}{2}(\mb I^{A_0B_0}-F^{A_0B_0}).
\]
Then the main example we will study in this paper is
\begin{equation}\label{eqn:Werner}
    \gamma_{q,d}^{a_0b_0A_0B_0} = q \ketbra{\psi_+}{\psi_+}^{a_0b_0} \otimes \frac{1}{d_{\text{sym}}} P_{\mathrm{sym}}^{A_0B_0} + (1-q) \ketbra{\psi_-}{\psi_-}^{a_0b_0} \otimes \frac{1}{d_{\mathrm{asym}}} P_{\mathrm{asym}}^{A_0B_0},\ q\in [0,1],
\end{equation}
where \begin{align*}
    d_{\text{sym}}=\frac{d(d+1)}{2},\quad d_{\text{asym}}=\frac{d(d-1)}{2}.
\end{align*}
To see why it is a pbit, we rewrite the above state in the form~\eqref{pbit:general form}. First, there exists a unitary $U^{A_0B_0}$ diagonalizing $F^{A_0B_0}$ into $\mathrm{diag}(\mb I_{d(d+1)/2},-\mb I_{d(d-1)/2})$, so that
\[
P_{\sym}^{A_0B_0}=U^{A_0B_0}\begin{pmatrix}\mb I_{d(d+1)/2}&0\\[2pt]0&0\end{pmatrix}(U^{A_0B_0})^\dagger,\quad
P_{\asym}^{A_0B_0}=U^{A_0B_0}\begin{pmatrix}0&0\\[2pt]0&\mb I_{d(d-1)/2}\end{pmatrix}(U^{A_0B_0})^\dagger.
\]
Then the standard form in \eqref{pbit:general form} is obtained by choosing
\[
\sigma^{A_0B_0}=
\begin{pmatrix}
\frac{2q}{d(d+1)}\,\mb I_{d(d+1)/2} & 0 \\
0 & \frac{2(1-q)}{d(d-1)}\,\mb I_{d(d-1)/2}
\end{pmatrix},
\quad
U_0^{A_0B_0}:=U^{A_0B_0},
\quad
U_1^{A_0B_0}:=U^{A_0B_0}
\begin{pmatrix}
\mb I_{d(d+1)/2}&0\\[2pt]0&-\mb I_{d(d-1)/2}
\end{pmatrix}.
\]


\section{Quantum capacity amplification for private channels}\label{sec:amp}
In this section, we provide a general criteria for quantum capacity amplification for private channels. We focus on the private channel $\mc N_{q,d}$ induced by the private state \eqref{eqn:Werner} with $a_0 = b_0 = \mb C^2$, and $A_0 = B_0 = \mb C^d$. To be more specific, suppose $\ket{\psi}^{a_0a_0'}$ and $\ket{\Psi}^{A_0A_0'}$ are maximally entangled states on the bipartite system $a_0a_0'$ and $A_0A_0'$ respectively. The quantum channel $\mc N_{q,d} = \mc N_{q,d}^{A' \to B}$ with $A' = a_0'A_0'$ and $B = b_0B_0$ is determined by
\begin{equation}\label{eqn:channel}
    (id_{a_0A_0} \otimes \mc N_{q,d}^{A' \to B}) (\ketbra{\psi}{\psi}^{a_0a_0'} \otimes \ketbra{\Psi}{\Psi}^{A_0A_0'}) = \gamma_{q,d}^{a_0b_0A_0B_0}. 
\end{equation}
In the matrix form, we have 
\begin{align}\label{eqn:channel matrix general}
    \mc N_{q,d} \begin{pmatrix}
    X_{00} & X_{01} \\
    X_{10} & X_{11}
\end{pmatrix} =  \begin{pmatrix}
    q \frac{X_{00}^T + \tr(X_{00}) I_d}{d+1} + (1-q)\frac{ - X_{00}^T + \tr(X_{00}) I_d}{d-1} & q \frac{X_{01}^T + \tr(X_{01}) I_d}{d+1} - (1-q)\frac{ - X_{01}^T + \tr(X_{01}) I_d}{d-1}\\
    q \frac{X_{10}^T + \tr(X_{10}) I_d}{d+1} - (1-q)\frac{ - X_{10}^T + \tr(X_{10}) I_d}{d-1} & q \frac{X_{11}^T + \tr(X_{11}) I_d}{d+1} + (1-q)\frac{ - X_{11}^T + \tr(X_{11}) I_d}{d-1}
\end{pmatrix},\ X_{ij} \in \mc B(\mb C^d).
\end{align}

A central question in this section is:
\begin{tcolorbox}
    Given a quantum channel $\mc M = \mc M^{A_0 \to C_0}$, under which condition, we have
\begin{equation}\label{amplification}
    \mc Q(\mc N_{q,d} \otimes \mc M) > \mc Q(\mc M) + \mc Q(\mc N_{q,d}). 
\end{equation}
\end{tcolorbox}
Our main result is a sufficient condition on the channel $\mc M$ and the parameters $q,d$ such that \eqref{amplification} holds. 
\begin{theorem}\label{thm:amplification}
    Suppose $\mc M^{A_0 \to C_0}$ is a quantum channel, denote the quantum states
\begin{align}\label{eqn:sym and asym Choi}
    \mc J^{\mathrm{sym}}_{\mc M} = \frac{1}{d_{\text{sym}}} (\mc M\otimes id_{B_0}) (P_{\mathrm{sym}}^{A_0B_0}),\quad \mc J^{\mathrm{asym}}_{\mc M} = \frac{1}{d_{\text{asym}}} (\mc M\otimes id_{B_0}) (P_{\mathrm{asym}}^{A_0B_0}).
\end{align}
Then $ \mc Q(\mc N_{q,d} \otimes \mc M) > \mc Q(\mc M) + \mc Q(\mc N_{q,d})$ holds provided 
\begin{align}\label{sufficient:general}
    1-h(q) + H(\mc J_\mc M^{\sym}, \mc J_\mc M^\asym;q,1-q) > \mc Q(\mc N_{q,d}) + \mc Q(\mc M),
\end{align}
where $h(q)$ is the binary entropy, and $$H(\mc J_\mc M^{\sym}, \mc J_\mc M^\asym;q,1-q) = S(q \mc J^{\mathrm{sym}}_{\mc M} + (1-q)\mc J^{\mathrm{asym}}_{\mc M}) - qS(\mc J^{\mathrm{sym}}_{\mc M}) - (1-q) S(\mc J^{\mathrm{asym}}_{\mc M})$$
is the Holevo information.
\end{theorem}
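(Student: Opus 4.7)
The plan is to lower bound $\mc Q(\mc N_{q,d}\otimes \mc M)$ by the coherent information on a carefully engineered pure input, show that this value already equals the left-hand side of \eqref{sufficient:general}, and then invoke $\mc Q(\cdot)\ge \mc Q^{(1)}(\cdot)\ge I_c$ together with the hypothesis. The design exploits two orthogonalities: the Bell orthogonality $\ket{\psi_+}\perp \ket{\psi_-}$ that $\gamma_{q,d}$ uses on the key subsystem to separate its sym/asym components, and a binary which-branch flag in the environment coming from the convex-sum Stinespring dilation of $\mc N_{q,d}$. Together they route the sym/asym distinction to Bob's key register while $\mc M$ carries the corresponding Holevo contribution on the shield reference.

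First I decompose $\gamma_{q,d}$ as a convex sum of two factorized Choi operators and read off $\mc N_{q,d}=q\mc N^{\sym}+(1-q)\mc N^{\asym}$ with $\mc N^{\sym}=id^{a_0'\to b_0}\otimes \mc N_{\sym}^{A_0'\to B_0}$ and $\mc N^{\asym}=\mathrm{Ad}_Z\otimes \mc N_{\asym}^{A_0'\to B_0}$, where $\mc N_{\sym/\asym}$ are the Werner--Holevo-type maps with Choi operators $P_{\sym/\asym}/d_{\sym/\asym}$ and $\mathrm{Ad}_Z$ denotes conjugation by Pauli $Z$ (which implements $\ket{\psi_+}\mapsto \ket{\psi_-}$ on the key). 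The convex-sum Stinespring dilation with a binary flag $F$ in the environment,
\[
V\ket{x,y}^{a_0'A_0'}=\sqrt{q}\,V^{\sym}\ket{x,y}\ket{0}^F+\sqrt{1-q}\,V^{\asym}\ket{x,y}\ket{1}^F,
\]
is isometric precisely because the flag orthogonality kills the cross terms in $V^\dagger V$.

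Next I pick the pure input $\ket{\phi}^{R a_0' A_0' A_0}=\ket{\psi_+}^{R a_0'}\otimes \ket{\Omega}^{A_0' A_0}$, where $R$ is a qubit reference and $\ket{\Omega}^{A_0' A_0}=\tfrac{1}{\sqrt d}\sum_i \ket{ii}^{A_0' A_0}$ is the maximally entangled state between the shield input of $\mc N_{q,d}$ and the input of $\mc M$. Since $(I\otimes Z)\ket{\psi_+}=\ket{\psi_-}$, applying $V\otimes V_{\mc M}$ to $\ket{\phi}$ yields an output pure state of the form
\[
\ket{\Phi}=\sqrt{q}\,\ket{\psi_+}^{R b_0}\ket{\Xi_{\sym}}^{B_0 C_0 E^{\sym} E'}\ket{0}^F+\sqrt{1-q}\,\ket{\psi_-}^{R b_0}\ket{\Xi_{\asym}}^{B_0 C_0 E^{\asym} E'}\ket{1}^F,
\]
where $\ket{\Xi_{\sym/\asym}}$ are the images of $\ket{\Omega}^{A_0' A_0}$ under $V_{\mc N_{\sym/\asym}}\otimes V_{\mc M}$. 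Tracing out $F$ annihilates every cross term, so for $S(RB)$ with $B=b_0 B_0 C_0$ the remaining state is a genuine orthogonal mixture on $R b_0$. A short Choi computation, $(\mc N_{\sym/\asym}^{A_0'\to B_0}\otimes id^{A_0})(\ket{\Omega}\bra{\Omega}^{A_0' A_0})=P_{\sym/\asym}/d_{\sym/\asym}$, identifies the conditional states on $B_0 C_0$ with $\mc J^{\sym/\asym}_{\mc M}$ from \eqref{eqn:sym and asym Choi}. Thus $S(RB)=h(q)+qS(\mc J^{\sym}_{\mc M})+(1-q)S(\mc J^{\asym}_{\mc M})$, while tracing $R$ further replaces the $R b_0$ projectors by $\tfrac{1}{2}\mb I^{b_0}$, giving $S(B)=1+S\bigl(q\mc J^{\sym}_{\mc M}+(1-q)\mc J^{\asym}_{\mc M}\bigr)$. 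Subtracting, $I_c(\ket{\phi}\bra{\phi},\mc N_{q,d}\otimes\mc M)=1-h(q)+H(\mc J^{\sym}_{\mc M},\mc J^{\asym}_{\mc M};q,1-q)$, which is precisely the left-hand side of \eqref{sufficient:general}.

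The main obstacle is the careful bookkeeping of the convex-sum dilation so that two distinct cancellations remain cleanly separated: the flag $F$ must kill all cross terms in both the $S(B)$ and $S(RB)$ calculations, and independently the Bell orthogonality on $R b_0$ must turn the $S(RB)$ mixture into a classical mixture over an orthogonal register so that its entropy decomposes additively. Once these are in place, the identification of the conditional outputs on $B_0 C_0$ with the sym/asym Choi states of $\mc M$ is essentially a restatement of \eqref{eqn:sym and asym Choi}.
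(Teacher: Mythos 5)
Your proposal is correct and takes essentially the same route as the paper: both proofs feed the maximally entangled state $\ket{\psi}^{a_0a_0'}\otimes\ket{\Psi}^{A_0A_0'}$ into $\mc N_{q,d}\otimes\mc M$, use the Bell orthogonality $\psi_+\perp\psi_-$ to split the resulting coherent information into a binary-entropy term plus the Holevo quantity $H(\mc J_\mc M^{\sym},\mc J_\mc M^\asym;q,1-q)$, and then invoke $\mc Q\ge\mc Q^{(1)}\ge I_c$ against the hypothesis. The only difference is cosmetic: the paper (Proposition \ref{prop:lower bound}) computes $\rho^{a_0b_0C_0B_0}=(id\otimes\mc M)(\gamma_{q,d})$ directly from the Choi picture, whereas you reach the identical reduced state via an explicit convex-sum Stinespring dilation with an environment flag $F$; both yield $S(\rho^{b_0C_0B_0})=1+S(q\mc J_\mc M^{\sym}+(1-q)\mc J_\mc M^\asym)$ and $S(\rho^{a_0b_0C_0B_0})=h(q)+qS(\mc J_\mc M^{\sym})+(1-q)S(\mc J_\mc M^\asym)$.
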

In the remaining sections, we provide the proof of the main theorem and illustrate this theorem using different channels $\mc M$.

\begin{figure}[ht]
    \centering
    \includegraphics[width=0.5\linewidth]{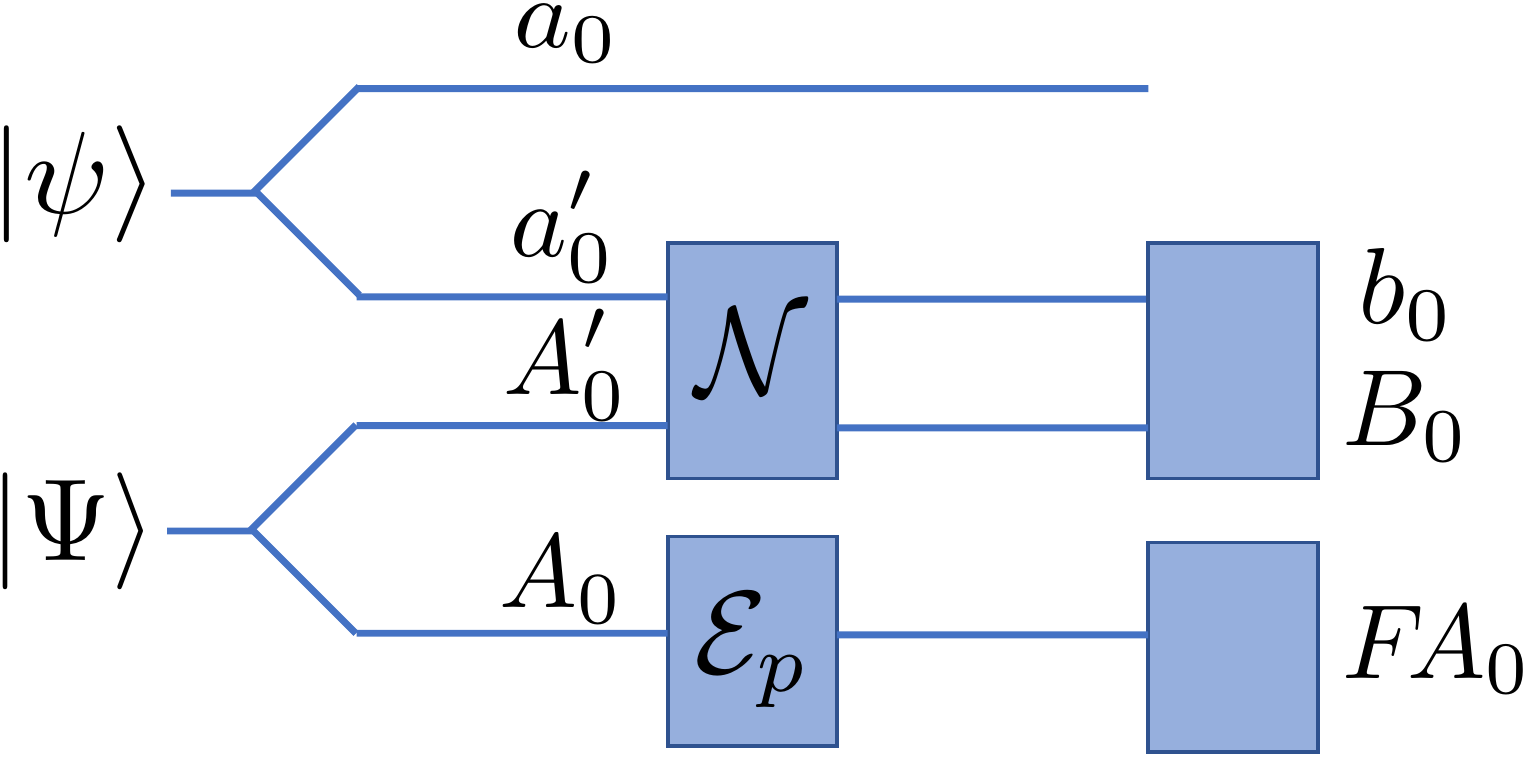}
    \caption{Quantum capacity amplification with the help of shield system.}
    \label{fig:amplification-one copy}
\end{figure}

\subsection{Lower bound on the quantum capacity of the joint channel}
To establish the result, we first derive a lower bound using maximally entangled state $\ket{\psi}^{a_0a_0'} \otimes \ket{\Psi}^{A_0A_0'}$ as an ansatz state for $\mc Q^{(1)}(\mc N_{q,d} \otimes \mc M) $. This gives us the following criteria: 
\begin{prop}\label{prop:lower bound}
    Suppose $\mc M^{A_0 \to C_0}$ is a quantum channel, denote the quantum states
\begin{align}
    \mc J^{\mathrm{sym}}_{\mc M} = \frac{1}{d_{\mathrm{sym}}} (\mc M\otimes id_{B_0}) (P_{\mathrm{sym}}^{A_0B_0}),\quad \mc J^{\mathrm{asym}}_{\mc M} = \frac{1}{d_{\mathrm{asym}}} (\mc M\otimes id_{B_0}) (P_{\mathrm{asym}}^{A_0B_0}).
\end{align}
Then we have 
\begin{equation}
\begin{aligned}
    \mc Q(\mc N_{q,d} \otimes \mc M) & \ge \mc Q^{(1)}(\mc N_{q,d} \otimes \mc M) \\
    &\ge 1-h(q) + S(q \mc J^{\mathrm{sym}}_{\mc M} + (1-q)\mc J^{\mathrm{asym}}_{\mc M}) - qS(\mc J^{\mathrm{sym}}_{\mc M}) - (1-q) S(\mc J^{\mathrm{asym}}_{\mc M}).
\end{aligned}
\end{equation}
\end{prop}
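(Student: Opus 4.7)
The plan is to choose a specific pure ansatz input for the joint channel $\mc N_{q,d} \otimes \mc M$ and directly evaluate the resulting coherent information, showing that it already equals the claimed lower bound; since $\mc Q(\mc N_{q,d} \otimes \mc M) \ge \mc Q^{(1)}(\mc N_{q,d} \otimes \mc M) \ge I_c$ for any valid pure input, this suffices. The joint channel acts on $a_0'A_0' \otimes \tilde A_0$, where $\tilde A_0 \cong \mb C^d$ denotes the input register of $\mc M$. My ansatz is the pure state
\[
\ket{\Phi}^{a_0 a_0' A_0' \tilde A_0} := \ket{\psi}^{a_0 a_0'}\otimes\ket{\Psi}^{A_0' \tilde A_0},
\]
with reference $R = a_0$ and joint input $a_0' A_0' \tilde A_0$. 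The key idea is that the maximally entangled state $\ket{\Psi}^{A_0' \tilde A_0}$ plays a double role: it is simultaneously the Choi--Jamio\l{}kowski ancilla for $\mc N_{q,d}$'s shield input and the input register of $\mc M$, so that $\tilde A_0$ becomes effectively identified with the shield-reference $A_0$ of the private state.

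Applying $\mc N_{q,d}$ first on $a_0'A_0'$, the defining identity \eqref{eqn:channel} yields the private state $\gamma_{q,d}^{a_0 b_0 \tilde A_0 B_0}$ (with $\tilde A_0 \equiv A_0$). Then applying $\mc M$ on $\tilde A_0$ and using the explicit form \eqref{eqn:Werner} gives the output state
\[
\sigma^{a_0 b_0 B_0 C_0} = q\,\ketbra{\psi_+}{\psi_+}^{a_0 b_0}\otimes \mc J^{\sym}_{\mc M} + (1-q)\,\ketbra{\psi_-}{\psi_-}^{a_0 b_0}\otimes \mc J^{\asym}_{\mc M},
\]
with $\mc J^{\sym}_{\mc M}, \mc J^{\asym}_{\mc M}$ living on $C_0 B_0$ as in \eqref{eqn:sym and asym Choi}.

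To finish, I would compute the two entropies appearing in $I_c = S(b_0 B_0 C_0) - S(a_0 b_0 B_0 C_0)$. Since $\ket{\psi_+}$ and $\ket{\psi_-}$ are orthogonal on $a_0 b_0$, the two branches have orthogonal supports, and the entropy formula for classical mixtures of orthogonal states gives $S(a_0 b_0 B_0 C_0) = h(q) + q S(\mc J^{\sym}_{\mc M}) + (1-q) S(\mc J^{\asym}_{\mc M})$. For the marginal on $b_0 B_0 C_0$, both $\mathrm{Tr}_{a_0}\ketbra{\psi_+}{\psi_+}^{a_0 b_0}$ and $\mathrm{Tr}_{a_0}\ketbra{\psi_-}{\psi_-}^{a_0 b_0}$ equal $\mb I^{b_0}/2$, so the marginal factorizes as $\tfrac{\mb I^{b_0}}{2}\otimes\bigl(q\mc J^{\sym}_{\mc M} + (1-q)\mc J^{\asym}_{\mc M}\bigr)$, giving $S(b_0 B_0 C_0) = 1 + S(q\mc J^{\sym}_{\mc M} + (1-q)\mc J^{\asym}_{\mc M})$. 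Subtracting yields $I_c = 1 - h(q) + H(\mc J^{\sym}_{\mc M}, \mc J^{\asym}_{\mc M}; q, 1-q)$, matching the claim. The main step is really just identifying this ansatz: once one recognizes that entangling the input of $\mc M$ with the shield input of $\mc N_{q,d}$ recovers the private-state structure, the remaining computation is a routine application of the two entropy identities recalled in Lemma~\ref{lemma:basic}, and the resulting bound is a concrete manifestation of the \emph{shield recycling} intuition from the introduction.
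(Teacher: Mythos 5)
Your proof is correct and uses essentially the same ansatz and computation as the paper: feeding the maximally entangled state across the shield input of $\mc N_{q,d}$ and the input of $\mc M$ so that the output is $(\mc M\otimes id)(\gamma_{q,d})$, then applying the orthogonal-mixture entropy identity twice. The only cosmetic difference is that you describe the purified input $\ket{\Phi}$ directly while the paper writes the reduced state $\rho^{A'A_0}=\tr_{a_0}(\cdot)$, but these are the same coherent-information evaluation with reference $a_0$.
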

\begin{proof}
    Note that $\mc Q^{(1)}(\mc N_{q,d} \otimes \mc M)  \ge I_c(\mc N_{q,d} \otimes \mc M, \rho^{A'A_0})$, where the ansatz state $\rho^{A'A_0}$ is given by 
    \begin{align*}
        \rho^{A'A_0} = \tr_{a_0} \left(\ketbra{\psi}{\psi}^{a_0a_0'} \otimes \ketbra{\Psi}{\Psi}^{A_0A_0'}\right). 
    \end{align*}
    To calculate $I_c(\mc N_{q,d} \otimes \mc M, \rho^{A'A_0})$, we denote 
\begin{align*}
    \rho^{a_0b_0C_0B_0} & = (id_{a_0b_0B_0} \otimes \mc M) (\gamma_{q,d}^{a_0b_0A_0B_0})\\
    & = q \ketbra{\psi_+}{\psi_+}^{a_0b_0} \otimes \frac{1}{d_{\text{sym}}} (\mc M\otimes id_{B_0}) (P_{\mathrm{sym}}^{A_0B_0}) + (1-q) \ketbra{\psi_-}{\psi_-}^{a_0b_0} \otimes \frac{1}{d_{\mathrm{asym}}} (\mc M\otimes id_{B_0})(P_{\mathrm{asym}}^{A_0B_0}),
\end{align*}
thus we have 
\begin{align*}
    I_c(\mc N_{q,d} \otimes \mc M, \rho^{A'A_0}) = I(a_0 \rangle b_0C_0B_0)_{\rho^{a_0b_0C_0B_0}} = S(\rho^{b_0C_0B_0}) - S(\rho^{a_0b_0C_0B_0}). 
\end{align*}
Taking the partial trace, we have
\begin{align*}
    \rho^{b_0C_0B_0} = \frac{\mb I_2}{2} \otimes \left(q \mc J^{\mathrm{sym}}_{\mc M} + (1-q)\mc J^{\mathrm{asym}}_{\mc M}\right).
\end{align*}
The coherent information is then calculated as 
\begin{align*}
    I_c(\mc N_{q,d} \otimes \mc M, \rho^{A'A_0}) &= S(\rho^{b_0C_0B_0}) - S(\rho^{a_0b_0C_0B_0}) \\
    & = 1-h(q) + S(q \mc J^{\mathrm{sym}}_{\mc M} + (1-q)\mc J^{\mathrm{asym}}_{\mc M}) - qS(\mc J^{\mathrm{sym}}_{\mc M}) - (1-q) S(\mc J^{\mathrm{asym}}_{\mc M}).
\end{align*}
\end{proof}
Therefore, via Proposition \ref{prop:lower bound} we conclude the proof of Theorem~\ref{thm:amplification}. The following subsection establishes the upper bound on the quantum capacity of $\mc N_{q,d}$. 

\subsection{Upper bound on the quantum capacity of each individual channel}
There are many works on the upper bound of quantum capacity, an incomplete list includes \cite{Fanizza_2020,Zhu_2024,Zhu_2025}. A classic approach is the well-known transposition bound, which states $\mc Q(\mc N^{A' \to B}) \le \log \|T_B \circ \mc N^{A' \to B}\|_{\diamond}$, see \cite{holevo2001evaluating}. The quantity $\|T_B \circ \mc N^{A' \to B}\|_{\diamond}$ can be computed using the follow SDP \cite[Theorem 3.1]{watrous2012}:
\begin{equation}\label{eqn:transposition}
    \begin{aligned}
        \|T_B \circ \mc N^{A' \to B}\|_{\diamond} & = \min \frac{1}{2}(\|Y^A\|_{op} + \|Z^{A}\|_{op}) \\
        & s.t.\ Y^{AB}, Z^{AB} \ge 0, \\
        & \hspace{0.5cm} \begin{pmatrix}
            Y^{AB} &  -\widehat{\mc J}_{T_B \circ \mc N^{A' \to B}} \\
            -\widehat{\mc J}_{T_B \circ \mc N^{A' \to B}} & Z^{AB}
        \end{pmatrix} \ge 0.
    \end{aligned}
\end{equation}
Here, $\|\cdot\|_{op}$ denotes the operator norm(largest singular value), and for any superoperator $\mc N$, $\widehat{\mc J}_{\mc N}:= \sum_{i,j}\ketbra{i}{j} \otimes \mc N(\ketbra{i}{j})$ denotes the unnormalized Choi operator. 
The upper bound on quantum capacities for general private channels with flagged forms is thus given as follows:  
\begin{prop}\label{prop:transposition bound general}
    Suppose the channel $\mc N^{a_0'A_0' \to b_0B_0}$ has unnormalized Choi–Jamio\l{}kowski operator $\mc J_\mc N$ given by
\begin{align*}
     \mc J_{\mc N}/2d = q \ketbra{\psi_+}{\psi_+}^{a_0b_0} \otimes \sigma_1^{A_0B_0} + (1-q) \ketbra{\psi_-}{\psi_-}^{a_0b_0} \otimes \sigma^{A_0B_0}_2,\quad \sigma^{A_0B_0}_1\perp \sigma^{A_0B_0}_2.
\end{align*}
Then an upper bound on the quantum capacity for $\mc N^{a_0A_0 \to b_0B_0}$ is 
    \begin{align}\label{upper bound on capacity: Werner}
   \mc Q(\mc N) \le \log\left( d \left\|\tr_{B_0} \left( \left|(q \sigma^{A_0B_0}_1 + (1-q)\sigma^{A_0B_0}_2)^{T_{B_0}} \right| +  \left|(q \sigma^{A_0B_0}_1 - (1-q)\sigma^{A_0B_0}_2)^{T_{B_0}} \right|\right) \right\|_{op} \right).
\end{align}
\end{prop}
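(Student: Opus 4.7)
The plan is to combine the transposition bound $\mc Q(\mc N) \le \log \|T_B \circ \mc N\|_\diamond$ with a universal operator-norm upper bound on the diamond norm obtained from the SDP \eqref{eqn:transposition}. Specifically, for any Hermiticity-preserving map $\Phi$, plugging the feasible choice $Y = Z = |\widehat{\mc J}_\Phi|$ into the SDP yields $\|\Phi\|_\diamond \le \|\tr_B(|\widehat{\mc J}_\Phi|)\|_{op}$; writing the Jordan decomposition $\widehat{\mc J}_\Phi = \widehat{\mc J}_+ - \widehat{\mc J}_-$ with $\widehat{\mc J}_\pm \ge 0$ and $\widehat{\mc J}_+\widehat{\mc J}_- = 0$, feasibility is verified via
\[
\begin{pmatrix} |\widehat{\mc J}_\Phi| & -\widehat{\mc J}_\Phi \\ -\widehat{\mc J}_\Phi & |\widehat{\mc J}_\Phi| \end{pmatrix} = \widehat{\mc J}_+ \otimes \begin{pmatrix}1 & -1 \\ -1 & 1\end{pmatrix} + \widehat{\mc J}_- \otimes \begin{pmatrix}1 & 1 \\ 1 & 1\end{pmatrix} \ge 0,
\]
a sum of positive semidefinite tensor products. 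Applied to $\Phi = T_B \circ \mc N$, this reduces the claim to an explicit computation of $\tr_{b_0B_0}|\widehat{\mc J}_{T_B \circ \mc N}|$.

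Next I would unpack $\widehat{\mc J}_{T_B \circ \mc N} = (id^A \otimes T_B)(\widehat{\mc J}_\mc N)$ using the transposes of the Bell projectors on the key system. With $R := \ketbra{00}{00}+\ketbra{11}{11}$ and $S := \ketbra{01}{10}+\ketbra{10}{01}$, one checks that $(\ketbra{\psi_\pm}{\psi_\pm})^{T_{b_0}} = \frac{1}{2}(R \pm S)$, so that
\[
\widehat{\mc J}_{T_B \circ \mc N} = d\,\bigl[R\otimes \alpha + S \otimes \beta\bigr], \qquad \alpha := (q\sigma_1+(1-q)\sigma_2)^{T_{B_0}},\quad \beta := (q\sigma_1-(1-q)\sigma_2)^{T_{B_0}}.
\]
The crucial observation is that $R$ and $S$ act on orthogonal subspaces $\mathrm{span}\{\ket{00},\ket{11}\}$ and $\mathrm{span}\{\ket{01},\ket{10}\}$ of $a_0 b_0$; hence the absolute value splits block-wise, and combined with the identity $|M\otimes N| = |M|\otimes |N|$ for Hermitian $M,N$, together with $|R| = R$ and $|S| = \ketbra{01}{01}+\ketbra{10}{10}$, one obtains $|\widehat{\mc J}_{T_B \circ \mc N}| = d\bigl[R\otimes |\alpha| + |S|\otimes |\beta|\bigr]$.

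Finally, since $\tr_{b_0} R = \tr_{b_0}|S| = \mb I^{a_0}$, the partial trace evaluates to $\tr_{b_0 B_0}|\widehat{\mc J}_{T_B \circ \mc N}| = d\,\mb I^{a_0}\otimes \tr_{B_0}(|\alpha|+|\beta|)$; taking the operator norm, which is invariant under tensoring with $\mb I^{a_0}$, and stitching this together with the SDP bound and the transposition inequality, produces the claimed estimate. The only nontrivial step is the block factorization of $|\widehat{\mc J}_{T_B \circ \mc N}|$, which hinges on the orthogonal supports of $R$ and $S$ in $a_0 b_0$ and on the product rule $|M\otimes N| = |M|\otimes |N|$; once these are recognized the remainder is a direct computation, and note that the hypothesis $\sigma_1 \perp \sigma_2$ is not needed in the argument itself, so the same estimate is valid for any Choi operator of the stated bipartite-Bell form.
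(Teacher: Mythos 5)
Your proof is correct and, numerically, identical to the paper's: the paper's ad hoc feasible choice $Y=Z$ is exactly $\left|\widehat{\mc J}_{T_B\circ\mc N}\right|$, as your block decomposition over the orthogonal supports of $R$ and $S$ shows. The only added value in your write-up is conceptual: you extract the universal inequality $\|\Phi\|_\diamond \le \|\tr_B(|\widehat{\mc J}_\Phi|)\|_{op}$ from Watrous's SDP (a standard fact, but worth naming) and then specialize, rather than guessing $Y,Z$ directly; and you correctly observe, as the paper's own proof implicitly confirms, that the hypothesis $\sigma_1\perp\sigma_2$ is never invoked, so the estimate holds for any Choi operator of the stated Bell-diagonal form on the key qubits.
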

\begin{proof}
    To compute \eqref{eqn:transposition}, we need to construct $Y^{AB},Z^{AB}\ge 0$ such that 
\begin{align*}
    \begin{pmatrix}
        Y^{AB} & -\mc J^{T_B} \\
        -\mc J^{T_B} & Z^{AB}
    \end{pmatrix} \ge 0.
\end{align*}
Rewriting the unnormalized Choi–Jamio\l{}kowski operator $\mc J_\mc N$ and taking the partial transpose, one has 
\begin{align*}
    \mc J_{\mc N}^{T_B}/2d &= \frac{1}{2} (\ketbra{00}{00} + \ketbra{11}{11}) \otimes (q \sigma_1 + (1-q)\sigma_2)^{T_{B_0}} + \frac{1}{2} (\ketbra{01}{10} + \ketbra{10}{01}) \otimes (q \sigma_1 - (1-q)\sigma_2)^{T_{B_0}}.
\end{align*}
A simple way to construct $Y,Z\ge 0$ is to choose  
\begin{align*}
    Y^{AB} = Z^{AB} & = d (\ketbra{00}{00} + \ketbra{11}{11}) \otimes |(q \sigma_1 + (1-q)\sigma_2)^{T_{B_0}}| + d (\ketbra{01}{01} + \ketbra{10}{10}) \otimes |(q \sigma_1 - (1-q)\sigma_2)^{T_{B_0}}|.
\end{align*}
It is straightforward to verify that \begin{align*}
    \begin{pmatrix}
        Y^{AB} & -\mc J^{T_B} \\
        -\mc J^{T_B} & Z^{AB}
    \end{pmatrix} \ge 0.
\end{align*}
$Y^A$ is calculated as
\begin{align*}
    Y^A = \tr_{b_0B_0}(Y^{AB}) = d\, I_2 \otimes \tr_{B_0} \left( |(q \sigma_1 + (1-q)\sigma_2)^{T_{B_0}}| +  |(q \sigma_1 - (1-q)\sigma_2)^{T_{B_0}}|\right).
\end{align*}
Then following the expression \eqref{eqn:transposition}, the upper bound on the quantum capacity is given by 
\begin{align*}
    \log (\|Y^A\|_{op}) = \log\left( d \left\|\tr_{B_0} \left( \left|(q \sigma^{A_0B_0}_1 + (1-q)\sigma^{A_0B_0}_2)^{T_{B_0}} \right| +  \left|(q \sigma^{A_0B_0}_1 - (1-q)\sigma^{A_0B_0}_2)^{T_{B_0}} \right|\right) \right\|_{op} \right).
\end{align*}
\end{proof}
As an application, we get an explicit upper bound for the quantum channel $\mc N_{q,d}$ induced by the private state \eqref{eqn:Werner}: 
\begin{corollary}\label{cor:upper bound private channel}
    An upper bound of the quantum capacity of $\mc N_{q,d}$ is given by 
    \begin{align*}
    \mc Q(\mc N_{q,d}) \le \log \left((d^2-1)(r_0 + |r_1|) +|r_0 + d r_1| + |r_1 + d r_0|\right),
\end{align*}
with \begin{align*}
    r_0 = \frac{q}{d(d+1)} + \frac{1-q}{d(d-1)},\quad r_1 = \frac{q}{d(d+1)} - \frac{1-q}{d(d-1)}.
\end{align*}
\end{corollary}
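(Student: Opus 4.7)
\textbf{Proof plan for Corollary~\ref{cor:upper bound private channel}.} The strategy is a direct specialization of Proposition~\ref{prop:transposition bound general} to the private channel $\mc N_{q,d}$, where the shield states are the normalized symmetric/antisymmetric projectors. Concretely, in the setup of Proposition~\ref{prop:transposition bound general} we take
\[
\sigma_1^{A_0B_0}=\tfrac{1}{d_{\sym}}P_{\sym}^{A_0B_0},\qquad \sigma_2^{A_0B_0}=\tfrac{1}{d_{\asym}}P_{\asym}^{A_0B_0},
\]
which are orthogonal since the symmetric and antisymmetric subspaces are orthogonal. The remaining task is to compute the quantity
\[
\tr_{B_0}\Bigl(\bigl|(q\sigma_1+(1-q)\sigma_2)^{T_{B_0}}\bigr|+\bigl|(q\sigma_1-(1-q)\sigma_2)^{T_{B_0}}\bigr|\Bigr)
\]
in closed form and read off its operator norm.

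The key simplification is to move from the projector basis $\{P_{\sym},P_{\asym}\}$ to the $\{I,F\}$ basis via $P_{\sym}=\tfrac{1}{2}(I+F)$, $P_{\asym}=\tfrac{1}{2}(I-F)$, together with $1/d_{\sym}=2/(d(d+1))$ and $1/d_{\asym}=2/(d(d-1))$. A short computation gives
\[
q\sigma_1+(1-q)\sigma_2=r_0\,I+r_1\,F,\qquad q\sigma_1-(1-q)\sigma_2=r_1\,I+r_0\,F,
\]
with $r_0,r_1$ as in the statement. Under the partial transpose on $B_0$ the swap becomes a rank-one projector: $F^{T_{B_0}}=d\,P$, where $P:=\ketbra{\Phi^+}{\Phi^+}$ with $\ket{\Phi^+}=\tfrac{1}{\sqrt d}\sum_i\ket{ii}$ on $A_0B_0$. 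Hence
\[
(q\sigma_1+(1-q)\sigma_2)^{T_{B_0}}=r_0\,I+dr_1\,P,\qquad (q\sigma_1-(1-q)\sigma_2)^{T_{B_0}}=r_1\,I+dr_0\,P.
\]

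Each of these is of the form $aI+bP$, diagonal in the orthogonal split $I=(I-P)+P$, with eigenvalues $a$ on $I-P$ and $a+b$ on $P$. Thus $|aI+bP|=|a|(I-P)+|a+b|P$, and since $r_0\ge 0$ the sum of absolute values is
\[
(r_0+|r_1|)(I-P)+\bigl(|r_0+dr_1|+|r_1+dr_0|\bigr)P.
\]
Applying $\tr_{B_0}$ using $\tr_{B_0}(I)=d\,I_{A_0}$ and $\tr_{B_0}(P)=\tfrac{1}{d}I_{A_0}$ gives a scalar multiple of $I_{A_0}$, so the operator norm collapses to that scalar:
\[
\bigl\|\tr_{B_0}(\,\cdots\,)\bigr\|_{op}=\tfrac{1}{d}\Bigl[(d^2-1)(r_0+|r_1|)+|r_0+dr_1|+|r_1+dr_0|\Bigr].
\]
Substituting into the bound of Proposition~\ref{prop:transposition bound general}, the prefactor $d$ cancels and yields the claimed inequality.

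There is no genuine obstacle here; the proof is an algebraic exercise. The only place where care is needed is the bookkeeping in steps (ii)–(iii): rewriting the shield combinations in the $\{I,F\}$ basis and then applying $F^{T_{B_0}}=dP$ correctly. Once the operators are expressed as $aI+bP$ on the bipartite shield, the eigenvalue structure is immediate and the partial trace is trivial because $P$ is maximally entangled, so $\tr_{B_0}P\propto I_{A_0}$.
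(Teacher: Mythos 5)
Your proof is correct and follows essentially the same route as the paper: specialize Proposition~\ref{prop:transposition bound general} with $\sigma_1=P_{\sym}/d_{\sym}$, $\sigma_2=P_{\asym}/d_{\asym}$, rewrite the shield combinations as $r_0\mathbb I + r_1 F$ and $r_1\mathbb I + r_0 F$, use $F^{T_{B_0}}=d\,\ketbra{\Phi^+}{\Phi^+}$, diagonalize in the $\{P,\mathbb I-P\}$ split, and take the partial trace. The only cosmetic difference is that you write the absolute values directly in the orthogonal decomposition $|a|(I-P)+|a+b|P$, whereas the paper keeps the form $|a|\mathbb I+(|a+b|-|a|)\Psi^+$; these are identical.
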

\begin{proof}
    We apply Proposition \ref{prop:transposition bound general}. In this case, the states are given by \eqref{eqn:Werner}: 
    \begin{align*}
    \sigma^{A_0B_0}_1 = \frac{1}{d(d+1)} \sum_{i,j} \ketbra{ij}{ij} + \ketbra{ij}{ji},\quad \sigma^{A_0B_0}_2 = \frac{1}{d(d-1)} \sum_{i,j} \ketbra{ij}{ij} - \ketbra{ij}{ji}.
\end{align*}
Therefore, we have
\begin{align*}
    (q \sigma_1 + (1-q)\sigma_2)^{T_{B_0}} = r_0 \mb I_{d^2} + d\,r_1 \Psi^+, \quad (q \sigma_1 - (1-q)\sigma_2)^{T_{B_0}} = r_1 \mb I_{d^2} + d\,r_0 \Psi^+, 
\end{align*}
where $\Psi^+ = \ketbra{\Psi^+}{\Psi^+},\ \ket{\Psi^+} = \frac{1}{\sqrt d} \sum_{i=0}^{d-1} \ket{ii}$ and
\begin{align*}
    r_0 = \frac{q}{d(d+1)} + \frac{1-q}{d(d-1)},\quad r_1 = \frac{q}{d(d+1)} - \frac{1-q}{d(d-1)}.
\end{align*}
Taking the absolute value of the matrices, 
\begin{align*}
    \left|(q \sigma_1 + (1-q)\sigma_2)^{T_{B_0}}\right| = r_0 \mb I_{d^2} + (|r_0 + d\,r_1| - r_0) \Psi^+, \quad \left|(q \sigma_1 - (1-q)\sigma_2)^{T_{B_0}}\right| = |r_1| \mb I_{d^2} + (|r_1 + d\,r_0| - |r_1|) \Psi^+.
\end{align*}
Finally, the reduced operator is given by 
\begin{align*}
    \tr_{B_0} \left( |(q \sigma_1 + (1-q)\sigma_2)^{T_{B_0}}| +  |(q \sigma_1 - (1-q)\sigma_2)^{T_{B_0}}|\right) = \left((r_0 + |r_1|)(d-\frac{1}{d}) + \frac{|r_0 + d r_1| + |r_1 + d r_0|}{d} \right) \mb I_d.
\end{align*}
Plugging it into \eqref{upper bound on capacity: Werner}, we conclude the proof.
\end{proof}

\subsection{Examples}
In this subsection, we illustrate the framework Theorem~\ref{thm:amplification}. To proceed, given a channel $\mc M^{A_0\to C_0} $, one needs an upper bound on $\mc Q(\mc M)$ and to compute the entropies of $\mc J_{\mc M}^{\sym}$ and $\mc J_{\mc M}^{\asym}$, which are defined in \eqref{eqn:sym and asym Choi}.
\subsubsection*{Erasure channels}
When $\mc M = \mc E_{\lambda,d}$, where $\mc E_{\lambda,d}$ is an erasure channel to $d+1$ dimensional output with flag $\ket{e}$: 
\begin{equation}\label{erasure channel}
    \mc E_{\lambda,d}(\rho) = (1-\lambda) \rho + \tr(\rho) \ketbra{e}{e}.
\end{equation}
Then one can directly compute $\mc J_{\mc E_{\lambda,d}}^{\sym}$ and $\mc J_{\mc E_{\lambda,d}}^{\asym}$:
\begin{align*}
    & \mc J_{\mc E_{\lambda,d}}^{\sym} = (1-\lambda) \frac{P_{\sym}}{d_{\sym}} + \lambda \ketbra{e}{e} \otimes \frac{\mb I_d}{d}, \\
    & \mc J_{\mc E_{\lambda,d}}^{\asym} = (1-\lambda) \frac{P_{\asym}}{d_{\asym}} + \lambda \ketbra{e}{e} \otimes \frac{\mb I_d}{d}.
\end{align*}
Using the entropy formula for probabilistic mixture of orthogonal states: 
\begin{equation}\label{eqn:entropy orthogonal}
    S(\sum_{i} p_i \tau_i) = H(\{p_i\}) + \sum_{i} p_i S(\tau_i),
\end{equation}
where $\{p_i\}$ is a probability distribution and $\{\tau_i\}$ is a set of orthogonal states, the entropy difference in~\eqref{sufficient:general} is
\begin{align*}
    & S(q \mc J^{\mathrm{sym}}_{\mc E_{\lambda,d}} + (1-q)\mc J^{\mathrm{asym}}_{\mc E_{\lambda,d}}) - qS(\mc J^{\mathrm{sym}}_{\mc E_{\lambda,d}}) - (1-q) S(\mc J^{\mathrm{asym}}_{\mc E_{\lambda,d}}) \\
    & = \left[(1-\lambda)q \log d_{\sym} + (1-\lambda)(1-q) \log d_{\asym} + \lambda \log d + h(\lambda) + (1-\lambda) h(q)  \right] \\
    & \hspace{0.5cm} - q \left[ h(\lambda) + (1-\lambda) \log d_\sym +\lambda \log d  \right] - (1-q) \left[ h(\lambda) + (1-\lambda) \log d_\asym +\lambda \log d  \right] \\
    & = (1-\lambda) h(q).
\end{align*}

Therefore, using Theorem \ref{thm:amplification}, the well-known fact that $\mc Q(\mc E_{\lambda,d}) = \max\{(1-2\lambda) \log d , 0\}$ and the upper bound for $\mc Q(\mc N_{q,d})$ given in Corollary~\ref{cor:upper bound private channel}, we have: 
\begin{corollary}\label{corollaryIII5}
    $\mc Q(\mc N_{q,d} \otimes \mc E_{\lambda,d}) > \mc Q(\mc N_{q,d}) + \mc Q(\mc E_{\lambda,d})$ if
\begin{align*}
    \log \left((d^2-1)(r_0 + |r_1|) +|r_0 + d r_1| + |r_1 + d r_0|\right) + \max\{(1-2\lambda) \log d , 0\} < 1 - \lambda h(q),
\end{align*}
with \begin{align*}
    r_0 = \frac{q}{d(d+1)} + \frac{1-q}{d(d-1)},\quad r_1 = \frac{q}{d(d+1)} - \frac{1-q}{d(d-1)}.
\end{align*}
\end{corollary}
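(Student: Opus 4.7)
The plan is to apply Theorem~\ref{thm:amplification} directly with $\mc M=\mc E_{\lambda,d}$, using the explicit erasure-channel computation already performed in the preceding paragraphs of this subsection, and then substitute a known formula for $\mc Q(\mc E_{\lambda,d})$ together with the transposition-based upper bound on $\mc Q(\mc N_{q,d})$ obtained in Corollary~\ref{cor:upper bound private channel}.

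Concretely, I would first recall from the discussion just above the corollary that for the flagged erasure channel $\mc E_{\lambda,d}$ the symmetric/antisymmetric Choi states are
\[
\mc J^{\sym}_{\mc E_{\lambda,d}}=(1-\lambda)\frac{P_{\sym}}{d_{\sym}}+\lambda\,\ketbra{e}{e}\otimes\frac{\mb I_d}{d},\qquad
\mc J^{\asym}_{\mc E_{\lambda,d}}=(1-\lambda)\frac{P_{\asym}}{d_{\asym}}+\lambda\,\ketbra{e}{e}\otimes\frac{\mb I_d}{d},
\]
and that the symmetric/antisymmetric projectors are orthogonal while the erasure flag $\ket{e}$ makes the second summand orthogonal to each symmetric/antisymmetric component. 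Applying the orthogonal-mixture entropy identity~\eqref{eqn:entropy orthogonal} twice (once inside each of $S(\mc J^{\sym}_{\mc E_{\lambda,d}})$ and $S(\mc J^{\asym}_{\mc E_{\lambda,d}})$, and once to the convex combination $q\mc J^{\sym}_{\mc E_{\lambda,d}}+(1-q)\mc J^{\asym}_{\mc E_{\lambda,d}}$) yields exactly the cancellation carried out in the excerpt, giving
\[
H(\mc J^{\sym}_{\mc E_{\lambda,d}},\mc J^{\asym}_{\mc E_{\lambda,d}};q,1-q)=(1-\lambda)h(q).
\]

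Plugging this into the sufficient condition~\eqref{sufficient:general} of Theorem~\ref{thm:amplification}, the left-hand side collapses to
\[
1-h(q)+(1-\lambda)h(q)=1-\lambda h(q).
\]
Next I would use the well-known expression $\mc Q(\mc E_{\lambda,d})=\max\{(1-2\lambda)\log d,0\}$ for the $d$-dimensional erasure channel and bound $\mc Q(\mc N_{q,d})$ by Corollary~\ref{cor:upper bound private channel}, namely $\mc Q(\mc N_{q,d})\le \log\!\bigl((d^2-1)(r_0+|r_1|)+|r_0+dr_1|+|r_1+dr_0|\bigr)$ with $r_0,r_1$ as in the statement. Combining these two upper bounds, the hypothesis displayed in the corollary is precisely a sufficient condition for $1-\lambda h(q)>\mc Q(\mc N_{q,d})+\mc Q(\mc E_{\lambda,d})$, and hence for the strict inequality \eqref{sufficient:general}.

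There is essentially no analytic obstacle here: the proof is a clean bookkeeping step that chains together Theorem~\ref{thm:amplification}, the orthogonal-mixture entropy identity, and the upper bound of Corollary~\ref{cor:upper bound private channel}. The only mildly delicate point to double-check is that the erasure flag subsystem produced by $\mc E_{\lambda,d}$ and the key subsystem $a_0b_0$ of $\mc N_{q,d}$ remain disjoint in the tensor product, so that the orthogonality required by \eqref{eqn:entropy orthogonal} is preserved when computing the entropies of $q\mc J^{\sym}_{\mc E_{\lambda,d}}+(1-q)\mc J^{\asym}_{\mc E_{\lambda,d}}$; this is immediate from the block structure of $P_{\sym}+P_{\asym}=\mb I$ on $A_0B_0$ together with the orthogonality of $\ketbra{e}{e}$ to the input subspace.
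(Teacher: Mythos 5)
Your proposal is correct and follows the same route the paper intends: substitute the Holevo information $(1-\lambda)h(q)$ computed just above the corollary into the sufficient condition of Theorem~\ref{thm:amplification}, simplify $1-h(q)+(1-\lambda)h(q)$ to $1-\lambda h(q)$, and then replace $\mc Q(\mc N_{q,d})$ by the transposition bound of Corollary~\ref{cor:upper bound private channel} and $\mc Q(\mc E_{\lambda,d})$ by $\max\{(1-2\lambda)\log d,0\}$. One small remark: the orthogonality needed for \eqref{eqn:entropy orthogonal} is entirely between the three supports $P_{\sym}$, $P_{\asym}$, and $\ketbra{e}{e}\otimes\mb I_d$ inside $C_0B_0$ (which is what you actually invoke at the end), so the mention of the key subsystem $a_0b_0$ there is a red herring rather than an obstacle.
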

As shown in Fig.~\ref{fig:corollaryIII5}, we see that even for $d=2$, one has $\mc Q(\mc N_{q,d} \otimes \mc E_{\lambda,d}) > \mc Q(\mc N_{q,d}) + \mc Q(\mc E_{\lambda,d})$ for some parameters $\lambda,q \in (0,1)$. In \cite{Smith_2008}, the least dimension of the erasure channel for superactivation is $d=3$. 

\begin{figure}[ht]
    \centering
    \includegraphics[width=1\linewidth]{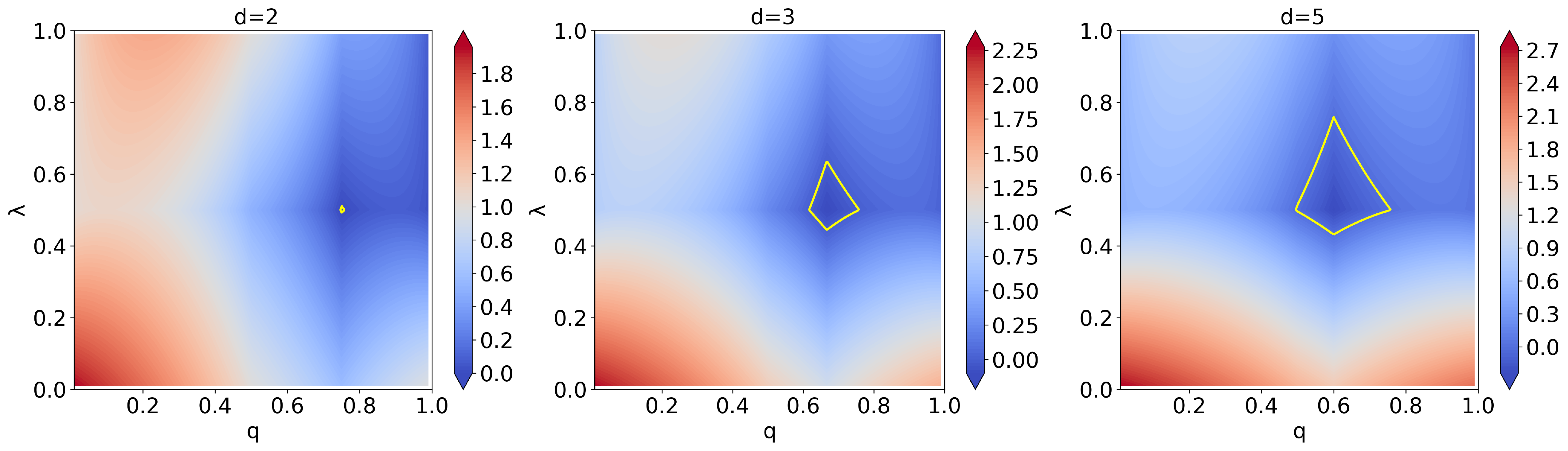}
    \caption{2D plots of the quantity (LHS – RHS) for the inequality $\log \left((d^2-1)(r_0 + |r_1|) + |r_0 + d r_1| + |r_1 + d r_0|\right) + \max\{(1-2\lambda)\log d, 0\} < 1 - \lambda h(q)$ in Corollary \ref{corollaryIII5}, where LHS and RHS denote the left- and right-hand sides of the inequality, respectively. The plots show (LHS – RHS) as functions of erasure channel parameter $\lambda$ and private channel parameter  $q$ for different dimension $d$, with the yellow solid line indicating the contour where $\text{LHS} - \text{RHS} = 0$.}
    \label{fig:corollaryIII5}
\end{figure}

\begin{figure}[ht]
    \centering
    \includegraphics[width=1\linewidth]{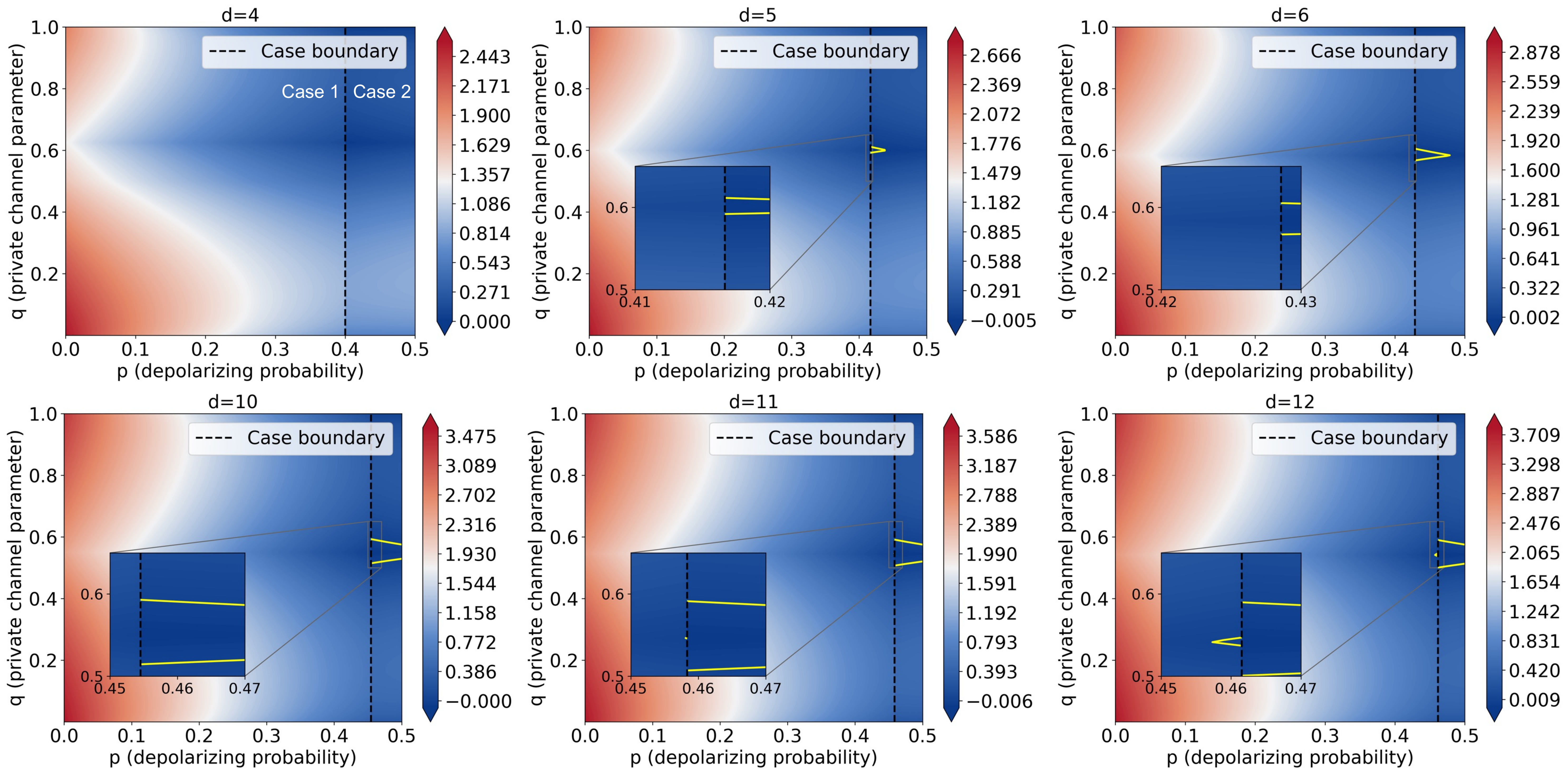}
    \caption{2D plots of the quantity (LHS – RHS) for the inequality in Corollary \ref{corollaryIII6}, where LHS and RHS denote the left- and right-hand sides of the inequality, respectively. The plots display (LHS – RHS) as functions of depolarizing probability $p$ and private channel parameter $q$ for different values of dimension $d$. The yellow solid line marks the contour where $\text{LHS} - \text{RHS} = 0$, while the two regimes are separated by the black dashed line indicating the case boundary.}
    \label{fig:corollaryIII6}
\end{figure}

\subsubsection*{Depolarizing channel}
When $\mc M = \mc D_{p,d}$, where $\mc D_{p,d}$ is the depolarizing channel defined by
\begin{equation}\label{depolarizing channel}
    \mc D_{p,d}(\rho):= (1-p) \rho + p \frac{\mb I_d}{d}.
\end{equation}
Then one can directly compute $\mc J_{\mc D_{p,d}}^{\sym}$ and $\mc J_{\mc D_{p,d}}^{\asym}$:
\begin{align*}
    & \mc J_{\mc D_{p,d}}^{\sym} = \frac{1}{2 d_{\sym}} [(1 + \frac{p}{d}) \mb I_d \otimes \mb I_d + (1-p) F] = \frac{1}{2 d_{\sym}} [(2 - p + \frac{p}{d}) P_{\sym} + (p + \frac{p}{d}) P_{\asym}],\\
    & \mc J_{\mc D_{p,d}}^{\asym} = \frac{1}{2 d_{\asym}} [(1 - \frac{p}{d}) \mb I_d \otimes \mb I_d - (1-p) F] = \frac{1}{2 d_{\asym}}[( p - \frac{p}{d}) P_{\sym} + (2 - p - \frac{p}{d}) P_{\asym}]
\end{align*}
Using the formula~\eqref{eqn:entropy orthogonal}, the entropy difference in~\eqref{sufficient:general} can be calculated by 
\begin{align*}
    & S(q \mc J^{\mathrm{sym}}_{\mc D_{p,d}} + (1-q)\mc J^{\mathrm{asym}}_{\mc D_{p,d}}) - qS(\mc J^{\mathrm{sym}}_{\mc D_{p,d}}) - (1-q) S(\mc J^{\mathrm{asym}}_{\mc D_{p,d}}) \\
    & = h\left((1-p)q + \frac{p}{2}(1+\frac{1}{d})\right) - q\cdot  h\left(1- \frac{p}{2}(1-\frac{1}{d})\right) - (1-q) \cdot h\left(\frac{p}{2}(1+\frac{1}{d})\right).
\end{align*}
Therefore, using Theorem \ref{thm:amplification}, a known upper bound on $\mc Q(\mc D_{p,d})$ \cite[Equation (10)]{Fanizza_2020}: 
\begin{equation}
    \mc Q(\mc D_{p,d}) \le \begin{cases}
        \log d + \eta(\frac{1}{2}) - \eta(\frac{1}{2} - \frac{d^2-1}{d^2}p) - (d^2-1)\eta(\frac{p}{d^2}),& p< \frac{d}{2(d+1)}, \\
        0, & p \ge \frac{d}{2(d+1)}
    \end{cases} 
\end{equation}
and the upper bound for $\mc Q(\mc N_{q,d})$ given in Corollary~\ref{cor:upper bound private channel}, we have: 
\begin{corollary}\label{corollaryIII6}
    $\mc Q(\mc N_{q,d} \otimes \mc D_{p,d}) > \mc Q(\mc N_{q,d}) + \mc Q(\mc D_{p,d})$ if: 
    \begin{itemize}
        \item Case 1: $p< \frac{d}{2(d+1)}$ and \begin{align*}
    & \log \left((d^2-1)(r_0 + |r_1|) +|r_0 + d r_1| + |r_1 + d r_0|\right) + \log d + \eta(\frac{1}{2}) - \eta(\frac{1}{2} - \frac{d^2-1}{d^2}p) - (d^2-1)\eta(\frac{p}{d^2}) \\
    & < 1 - h(q) + h\left((1-p)q + \frac{p}{2}(1+\frac{1}{d})\right) - q\cdot  h\left(1- \frac{p}{2}(1-\frac{1}{d})\right) - (1-q) \cdot h\left(\frac{p}{2}(1+\frac{1}{d})\right),\quad 
\end{align*}
with \begin{align*}
    r_0 = \frac{q}{d(d+1)} + \frac{1-q}{d(d-1)},\quad r_1 = \frac{q}{d(d+1)} - \frac{1-q}{d(d-1)}. 
\end{align*}
\item Case 2: $p\ge \frac{d}{2(d+1)}$ and 
\begin{align*}
    & \log \left((d^2-1)(r_0 + |r_1|) +|r_0 + d r_1| + |r_1 + d r_0|\right) \\
    & < 1 - h(q) + h\left((1-p)q + \frac{p}{2}(1+\frac{1}{d})\right) - q\cdot  h\left(1- \frac{p}{2}(1-\frac{1}{d})\right) - (1-q) \cdot h\left(\frac{p}{2}(1+\frac{1}{d})\right).
\end{align*}
    \end{itemize}
\end{corollary}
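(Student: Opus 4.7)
The plan is to apply Theorem~\ref{thm:amplification} directly, combining its sufficient condition with explicit computations of the symmetric/antisymmetric Choi states for the depolarizing channel, the upper bound on $\mc Q(\mc N_{q,d})$ from Corollary~\ref{cor:upper bound private channel}, and the Fanizza-et-al.\ upper bound on $\mc Q(\mc D_{p,d})$ quoted just before the statement. The structural observation driving the calculation is that $\mc D_{p,d}\otimes id$ preserves the sym/asym decomposition of $\mb C^d\otimes\mb C^d$, because the depolarizing channel is a convex combination of the identity and the completely depolarizing map; consequently both $\mc J^{\sym}_{\mc D_{p,d}}$ and $\mc J^{\asym}_{\mc D_{p,d}}$ are simultaneously diagonal with respect to $P_{\sym},P_{\asym}$, and every entropy in the argument reduces to an elementary binary entropy.

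First I would compute $\mc J^{\sym}_{\mc D_{p,d}}$ and $\mc J^{\asym}_{\mc D_{p,d}}$ explicitly. Using $(\mc D_{p,d}\otimes id)(X)=(1-p)X+p\,\frac{\mb I_d}{d}\otimes \tr_A(X)$ together with $\tr_A(P_{\sym})=\tfrac{d+1}{2}\mb I_d$ and $\tr_A(P_{\asym})=\tfrac{d-1}{2}\mb I_d$, one obtains precisely the formulas displayed in the excerpt. Writing each Choi state in the canonical form $\alpha\,P_{\sym}/d_{\sym}+(1-\alpha)\,P_{\asym}/d_{\asym}$, one reads off $\alpha_{\sym}=1-\tfrac{p}{2}(1-\tfrac{1}{d})$ and $\alpha_{\asym}=\tfrac{p}{2}(1+\tfrac{1}{d})$, so both states are commuting mixtures supported on the sym/asym blocks.

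Next I would compute the Holevo quantity $H(\mc J^{\sym},\mc J^{\asym};q,1-q)$. Because $P_{\sym}/d_{\sym}$ and $P_{\asym}/d_{\asym}$ are maximally mixed on orthogonal subspaces, the entropy-of-orthogonal-mixtures identity \eqref{eqn:entropy orthogonal} gives $S(\alpha P_{\sym}/d_{\sym}+(1-\alpha)P_{\asym}/d_{\asym})=h(\alpha)+\alpha\log d_{\sym}+(1-\alpha)\log d_{\asym}$. In the Holevo combination the averaged coefficient is exactly $q\alpha_{\sym}+(1-q)\alpha_{\asym}=q(1-p)+\tfrac{p}{2}(1+\tfrac{1}{d})$, which causes the $\log d_{\sym}$ and $\log d_{\asym}$ contributions to cancel, leaving the closed-form $h$-difference appearing in the statement.

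Finally I would substitute into the sufficient condition $1-h(q)+H(\mc J^{\sym},\mc J^{\asym};q,1-q)>\mc Q(\mc N_{q,d})+\mc Q(\mc D_{p,d})$ from Theorem~\ref{thm:amplification}, upper-bounding $\mc Q(\mc N_{q,d})$ via Corollary~\ref{cor:upper bound private channel} and $\mc Q(\mc D_{p,d})$ via the Fanizza-et-al.\ bound. The two-case split on $p$ merely reflects the two regimes of that latter upper bound (positive for $p<d/(2(d+1))$ and trivially zero otherwise). No step is genuinely hard; the only place demanding attention is the bookkeeping that produces the cancellation of $\log d_{\sym}$ and $\log d_{\asym}$ in the Holevo term, which is exactly what makes the final inequality display in the compact form stated.
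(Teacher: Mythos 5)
Your proposal is correct and follows essentially the same route as the paper: compute $\mc J^{\sym}_{\mc D_{p,d}}$ and $\mc J^{\asym}_{\mc D_{p,d}}$, use the orthogonal-mixture entropy identity to reduce the Holevo term to binary entropies, and plug the result together with Corollary~\ref{cor:upper bound private channel} and the two-regime Fanizza et al.\ bound into the sufficient condition of Theorem~\ref{thm:amplification}. Your repackaging via the mixing parameters $\alpha_{\sym}=1-\tfrac{p}{2}(1-\tfrac1d)$, $\alpha_{\asym}=\tfrac{p}{2}(1+\tfrac1d)$, with the observation that the $\log d_{\sym}$, $\log d_{\asym}$ contributions cancel because the averaged coefficient is exactly $q\alpha_{\sym}+(1-q)\alpha_{\asym}$, is a slightly cleaner way to organize the bookkeeping but is the same calculation the paper performs.
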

As shown in Fig.~\ref{fig:corollaryIII6}, the least dimension to see quantum capacity amplification is $d=5$ for the antidegradable region $p\ge \frac{d}{2(d+1)}$; for the region where $p\le \frac{d}{2d+2}$, the least dimension to see quantum capacity amplification is $d=11$.

\section{Gap between quantum capacity and private capacity}\label{sec:single letter}

In this section, we investigate the fundamental question of the separation between quantum and private capacities. It has been shown that such a separation can exist~\cite{Horodecki_2005}, and later that the gap can even be infinite~\cite{Leung_2014}, where the quantum capacity remains bounded by a constant while the private capacity diverges. Here, we provide an even stronger example. Our analysis begins with the derivation of the quantum capacity of $\mc N_{q,d}$ defined in~\eqref{eqn:channel}. Using the \emph{Spin Alignment Conjecture} (SAC), first proposed in~\cite{leditzky2023platypus} and reviewed in Conjecture~\ref{conj:SAC}, we show that the capacity is single-letter in a specific regime (that is, $\mc Q(\mc N)=\mc Q^{(1)}(\mc N)$), even though the channel is neither degradable nor anti-degradable. Building on this result, we construct channels whose quantum capacity vanishes while their private capacity diverges, demonstrating a sharper manifestation of the fundamental separation between the two capacities.
More specifically, we construct a family of channels $\{\mc M_n=\mc M_n^{A_n\to B_n}\}_{n\ge 1}$ such that
\begin{equation}
    \mc Q(\mc M_n)=\frac{1}{n}\to 0,\qquad \mc P(\mc M_n)=n\to \infty.
\end{equation}
\subsection{Single-letter quantum capacity using Spin alignment conjecture}
We choose a special $q = \frac{d+1}{2d}$ in \eqref{eqn:channel matrix general} and denote 
\begin{align*}
    \mc N:=\mc N_{\frac{d+1}{2d},d}.
\end{align*}
In this case, we have a simpler expression: for any density operator $\rho = \begin{pmatrix}
    X_{00} & X_{01} \\
    X_{10} & X_{11}
\end{pmatrix} \in \mc D(\mb C^2 \otimes \mb C^d)$, where $X_{rs} \in \mc B(\mb C^d),\ r,s\in \{0,1\}$, we have 
\begin{equation}\label{def:private channel 1}
    \mc N(\rho) = \frac{1}{d}\begin{pmatrix}
    \tr(X_{00})\mb I_d & X_{01}^T \\
    X_{10}^T & \tr(X_{11})\mb I_d \end{pmatrix}= \sum_{i,j=0}^{d-1} K_{ij}\, \rho \, K_{ij}^{\dagger},\  
\end{equation}
where $K_{ij} = \frac{1}{\sqrt{d}} \left(\ket{0j}^{b_0B_0}\bra{0i}^{a_0A_0} + \ket{1i}^{b_0B_0}\bra{1j}^{a_0A_0} \right)$.
The complementary channel $\mc N^c: \mc D(\mb C^d \otimes \mb C^d) \to \mc D(\mb C^d \otimes \mb C^d)$ is given by 
\begin{equation}\label{eqn:complementary}
\begin{aligned}
    \mc N^c(\rho) & = \sum_{i,j,i'j' = 0}^{d-1} \tr(K_{ij} \rho K_{i'j'}^{\dagger}) \ketbra{ij}{i'j'} = \sum_{i,j,i'j' = 0}^{d-1} \tr(K_{i'j'}^{\dagger} K_{ij} \rho ) \ketbra{ij}{i'j'} \\
    & = \frac{1}{d}\sum_{i,j,i'j' = 0}^{d-1} \tr\left((\ketbra{0i'}{0j'} + \ketbra{1j'}{1i'}) (\ketbra{0j}{0i} + \ketbra{1i}{1j}) \rho \right) \ketbra{ij}{i'j'} \\
    & = \frac{1}{d}\sum_{i,j,i'j' = 0}^{d-1} \left(\delta_{jj'}\tr(\ketbra{0i'}{0i}\rho) + \delta_{ii'}\tr(\ketbra{1j'}{1j}\rho) \right) \ketbra{ij}{i'j'} \\
    & = \frac{1}{d}\sum_{i,j,i'j' = 0}^{d-1} \left(\delta_{jj'} \bra{i}X_{00} \ket{i'} + \delta_{ii'}\bra{j}X_{11} \ket{j'} \right) \ketbra{ij}{i'j'} \\
    & = \frac{1}{d} (X_{00} \otimes \mb I_d + \mb I_d \otimes X_{11}).
\end{aligned}
\end{equation}
Based on the calculation of $\mc N$ and $\mc N^c$, we can show the following:
\begin{lemma}
    For the quantum channel defined in \eqref{def:private channel 1}, we have 
    \begin{equation}
        \mc Q^{(1)}(\mc N) = \frac{1}{d}.
    \end{equation}
\end{lemma}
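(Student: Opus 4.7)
The plan is to evaluate $\mc Q^{(1)}(\mc N) = \max_\rho [S(\mc N(\rho)) - S(\mc N^c(\rho))]$ directly from the block formulas \eqref{def:private channel 1} and \eqref{eqn:complementary}. The first and crucial observation is that $\mc N^c(\rho)$ depends only on the diagonal blocks $X_{00}, X_{11}$ of $\rho$. For fixed $X_{00}, X_{11}$ with $p_a := \tr(X_{aa})$, the matrix $\mc N(\rho)$ has diagonal entries $p_0/d$ and $p_1/d$ (each with multiplicity $d$) in the standard basis, so by Schur--Horn its eigenvalues majorize this diagonal, and Schur-concavity of Shannon entropy then gives $S(\mc N(\rho)) \le h(p_0) + \log d$, with equality iff the off-diagonal blocks $X_{01}, X_{10}$ vanish. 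Since $X_{01} = 0$ is always compatible with positivity of $\rho$, I may restrict without loss of generality to classical-quantum inputs
\[
\rho = p_0 \ketbra{0}{0} \otimes \tau_0 + p_1 \ketbra{1}{1}\otimes \tau_1,
\]
for which $S(\mc N(\rho)) = h(p_0) + \log d$ exactly, and the problem reduces to minimizing $S(\sigma)$ where $\sigma := \mc N^c(\rho) = \tfrac{p_0}{d}\tau_0\otimes \mb I_d + \tfrac{p_1}{d}\mb I_d\otimes \tau_1$ over density operators $\tau_0, \tau_1$ on $\mb C^d$.

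Because $\sigma$ is affine in $(\tau_0, \tau_1)$ and the von Neumann entropy is concave, its minimum over the convex set of density matrices is attained at extreme points, i.e., on pure shields $\tau_0 = \ketbra{u}{u}$, $\tau_1 = \ketbra{v}{v}$. For such pure inputs, $\sigma = \tfrac{p_0}{d}P + \tfrac{p_1}{d}Q$ with rank-$d$ projectors $P = \ketbra{u}{u}\otimes \mb I_d$ and $Q = \mb I_d\otimes\ketbra{v}{v}$. Their joint support decomposes orthogonally into the line $\mathrm{span}\{\ket{u,v}\}$, the $(d-1)$-dimensional subspace $\{\ket{u}\otimes\ket{w} : \braket{v}{w}=0\}$, and the $(d-1)$-dimensional subspace $\{\ket{w'}\otimes\ket{v} : \braket{u}{w'}=0\}$; each of these is an eigenspace of $\sigma$ with respective eigenvalues $\tfrac{1}{d}, \tfrac{p_0}{d}, \tfrac{p_1}{d}$, and the complementary $(d-1)^2$-dimensional subspace carries eigenvalue $0$. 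Remarkably, this spectrum is independent of the overlap $\braket{u}{v}$, yielding
\[
S(\sigma) = \log d + \tfrac{d-1}{d}h(p_0),
\]
and hence
\[
I_c(\rho,\mc N) \le [h(p_0) + \log d] - \left[\log d + \tfrac{d-1}{d}h(p_0)\right] = \tfrac{h(p_0)}{d} \le \tfrac{1}{d}.
\]

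For the matching lower bound the ansatz $\rho = \tfrac{1}{2}(\ketbra{0,0}{0,0} + \ketbra{1,1}{1,1})$ simultaneously saturates both inequalities (it is classical-quantum with $p_0 = \tfrac{1}{2}$ and pure orthogonal shields $\tau_0 = \ketbra{0}{0}, \tau_1 = \ketbra{1}{1}$), so $I_c = \tfrac{1}{d}$ and therefore $\mc Q^{(1)}(\mc N) = \tfrac{1}{d}$. The main obstacle I anticipate is the spectral step: the whole argument hinges on the geometric fact that the two subspaces $\{\ket{u}\otimes\ket{w} : \braket{v}{w}=0\}$ and $\{\ket{w'}\otimes\ket{v} : \braket{u}{w'}=0\}$ are automatically orthogonal---a consequence of the two tensor factors being orthogonal on different legs---which turns the joint support into an orthogonal direct sum and makes the eigenvalues of $\sigma$ independent of $\braket{u}{v}$. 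Without this cancellation, the upper bound would depend on $\braket{u}{v}$ and could fail to match the lower bound, so the single-letter formula ultimately rests on this one observation.
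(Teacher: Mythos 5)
Your proof is correct and follows the same overall strategy as the paper's: use the block formulas for $\mc N$ and $\mc N^c$, observe that $\mc N^c$ sees only the diagonal blocks $X_{00},X_{11}$, reduce to inputs with $X_{01}=X_{10}=0$, then reduce to pure shields, and finally compute the spectrum. Where you differ is in how the second reduction (to pure shields) is justified: the paper asserts a majorization relation $X_{00}\otimes\mb I_d+\mb I_d\otimes X_{11}\prec p\ketbra{\psi_0}{\psi_0}\otimes\mb I_d+\mb I_d\otimes(1-p)\ketbra{\psi_1}{\psi_1}$ ``by definition of majorization'' without proof, whereas you note that $\sigma=\mc N^c(\rho)$ is affine in $(\tau_0,\tau_1)$ and the von Neumann entropy is concave, so the minimum is attained at extreme points of the convex set of density operators---i.e., pure shields. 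This is arguably a cleaner and more self-contained justification than the paper's. Your spectral analysis is also more explicit than the paper's one-line eigenvalue list: the observation you flag as the ``main obstacle''---that the subspaces $\{\ket{u}\otimes\ket{w}:w\perp v\}$ and $\{\ket{w'}\otimes\ket{v}:w'\perp u\}$ are automatically orthogonal, so the spectrum of $\tfrac{p_0}{d}P+\tfrac{p_1}{d}Q$ is $\{\tfrac1d,\tfrac{p_0}{d}\,(d-1\text{ times}),\tfrac{p_1}{d}\,(d-1\text{ times}),0\}$ independently of $\braket{u}{v}$---is correct, and is the same fact the paper uses implicitly when it writes the diagonal entropy of $\mc N^c$ without specifying $\ket{\psi_0},\ket{\psi_1}$. (It follows also from the simple remark that $\ketbra{u}{u}\otimes\mb I_d$ and $\mb I_d\otimes\ketbra{v}{v}$ act on disjoint tensor legs and therefore commute.) In short: same route, but you fill in two gaps the paper leaves terse, and the concavity/extreme-point step is a welcome improvement over the paper's unexplained majorization claim.
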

\begin{proof}
    Recall that $$\mc Q^{(1)}(\mc N) = \max_{\rho} S(\mc N(\rho)) -S(\mc N^c(\rho)) = \max_{\rho = \begin{pmatrix}
    X_{00} & X_{01} \\
    X_{10} & X_{11}
\end{pmatrix}} \left[S\left(\frac{1}{d}\begin{pmatrix}
    \tr(X_{00})\mb I_d & X_{01}^T \\
    X_{10}^T & \tr(X_{11})\mb I_d
\end{pmatrix} \right) - S\left(\frac{1}{d} (X_{00} \otimes \mb I_d + \mb I_d \otimes X_{11})\right) \right].$$
Since the second term does not involve $X_{01}, X_{10}$, thus we can assume $X_{01}= X_{10} = 0$, i.e., 
\begin{align}\label{key step 1:diagonalize}
    \mc Q^{(1)}(\mc N) = \max_{\rho = \begin{pmatrix}
    X_{00} & 0 \\
    0 & X_{11}
\end{pmatrix}} \left[S\left(\frac{1}{d}\begin{pmatrix}
    \tr(X_{00})\mb I_d & 0\\
    0 & \tr(X_{11})\mb I_d
\end{pmatrix}\right) - S\left(\frac{1}{d} (X_{00} \otimes \mb I_d + \mb I_d \otimes X_{11})\right) \right].
\end{align}
In fact, this follows from the majorization relation (see \cite[Problem II.5.5]{bhatia2013matrix})
\begin{equation}
    \begin{pmatrix}
    \tr(X_{00})\mb I_d & 0 \\
    0 & \tr(X_{11})\mb I_d
\end{pmatrix} \prec \begin{pmatrix}
    \tr(X_{00})\mb I_d & X_{01}^T \\
    X_{10}^T & \tr(X_{11})\mb I_d
\end{pmatrix},
\end{equation}
and via Schur concavity of von Neumann entropy, 
$$S\left(\frac{1}{d}\begin{pmatrix}
    \tr(X_{00})\mb I_d & X_{01}^T \\
    X_{10}^T & \tr(X_{11})\mb I_d
\end{pmatrix}\right) \le S\left(\frac{1}{d}\begin{pmatrix}
    \tr(X_{00})\mb I_d & 0 \\
    0 & \tr(X_{11})\mb I_d
\end{pmatrix}\right).$$
By definition of majorization, for any $X_{00},X_{11} \ge 0$ with $\tr(X_{00}) = p,\ \tr(X_{11}) = 1-p,\ p\in [0,1]$, we have
\begin{align*}
    X_{00} \otimes \mb I_d + \mb I_d \otimes X_{11} \prec p \ketbra{\psi_0}{\psi_0} \otimes \mb I_d + \mb I_d \otimes (1-p)\ketbra{\psi_1}{\psi_1},
\end{align*}
where $\ket{\psi_0},\ket{\psi_1}$ are arbitary pure states on $\mb C^d$, which implies 
\begin{equation}\label{key step 2:purify}
    S\left(\frac{1}{d} (X_{00} \otimes \mb I_d + \mb I_d \otimes X_{11})\right) \ge S\left(\frac{1}{d} (p \ketbra{\psi_0}{\psi_0} \otimes \mb I_d + \mb I_d \otimes (1-p)\ketbra{\psi_1}{\psi_1})\right).
\end{equation}
Therefore, via \eqref{key step 1:diagonalize} and \eqref{key step 2:purify}, we have 
\begin{align*}
     \mc Q^{(1)}(\mc N) &= \max_{\rho = \begin{pmatrix}
   p \ketbra{\psi_0}{\psi_0} & 0 \\
    0 & (1-p)\ketbra{\psi_1}{\psi_1}
\end{pmatrix}} \left[S\left(\frac{1}{d}\begin{pmatrix}
    p\mb I_d & 0\\
    0 & (1-p)\mb I_d
\end{pmatrix}\right) - S\left(\frac{1}{d} (p \ketbra{\psi_0}{\psi_0} \otimes \mb I_d + \mb I_d \otimes (1-p)\ketbra{\psi_1}{\psi_1}\right) \right] \\
& = \max_{p\in [0,1]} \left[S\left(\text{diag}\{\underbrace{\frac{p}{d},\cdots,\frac{p}{d}}_{d\text{\ many}},\underbrace{\frac{1-p}{d} \cdots, \frac{1-p}{d}}_{d\text{\ many}}\}\right) - S\left(\text{diag}\{\frac{1}{d}, \underbrace{\frac{p}{d},\cdots,\frac{p}{d}}_{d-1\text{\ many}},\underbrace{\frac{1-p}{d} \cdots, \frac{1-p}{d}}_{d-1\text{\ many}}\}\right) \right]\\
& = \max_{p\in [0,1]} \left[\log d + h(p) - (\log d + \frac{d-1}{d}h(p))\right] \\
& = \max_{p\in [0,1]} \frac{h(p)}{d}.
\end{align*}
Note that $h(p) \le 1$ with equality given by $p = \frac{1}{2}$, thus $\mc Q^{(1)}(\mc N) = \frac{1}{d}$.
\end{proof}
To calculate $\mc Q^{(1)}(\mc N^{\otimes n})$, we use the Spin Alignment Conjecture proposed in \cite{leditzky2023platypus}, and progress on resolving this conjecture can be seen in \cite{Alhejji_2025, Alhejji_2024}. Suppose $\sigma = \sum_{k=1}^d \lambda_k \ketbra{e_k}{e_k}$ is a density operator on $\mb C^d$ and $n \ge 1$. For each $M \subseteq \{1,2,\cdots,n\}$, let $M^c$ be the complement of $M$. We use $\om_{M} \ot \sigma^{\ot M^c}$ to denote a state on $(\mb C^d)^{\ot n}$ where
each subsystem labelled in $M^c$ is in the state $\sigma$, and the spins in $M$ are in a joint
state given by the density matrix $\om_{M}$. Let $\{x_M\}_{M\subseteq \{1,2,\cdots,n\}}$ be a probability distribution, that is, 
\begin{align}
    \sum_{M\subseteq \{1,2,\cdots,n\}} x_M = 1 \,, x_M \ge 0.
    \label{eq:x-distr}
\end{align}
The goal is to minimze the von Neuman entropy of $\kappa = \sum_{M} x_M   \om_{M} \ot \sigma^{\ot M^c}$, where $\om_M$ are variables (states). Formally, the {\bf entropy
minimization problem} is given by 
\begin{align}
    \label{spin alignment problem}
    & \min \{ S(\kappa): \kappa =  \sum_{M \subseteq \{1,2,\cdots, n\}} x_M   \om_{M} \ot \sigma^{\ot M^c},\quad  \om_{M}  \geq 0, \quad \Tr(\om_{M})= 1. \}
\end{align}

\begin{center}
\begin{tikzpicture}
    \def\dx{2} 

    \node[circle, draw, minimum size=1cm] (M1) at (1*\dx,0) {\Large $\mb C^d$};
    \node[circle, draw, minimum size=1cm] (M2) at (2*\dx,0) {\Large $\mb C^d$};
    \node[circle, draw, minimum size=1cm] (M3) at (3*\dx,0) {\Large $\mb C^d$};
    \node[circle, draw, minimum size=1cm] (Q1) at (4*\dx,0) {\Large $\mb C^d$};
    \node[circle, draw, minimum size=1cm] (Q2) at (5*\dx,0) {\Large $\mb C^d$};
    \node[circle, draw, minimum size=1cm] (Q3) at (6*\dx,0) {\Large $\mb C^d$};

    \draw[-] (M1.east) -- (M2.west);
    \draw[-] (M2.east) -- (M3.west);
    \draw[-] (M3.east) -- (Q1.west);
    \draw[-] (Q1.east) -- (Q2.west);
    \draw[-] (Q2.east) -- (Q3.west);

    \node[above] at (M2.north) { $\omega_M$}; 
    \node[above] at (Q1.north) { $\sigma$};
    \node[above] at (Q2.north) { $\sigma$};
    \node[above] at (Q3.north) { $\sigma$};

    \node[below] at (2*\dx,-1) {$\underbrace{\hspace{5.2cm}}_{M}$};
    \node[below] at (5*\dx,-1) {$\underbrace{\hspace{5.2cm}}_{M^c}$};
\end{tikzpicture}
\end{center}

\begin{conjecture}[Spin Alignment Conjecture] \label{conj:SAC}
    For any fixed probability distribution $\{x_M\}_{M\subseteq \{1,2,\cdots,n\}}$, the entropy minimization problem in \eqref{spin alignment problem} is achieved at the state 
    \begin{equation}\label{eqn:spin alignment}
        \kappa = \sum_{M\subseteq \{1,2,\cdots,n\}} x_M   \ketbra{e_{k_0}}{e_{k_0}}^{\ot M} \ot \sigma^{\ot M^c},
    \end{equation}
    where $\ket{e_{k_0}}$ is the eigenvector corresponding to the maximal eigenvalue of $\sigma$. 
\end{conjecture}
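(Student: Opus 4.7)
The plan is to split the argument into two stages: first reduce the candidate minimizers to pure states $\om_M=\ketbra{\psi_M}{\psi_M}$, then establish that each such pure state factorizes as $\bigotimes_{i\in M}\ket{\phi_i^M}$ with every $\ket{\phi_i^M}=\ket{e_{k_0}}$. Stage 1 is essentially immediate from concavity; Stage 2 contains all the substance and is where a serious convex-analytic or majorization-based tool is required.

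\textbf{Stage 1 (reduction to pure states).} For each index $M$, the map $\om_M\mapsto \kappa=\sum_{M'} x_{M'}\,\om_{M'}\otimes \sigma^{\otimes (M')^c}$ is affine, and $\kappa\mapsto S(\kappa)$ is concave. Hence, fixing all $\om_{M'}$ with $M'\neq M$, the objective is a concave function on the convex set of density operators on the subsystems labelled by $M$, so a minimum is attained at an extreme point, i.e.\ a pure state. Iterating this replacement over the $2^n$ subsets (each step only decreases the objective, so a starting global minimizer remains a global minimizer) produces a global minimizer in which $\om_M$ is pure for every $M$ simultaneously.

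\textbf{Stage 2 (factorization and alignment).} I would first attempt the product form $\ket{\psi_M}=\bigotimes_{i\in M}\ket{\phi_i^M}$ via a first-order stationarity analysis: differentiating $S(\kappa)=-\tr(\kappa\log\kappa)$ in $\om_M$ yields a condition involving the operator $\tr_{M^c}\!\bigl[(\mathbb{I}^M\otimes \sigma^{\otimes M^c})\log\kappa\bigr]$, and extremality within pure $\om_M$ ought to force this reduced operator to be a rank-one product on $M$. Once factorization is secured, I would pin down the direction via a majorization bound: the candidate state $\kappa^\star=\sum_M x_M\,\ketbra{e_{k_0}}{e_{k_0}}^{\otimes M}\otimes \sigma^{\otimes M^c}$ is block-diagonal in the $\sigma$-eigenbasis with a clean Gibbs-like spectrum, and one tries to show that the eigenvalue vector of any competing pure-product $\kappa$ is majorized by that of $\kappa^\star$, at which point Schur concavity of $S$ concludes. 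A parallel route is induction on $n$: split the family $\{x_M\}$ according to whether $n\in M$, marginalize the $n$-th tensor factor, and recast the objective as a smaller instance of the same problem plus a tensor with $\sigma$ — the induced distribution on subsets of $\{1,\dots,n-1\}$ fits the hypothesis, so the induction hypothesis could be applied site by site.

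\textbf{Main obstacle.} The hard point is Stage 2. Different $\om_M$ are coupled through the spectrum of the single operator $\kappa$, so a purely local variational argument on one $\om_M$ at a time cannot simultaneously enforce product structure \emph{and} common alignment with $\ket{e_{k_0}}$: changing one $\ket{\psi_M}$ reshuffles eigenvectors in a way that depends on all other $\om_{M'}$. What one really needs is either a global majorization reduction, which appears delicate because $\kappa^\star$ is not block-diagonal in any basis making the comparison straightforward, or an operator-convexity identity strong enough to absorb the non-product correlations between subsets. This is precisely the obstruction noted in~\cite{leditzky2023platypus}, and the partial progress in~\cite{Alhejji_2024, Alhejji_2025} confirms that only special regimes of $\{x_M\}$ have so far yielded to these strategies.
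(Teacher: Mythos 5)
There is no proof to compare against: in the paper this statement is a \emph{conjecture} (the Spin Alignment Conjecture of Leditzky et al.~\cite{leditzky2023platypus}), explicitly left open, assumed as a hypothesis for Theorem~\ref{main:quantum capacity}, and flagged as unresolved in the Conclusion. The paper cites~\cite{Alhejji_2024, Alhejji_2025} only as partial progress on special parameter regimes. So any ``proof'' submitted here would necessarily either contain a gap or constitute a genuine resolution of an open problem.

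Your Stage~1 is correct but is the easy, standard part: the objective is concave in each $\om_M$ separately, so coordinate-wise replacement at extreme points yields a global minimizer in which every $\om_M$ is pure. This reduction appears already in the original formulation of the conjecture and carries no new content. Stage~2 is exactly where the conjecture lives, and your proposal does not actually carry it out: you sketch three possible routes (first-order stationarity forcing product structure, a global majorization comparison against $\kappa^\star$, and an induction on $n$) but then identify, correctly, that all three stall on the same coupling phenomenon, namely that the eigenstructure of $\kappa$ mixes contributions from every $M$, so local variational conditions on a single $\om_M$ do not propagate to a global alignment statement. That is not a minor technical wrinkle to be filled in later; it is precisely the obstruction that has kept the conjecture open, and none of the three proposed routes is pushed past it. In short, your write-up correctly diagnoses the difficulty but does not overcome it, so it is an outline of failed strategies rather than a proof. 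If your intent was to \emph{prove} the statement, the gap is Stage~2 in its entirety; if your intent was to explain why it is hard, the exposition is essentially accurate and consistent with the state of the literature.
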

\noindent Using the above conjecture, we are able to evaluate the quantum capacity of $\mc N$:
\begin{theorem}\label{main:quantum capacity}
    For the quantum channel defined in \eqref{def:private channel 1}, we have 
    \begin{equation}
       \mc Q^{(1)}(\mc N^{\otimes n}) = \frac{n}{d},\quad \forall n \ge 1.
    \end{equation}
    In particular, we have $\mc Q(\mc N) = \mc Q^{(1)}(\mc N)$.
\end{theorem}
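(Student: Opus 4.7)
The plan is to show $\mc Q^{(1)}(\mc N^{\otimes n}) = n/d$; the lower bound is immediate from tensoring the optimal single-copy input, so the substantive task is the matching upper bound. My route has four conceptual moves: (i) reduce to key-register block-diagonal inputs, (ii) evaluate $S(\mc N^{\otimes n}(\rho))$ in closed form, (iii) invoke the Spin Alignment Conjecture (SAC) to lower-bound $S((\mc N^c)^{\otimes n}(\rho))$, and (iv) conclude by identifying the coherent information with the equivocation of a product of binary erasure channels.

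For (i), formula \eqref{eqn:complementary} shows that $\mc N^c$ depends only on the $a_0$-diagonal blocks $X_{bb}$, so $(\mc N^c)^{\otimes n}$ is invariant under dephasing $\mc D^{a_0,\otimes n}$ of the key register; meanwhile, the intertwining $\mc N\circ \mc D^{a_0} = \mc D^{b_0}\circ \mc N$ (immediate from \eqref{def:private channel 1}) together with entropy monotonicity of $\mc D^{b_0,\otimes n}$ gives $S(\mc N^{\otimes n}(\rho)) \le S(\mc N^{\otimes n}(\mc D^{a_0,\otimes n}(\rho)))$. Hence it suffices to consider $\rho = \sum_{\mathbf b} p_{\mathbf b}\ket{\mathbf b}\!\bra{\mathbf b}^{a_0^n}\otimes \sigma_{\mathbf b}^{A_0^n}$. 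For (ii), iterating $\mc N(\ket{b}\!\bra{b}^{a_0}\otimes \ket{\alpha}\!\bra{\beta}^{A_0}) = (\delta_{\alpha\beta}/d)\,\ket{b}\!\bra{b}^{b_0}\otimes \mb I_d^{B_0}$ across tensor factors shows that only the $\vec\alpha=\vec\beta$ terms of $\sigma_{\mathbf b}$ survive, yielding $\mc N^{\otimes n}(\rho) = \sum_{\mathbf b}\tfrac{p_{\mathbf b}}{d^n}\ket{\mathbf b}\!\bra{\mathbf b}^{b_0^n}\otimes \mb I_{d^n}$ and hence $S(\mc N^{\otimes n}(\rho)) = H(\{p_{\mathbf b}\}) + n\log d$, \emph{independently} of the $\sigma_{\mathbf b}$.

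A parallel calculation (step (iii)) gives $(\mc N^c)^{\otimes n}(\ket{\mathbf b}\!\bra{\mathbf b}\otimes \sigma_{\mathbf b}) = \sigma_{\mathbf b}^{M_{\mathbf b}}\otimes (\mb I_d/d)^{\otimes M_{\mathbf b}^c}$, where $M_{\mathbf b}=\{(k,b_k)\}_{k=1}^n$ is an $n$-element subset of the $2n$ output slots of $(B'B'')^n$ (one slot per pair $(B'_k,B''_k)$, chosen by $b_k$). This matches Conjecture~\ref{conj:SAC} with $N=2n$ sites, $\sigma=\mb I_d/d$, and weights $x_{M_{\mathbf b}}=p_{\mathbf b}$; since any pure state is a top eigenvector of $\mb I_d/d$, SAC forces the minimizing choice $\sigma_{\mathbf b}=\ket{0}\!\bra{0}^{\otimes n}$ for every $\mathbf b$. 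In this \emph{aligned} configuration each $\Xi_{\mathbf b}$ is diagonal in the computational basis of $(\mb C^d)^{2n}$, and a bookkeeping of diagonal entries identifies the induced classical ensemble with the output of $n$ parallel binary erasure channels acting on $\mathbf b$ with per-bit erasure rate $\lambda=1/d$.

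For (iv), common-basis diagonality of the $\Xi_{\mathbf b}$ implies that the Holevo quantity $\chi = S(\rho_{\mathrm{out}}) - n\log d$ equals the classical mutual information $I(X;Y)$ of that product erasure channel with input $X\sim p_{\mathbf b}$. The coherent information therefore reduces to the classical equivocation:
\[
I_c \;=\; H(\{p_{\mathbf b}\}) + n\log d - S(\rho_{\mathrm{out}}) \;=\; H(X\mid Y) \;=\; \mb E_T\!\bigl[H(\mathbf b_T \mid \mathbf b_{T^c})\bigr],
\]
where $T\subseteq [n]$ is the random erased set (each index independently in $T$ with probability $\lambda$). Subadditivity and conditioning give $H(\mathbf b_T\mid \mathbf b_{T^c}) \le \sum_{k\in T}H(b_k) \le |T|$, so $I_c \le \lambda n = n/d$, matching the lower bound. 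Regularization then yields $\mc Q(\mc N)=\mc Q^{(1)}(\mc N)=1/d$. The principal obstacle is the reliance on SAC: conditional on it, the remaining steps are essentially dictated by the private-state structure of $\mc N$, and the erasure-equivocation interpretation is what makes the bound tight.
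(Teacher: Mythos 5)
Your proposal is essentially correct, and steps (i)--(iii) coincide with the paper's argument up to cosmetic reformulation: where you reduce to key-diagonal inputs via the intertwining $\mc N\circ\mc D^{a_0}=\mc D^{b_0}\circ\mc N$ and unitality of pinching, the paper invokes a majorization argument directly on the block form, but these are the same reduction; and the invocation of SAC (Conjecture~\ref{conj:SAC}) with $2n$ sites, $\sigma=\mb I_d/d$, and subsets $M_{\mathbf b}=\{(t,b_t)\}_t$ of size $n$ is exactly what the paper does. The genuine divergence is step (iv). The paper observes that the SAC-forced optimizer lies in $\operatorname{span}\{\ket{00},\ket{10}\}^{\otimes n}$, restricts $\mc N$ to that qubit subspace, asserts degradability of the restriction, and concludes $\mc Q^{(1)}(\mc N^{\otimes n})=n\,\mc Q^{(1)}(\mc N|_{\operatorname{span}\{\ket{00},\ket{10}\}})=n/d$ via additivity of degradable channels plus a sandwich with the $n=1$ computation. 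You instead notice that after SAC the complementary-channel outputs $\Xi_{\mathbf b}=\bigotimes_t\Pi_{b_t}(\ketbra{0}{0})$ all commute (they are diagonal in the computational basis of $(\mb C^d)^{\ot 2n}$), so the Holevo quantity equals classical mutual information, and the induced classical channel from $\mathbf b$ to the joint index is a product binary erasure channel with per-slot erasure probability $1/d$ (erasure event $(i_t,j_t)=(0,0)$ being input-independent). This identifies $I_c=H(X\mid Y)=\mb E_T[H(\mathbf b_T\mid\mathbf b_{T^c})]\le\mb E[\abs{T}]=n/d$, with equality for uniform $\mathbf b$. Both routes are valid. Your final step is more self-contained --- it avoids establishing degradability of the restricted channel and identifies $1/d$ as an erasure rate, which makes the numerics transparent --- whereas the paper's step is shorter in appearance but outsources the work to a degradability claim stated without proof. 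One small caution in your write-up: justify that passing to the coarse-grained output alphabet $\{0,1,e\}$ loses no mutual information, i.e., that the surviving non-erased index carries no additional information about $b_t$; this holds because conditional on non-erasure the unused coordinate is uniform on $\{1,\dots,d-1\}$ independently of $b_t$, but it is worth stating.
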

\begin{proof}
    For any density operator $\rho^n \in \mc D((\mb C^2 \otimes \mb C^d)^{\otimes n})$, swapping the subsystems, we decompose it as a state in $\mc D((\mb C^2)^{\otimes n} \otimes (\mb C^d)^{\otimes n})$:
    \begin{equation}\label{eqn:n-qudit state}
        \rho^n = \sum_{\vec x, \vec y \in \{0,1\}^n} \ketbra{\vec x}{\vec y} \otimes X_{\vec x,\vec y},\quad X_{\vec x,\vec y} \in \mc B((\mb C^d)^{\otimes n}). 
    \end{equation}
    Here $\ketbra{\vec x}{\vec y} = \ketbra{x_1 x_2 \cdots x_n}{y_1y_2\cdots y_n} \in \mc B((\mb C^2)^{\otimes n})$. Using \eqref{def:private channel 1}, we can decompose $\mc N^{\otimes n}$ as super-operator acting on $\mc B((\mb C^2)^{\otimes n} \otimes (\mb C^d)^{\otimes n} )$:
    \begin{equation}
        \mc N^{\otimes n}(\rho^n) =\sum_{\vec x,\vec y\in \{0,1\}^n} \ketbra{\vec x}{\vec y} \otimes \bigotimes_{t =1}^n\mc N_{x_t,y_t}(X_{\vec x, \vec y}),
    \end{equation}
    where $$\mc N_{r,s}(X):= \begin{cases}
        \frac{1}{d}\tr(X)\mb I_d, \quad r = s, \\
        \frac{1}{d} X^T,\quad r\neq s.
    \end{cases}$$
To calculate $(\mc N^c)^{\otimes n}$ where $\mc N^c$ is defined via \eqref{eqn:complementary}, denote $\Pi_0, \Pi_1:  \mc B(\mb C^d) \to \mc B(\mb C^d \otimes \mb C^d)$ by
\begin{align}
    \Pi_0(X) = \frac{1}{d}X\otimes \mb I_d,\quad  \Pi_1(X) = \frac{1}{d}\mb I_d \otimes X.
\end{align}
Then we have 
$$(\mc N^c)^{\otimes n}(\rho^n) = \sum_{\vec x\in \{0,1\}^n} \bigotimes_{t = 1}^n \Pi_{x_t}(X_{\vec x, \vec x}).$$
As a result $\mc Q^{(1)}(\mc N^{\otimes n})$ is calculated by 
\begin{align*}
    \mc Q^{(1)}(\mc N^{\otimes n}) = \sup_{\rho^n\ \text{given\ by}\ \eqref{eqn:n-qudit state}} S\left(\sum_{\vec x,\vec y\in \{0,1\}^n} \ketbra{\vec x}{\vec y} \otimes \bigotimes_{t =1}^n\mc N_{x_t,y_t}(X_{\vec x, \vec y}) \right) - S\left( \sum_{\vec x\in \{0,1\}^n} \bigotimes_{t = 1}^n \Pi_{x_t}(X_{\vec x, \vec x})\right).
\end{align*}
Note that the first entropy involves non-diagonal operators $X_{\vec x, \vec y}$ with $\vec x \neq \vec y$ and the second entropy only involves diagonal operators $X_{\vec x,\vec x}$. Therefore, via majorization argument, the supremum is achieved at state $\rho^n$ with the block diagonal form:
\begin{align*}
    \rho^n = \sum_{\vec x\in \{0,1\}^n} \ketbra{\vec x}{\vec x} \otimes X_{\vec x,\vec x}. 
\end{align*}
Denote $p_{\vec x} = \tr(X_{\vec x,\vec x})$, we have 
\begin{align*}
    \mc Q^{(1)}(\mc N^{\otimes n}) = \sup \left\{S\left(\frac{1}{d^n}\sum_{\vec x\in \{0,1\}^n} p_{\vec x}\ketbra{\vec x}{\vec x} \otimes \mb I_{d^n} \right) - S\left(\sum_{\vec x\in \{0,1\}^n} p_{\vec x}\bigotimes_{t = 1}^n \Pi_{x_t}(X_{\vec x, \vec x}/p_{\vec x})\right)\right\}
\end{align*}
By spin alignment conjecture \eqref{eqn:spin alignment}, the minimum entropy for the complementary channel is
\begin{align*}
    S\left( \sum_{\vec x \in \{0,1\}^n} p_{\vec x} \bigotimes_{t = 1}^n (\Pi_{x_t} (\ketbra{0}{0}))\right)
\end{align*}
Therefore, we show that the optimizer of $\mc Q^{(1)}(\mc N^{\otimes n})$ is given by 
\begin{equation}
    \sum_{\vec x \in \{0,1\}^n} p_{\vec x} \ketbra{\vec x}{\vec x} \otimes \ketbra{0^n}{0^n}.
\end{equation}
Note that \begin{equation}
    \mc N\big|_{span \{\ket{00}, \ket{10}\}}
\end{equation}
is a degradable channel, thus we have 
    $$\mc Q^{(1)}(\mc N^{\otimes n}) = \mc Q^{(1)}(\mc N\big|_{span \{\ket{00}, \ket{10}\}}^{\otimes n}) = n\mc Q^{(1)}(\mc N\big|_{span \{\ket{00}, \ket{10}\}}) = \frac{n}{d},$$
    where the last equality follows from 
    \begin{align*}
        \frac{1}{d} = I_c(\mc N\big|_{span \{\ket{00}, \ket{10}\}}, \frac{1}{2}(\ketbra{00}{00} + \ketbra{10}{10})) \le \mc Q^{(1)}(\mc N\big|_{span \{\ket{00}, \ket{10}\}}) \le \mc Q^{(1)}(\mc N) = \frac{1}{d}.
    \end{align*}
\end{proof}

\subsection{Construction of a channel with arbitrarily large private capacity and arbitrarily small quantum capacity} \label{subsec:construction}
In this subsection, we exploit the Theorem \ref{main:quantum capacity} to construct a class of channels $\{\mc M_n = \mc M_n^{A_n \to B_n}\}_{n\ge 1}$ such that

\begin{equation}
    \mc Q(\mc M_n) = \frac{1}{n} \to 0,\quad \mc P(\mc M_n) = n \to \infty,
\end{equation}
which further strengths the extensiveness of quantum and private capacity. To show this result, we first show that the private capacity of the channel~\eqref{eqn:channel matrix general} 
$$\mc N_{\frac{2d}{d+1},d} =: \mc N$$ is one. Intuitively, this channel is induced by a pbit and the private system has dimension two, thus the capability of send classical information privately is one bit per use:
\begin{prop}\label{main:private capacity}
    The private capacity of $\mc N$ is $1$, independent of $d$.
\end{prop}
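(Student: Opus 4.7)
The plan is to prove $\mc P(\mc N)=1$ by matching lower and upper bounds.

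For the lower bound, I would exhibit a single-letter private code using the binary ensemble $\{(\tfrac12,\rho_r)\}_{r\in\{0,1\}}$ with $\rho_r=\ketbra{r}{r}^{a_0}\otimes \mb I_d/d^{A_0}$. Using~\eqref{def:private channel 1}, the Bob-side outputs $\mc N(\rho_r)=\ketbra{r}{r}^{b_0}\otimes \mb I_d/d^{B_0}$ are mutually orthogonal, giving $I(\mc X;B)=1$. Using~\eqref{eqn:complementary}, the Eve-side outputs $\mc N^c(\rho_r)=\mb I_{d^2}/d^2$ are independent of $r$, so $I(\mc X;E)=0$. This ensemble achieves private information equal to $1$ bit, establishing $\mc P^{(1)}(\mc N)\ge 1$ and hence $\mc P(\mc N)\ge 1$.

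For the one-shot upper bound, the key structural observation is that the $b_0$-diagonal blocks of $\mc N(\rho)$ are always scalar multiples of the identity: $\mc N(\rho) = \tfrac{1}{d}\begin{pmatrix} q\mb I_d & X_{01}^T \\ X_{10}^T & (1-q)\mb I_d \end{pmatrix}$ with $q=\tr(X_{00})$. Diagonalizing via the singular value decomposition of $X_{01}^T$, the spectrum of $\mc N(\rho)$ consists of the values $\tfrac{1\pm s_k}{2d}$ for $k=1,\ldots,d$ with $s_k=\sqrt{(2q-1)^2+4\sigma_k^2}$, where $\sigma_k$ are the singular values of $X_{01}^T$; positivity of $\mc N(\rho)$ forces $s_k\in[0,1]$. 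Summing the contributions one obtains
\[
S(\mc N(\rho))=\log d+\frac{1}{d}\sum_{k=1}^d h\!\Big(\frac{1+s_k}{2}\Big)\ge \log d
\]
for every input $\rho$. Combined with the trivial bound $S(\mc N(\bar\rho))\le \log(2d)=1+\log d$, Bob's Holevo quantity satisfies $S(\mc N(\bar\rho))-\sum_x p_x S(\mc N(\rho_x))\le 1$. Because Eve's Holevo quantity is nonnegative, the one-shot private information satisfies $I_p = I(\mc X;B)-I(\mc X;E)\le 1$, yielding $\mc P^{(1)}(\mc N)\le 1$.

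The main obstacle is the regularization step: the private capacity $\mc P(\mc N)=\lim_{n\to\infty}\tfrac{1}{n}\mc P^{(1)}(\mc N^{\otimes n})$ is generally superadditive, so the one-shot bound does not automatically transfer. The pinching-based upper bound $S(\mc N^{\otimes n}(\bar\rho^n))\le n\log(2d)$ generalizes immediately, but the matching input-independent lower bound $S(\mc N^{\otimes n}(\rho^n))\ge n\log d$ amounts to additivity of the minimum output entropy, $S_{\min}(\mc N^{\otimes n})=n\log d$. Since $\mc N$ is neither degradable, anti-degradable, nor entanglement-breaking, standard additivity results do not apply. My plan is to mirror the argument of Theorem~\ref{main:quantum capacity} and invoke the Spin Alignment Conjecture~\cite{leditzky2023platypus}: identify the minimum output entropy problem for $\mc N^{\otimes n}$ with a spin-alignment minimization whose minimizer is the product input $\ketbra{0,0}{0,0}^{\otimes n}$, which achieves $S=n\log d$ and closes the upper bound.
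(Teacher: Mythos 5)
Your lower bound is correct and is exactly the paper's: the same ensemble $\rho_r=\ketbra{r}{r}^{a_0}\otimes \mb I_d/d$, with $I(\mc X;B)=1$ and $I(\mc X;E)=0$, yielding $\mc P^{(1)}(\mc N)\ge 1$. Your single-shot upper-bound calculation is also correct: the block form of $\mc N(\rho)$ indeed has spectrum $\{(1\pm s_k)/(2d)\}_{k=1}^d$ with $s_k\le 1$, so $S(\mc N(\rho))\ge\log d$, and $I(\mc X;B)\le \log(2d)-\log d=1$; with $I(\mc X;E)\ge 0$ this gives $\mc P^{(1)}(\mc N)\le 1$.

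The gap is in the regularization, and your proposed fix does not close it. You correctly observe that the missing ingredient is $S_{\min}(\mc N^{\otimes n})=n\log d$, i.e.\ additivity of minimum output entropy. But the Spin Alignment Conjecture (Conjecture~\ref{conj:SAC}) does not apply to this optimization. In the paper's proof of Theorem~\ref{main:quantum capacity}, SAC is invoked for the \emph{complementary} channel output, which after the block-diagonal reduction takes the form $\sum_{\vec x}p_{\vec x}\bigotimes_t\Pi_{x_t}(X_{\vec x,\vec x}/p_{\vec x})$ with $\Pi_{x_t}$ tensoring a maximally mixed factor — exactly the $\omega_M\otimes\sigma^{\otimes M^c}$ structure that SAC is formulated for. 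The minimum output entropy of the \emph{direct} channel $\mc N^{\otimes n}$ is a different problem: your own SVD analysis shows that the minimum is achieved on inputs with large off-diagonal $b_0$-blocks, and precisely those are discarded by the block-diagonal reduction that one would need to make before SAC is even in scope. So there is no reduction to a spin-alignment problem, and the proposed use of SAC is not justified.

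The paper sidesteps the regularization entirely by a structurally different argument: it upper-bounds $\mc P(\mc N)\le\mc C(\mc N)\le\log\beta(\mc N)$ using the SDP strong-converse bound on classical capacity from~\cite{Wang_2018}, and then exhibits an explicit feasible $(R^{AB},X^B)$ certifying $\beta(\mc N)\le 2$. Because $\beta$ is a single-letter, multiplicative SDP quantity, this bounds the \emph{regularized} classical capacity — and hence the private capacity — directly, with no conjectural input. This matters beyond economy of argument: the resulting statement $\mc P(\mc N)=1$ is unconditional, whereas the companion statement $\mc Q(\mc N)=1/d$ in Section~\ref{sec:single letter} is conditional on SAC. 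If the private-capacity upper bound also relied on SAC, the separation result $\mc Q(\mc M_n)\to 0,\ \mc P(\mc M_n)\to\infty$ would become doubly conjectural, which the paper is careful to avoid. You should replace your regularization step with the $\beta$-SDP bound, or otherwise supply a genuine proof that $S_{\min}(\mc N^{\otimes n})=n\log d$ — which, given that additivity of minimum output entropy is false in general and $\mc N$ is neither degradable, anti-degradable, nor entanglement-breaking, would be a nontrivial result in its own right.
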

\begin{proof}
    First we show that $\mc P(\mc N) \ge \mc P^{(1)}(\mc N) \ge 1$. Recall that the private information of the channel \(\mathcal{N}\) for the ensemble \(\{p_x, \rho_x^A\}\) is defined as:
\begin{align*}
\mc P^{(1)}(\mc N) := \sup_{\{p_x, \rho_x^A\}} I_p(\{p_x, \rho_x^A\}, \mathcal{N}),\quad I_p(\{p_x, \rho_x^A\}, \mathcal{N}) := I(\mc X; B)  -  I(\mc X; E).
\end{align*}
We choose the ensemble of states $\{p_x,\rho_x\}_{x = 0,1}$ with $p_0 = p_1 = \frac{1}{2}$ and 
\begin{align*}
    \rho_0 = \ketbra{0}{0} \otimes \mb I_d /d,\quad \rho_1 = \ketbra{1}{1} \otimes \mb I_d /d.
\end{align*}
Using the expressions for $\mc N,\mc N^c$, see~\eqref{def:private channel 1} and \eqref{eqn:complementary}, it is straightforward to calculate 
\begin{align*}
    & I_p(\{p_x, \rho_x\}, \mathcal{N}) = I(\mc X;B) - I(\mc X;E) = S(B) - S(E) - (S(\mc X B) - S(\mc XE) ),\\
    & S(B) = S(E) = \log d,\quad S(\mc X B) = \log d,\quad S(\mc X E) = 1 + \log d,
\end{align*}
which implies a lower bound $\mc P(\mc N) \ge 1$. On the other hand, an SDP upper bound for classical capacity (thus also an upper bound for private capacity) of $\mc N^{A\to B}$ is given by $\mc C(\mc N) \le \log \beta(\mc N)$, see~\cite[Theorem 11]{Wang_2018}, where 
\begin{equation}
    \begin{aligned}
        & \beta(\mc N) = \min \tr(X^B) \\
        & \text{s.t.} \hspace{0.5cm} -R^{AB} \le \mc J_\mc N^{T_B} \le R^{AB} \\
        & \hspace{1cm}  -\mb I^A \otimes X^B \le (R^{AB})^{T_B} \le \mb I^A \otimes X^B.
    \end{aligned}
\end{equation}
We claim that $\beta(\mc N) \le 2$ thus we have $\mc P(\mc N) \le \mc C(\mc N) \le 1$. In fact, note that the unnormalized \Choi operator is
$$\mc J_\mc N = 2d\, \gamma^{a_0b_0A_0B_0}_{\frac{2d}{d+1}, d} = \frac{2}{d} \left( \ketbra{\psi_+}{\psi_+}^{a_0b_0} \otimes P_{\sym}^{A_0B_0} + \ketbra{\psi_-}{\psi_-}^{a_0b_0} \otimes P_{\asym}^{A_0B_0}\right),$$
then the elementary calculation shows that 
\begin{align*}
    \mc J_\mc N^{T_B} = \frac{1}{d} \left(\ketbra{00}{00}^{a_0b_0} + \ketbra{11}{11}^{a_0b_0} \right) \otimes \mb I^{A_0B_0} + \frac{1}{2} \left(\ketbra{01}{10}^{a_0b_0} + \ketbra{10}{01}^{a_0b_0} \right) \otimes \ketbra{\Psi_+}{\Psi_+}^{A_0B_0},
\end{align*}
where $\ket{\Psi^+} = \frac{1}{\sqrt d} \sum_{i=0}^{d-1} \ket{ii}$. Denote $P_0^{a_0b_0}$ as the projection onto the subspace $\mathrm{span}\{\ket{00},\ket{11}\}$ and $P_1^{a_0b_0} := \mb I^{a_0b_0} - P_0^{a_0b_0}$, we choose 
\begin{equation}
    R^{AB} = \frac{1}{d} P_0^{a_0b_0} \otimes \mb I^{A_0B_0} + \frac{1}{2} P_1^{a_0b_0} \otimes \ketbra{\Psi_+}{\Psi_+}^{A_0B_0}, \quad X^B = \frac{1}{d} \mb I^{B}.
\end{equation}
One can directly check that $-R^{AB} \le \mc J_\mc N^{T_B} \le R^{AB},\ -\mb I^A \otimes X^B \le (R^{AB})^{T_B} \le \mb I^A \otimes X^B$ and $\tr(X^B) = 2d\cdot \frac{1}{d} = 2$, which implies that $\beta(\mc N) \le 2$ and we conclude the proof that $\mc P(\mc N) = 1$. 
\end{proof}
\begin{corollary}
    Denote $\mc M_n = \mc N^{\otimes n}$ and $d = n^2$, we have 
    \begin{equation}
        \mc Q(\mc M_n) = \frac{1}{n},\quad \mc P(\mc M_n) = n.
    \end{equation}
\end{corollary}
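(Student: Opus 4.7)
The plan is to reduce the statement directly to the two main single-copy results established above: Theorem~\ref{main:quantum capacity}, which gives $\mc Q(\mc N)=1/d$ for the private channel $\mc N$, and Proposition~\ref{main:private capacity}, which gives $\mc P(\mc N)=1$ independent of $d$. With the choice $d=n^2$, these specialize to $\mc Q(\mc N)=1/n^2$ and $\mc P(\mc N)=1$, so the only remaining task is to upgrade these single-letter values of the regularized capacities to the tensor power $\mc M_n=\mc N^{\otimes n}$.

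The key observation is a completely general scaling property of regularized capacities under self-tensoring: for any channel $\mc N$ and any integer $n\ge 1$, both $\mc Q(\mc N^{\otimes n})=n\,\mc Q(\mc N)$ and $\mc P(\mc N^{\otimes n})=n\,\mc P(\mc N)$. This is immediate from the definitions, since
\[
\mc Q(\mc N^{\otimes n})=\lim_{k\to\infty}\frac{1}{k}\mc Q^{(1)}\bigl((\mc N^{\otimes n})^{\otimes k}\bigr)=n\lim_{k\to\infty}\frac{1}{nk}\mc Q^{(1)}\bigl(\mc N^{\otimes nk}\bigr)=n\,\mc Q(\mc N),
\]
and the identical computation applies to $\mc P$. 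Crucially, no auxiliary additivity statement for the one-shot quantities $\mc Q^{(1)}$ or $\mc P^{(1)}$ is required; I only use that the limit defining the regularization exists (which is built into the LSD/Devetak characterizations already recalled in Section~\ref{sec:prelim}).

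Combining these yields $\mc Q(\mc M_n)=\mc Q(\mc N^{\otimes n})=n\cdot\tfrac{1}{n^2}=\tfrac{1}{n}$ and $\mc P(\mc M_n)=\mc P(\mc N^{\otimes n})=n\cdot 1=n$, which completes the argument. There is no genuine obstacle at this final stage: the substantive work has already been done in the proofs of Theorem~\ref{main:quantum capacity} and Proposition~\ref{main:private capacity}, namely the single-letter quantum capacity via the Spin Alignment Conjecture on the complementary-channel side together with the degradability restriction $\mc N|_{\mathrm{span}\{\ket{00},\ket{10}\}}$, and the matching private-capacity bounds $1\le\mc P(\mc N)\le\mc C(\mc N)\le\log\beta(\mc N)\le 1$ via the explicit ensemble $\{(1/2,\ket{i}\!\bra{i}\otimes\mb I_d/d)\}_{i=0,1}$ and the SDP witness $R^{AB},X^B$ constructed there. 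The corollary should therefore be presented simply as a one-line consequence of combining those two results with the regularization identity above.
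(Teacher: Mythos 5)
Your proof is correct and follows essentially the same route as the paper: the paper likewise deduces the corollary by combining Theorem~\ref{main:quantum capacity} and Proposition~\ref{main:private capacity} with the same implicit regularization identity $\mc Q(\mc N^{\otimes n})=n\,\mc Q(\mc N)$ and $\mc P(\mc N^{\otimes n})=n\,\mc P(\mc N)$. Your explicit statement of that scaling fact (noting it relies only on existence of the regularization limit, guaranteed by superadditivity and Fekete's lemma) is a legitimate and slightly more careful spelling-out of the step the paper leaves terse.
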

\begin{proof}
    Using Theorem \ref{main:quantum capacity}, we have $\mc Q(\mc M_n) = \frac{n}{d} = \frac{1}{n}$ with $d = n^2$. On the other hand, via Proposition \ref{main:private capacity}, we have $\mc P(\mc M_n) = n \mc P(\mc N) = n$. 
\end{proof}
\begin{remark}
    Recall that in \cite{Leung_2014}, a general relation between private and quantum capacity is given by
    \begin{equation}
        \mc P(\mc N) \le \frac{1}{2}(\log d_A + \mc Q(\mc N))
    \end{equation}
    for any channel with input system $A$. In our construction, the input dimension is $\log d_A = n \log (2n^2) = n + 2n \log n$. Therefore, our example saturates the inequality up to a logarithmic order. Note that the example saturating the inequality in \cite{Leung_2014} has a constant lower bound for quantum capacity and in contrast, our example has quantum capacity approaching to zero. 
\end{remark}
\section{Approximate private channel and its applications}\label{sec:approximate}
One of the important problems in quantum entanglement theory is whether there exists a bipartite state, such that it has no distillable entanglement but has positive distillable key \cite{Horodecki_2005}. This question was solved using the framework of approximate private states. To be more specific, it was shown in \cite[Theorem 7]{Horodecki_2009} that for any $\varepsilon>0$, there exists a pbit $\gamma^{a_0b_0A_0B_0}$ and a corresponding PPT state $\zeta^{a_0b_0A_0B_0}$ such that
    \begin{equation}\label{eqn:PPT 1}
        \|\zeta^{a_0b_0A_0B_0} - \gamma^{a_0b_0A_0B_0}\|_1 \le \varepsilon. 
    \end{equation}
The construction is give in \cite[proof of Theorem 7]{Horodecki_2009}. The following lemma is a generalization of \eqref{eqn:PPT 1}, which allows a PPT extension of approximate private states. It is presented in Lemma 2 in the supplementary material of \cite{Cubitt_2015}:
\begin{lemma}\label{lemma:PPT extension}
    For any $\varepsilon > 0$ and $N \geq 1$, there exists a pbit $\gamma^{a_0b_0A_0B_0}$ and a corresponding PPT state $\zeta^{a_0b_0 A^N_0 B^N_0}$ (that is,  $(\zeta^{a_0b_0 A^N_0 B^N_0})^{T_{b_0B_0^N}} \ge 0$), where $A_0^N:= A_0^{\otimes N}, B_0^N := B_0^{\otimes N}$ such that 
    \begin{align}
        \left\lVert \zeta^{a_0b_0 A_0 B_0} - \gamma^{a_0b_0A_0B_0} \right\rVert_1 \leq \varepsilon,
    \end{align}
    where $\zeta^{a_0b_0 A_0 B_0} := \tr_{A_0^{N-1} B_0^{N-1}} (\zeta^{a_0b_0 A^N_0 B^N_0})$ and it is independent of the choice of $N-1$ subsystems $A_0^{N-1} B_0^{N-1}$. 
\end{lemma}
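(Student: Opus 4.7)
The plan is to upgrade the single-copy approximate PPT pbit of \cite[Thm.~7]{Horodecki_2009} (the $N=1$ case of the lemma) to a permutation-symmetric PPT extension on $N$ shield copies. The structural input is that the Horodecki construction builds the approximate PPT state from Werner-type shields supported on the symmetric/antisymmetric subspaces of $A_0\otimes B_0$; these Werner shields have canonical permutation-symmetric extensions to the symmetric subspace of $(A_0\otimes B_0)^{\otimes N}$ under permutations of the $N$ shield pairs $(A_0^{(i)},B_0^{(i)})$. The $N$-extended state is then obtained by dressing such a symmetric shield with the pbit key unitaries $U_k^{\otimes N}$ acting on the $k$-th block of the $a_0b_0$ key system.

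\textbf{Key steps.} First, invoke Lemma~\ref{pbit:basic} and \eqref{pbit:general form} to record the pbit normal form $\gamma^{a_0b_0A_0B_0}=\tfrac12\sum_{k,l}\ketbra{kk}{ll}^{a_0b_0}\otimes U_k\sigma U_l^\dagger$, with Horodecki's choice $\sigma\propto P_{\sym}^{A_0B_0}$ and $U_1$ a swap-like unitary mapping the symmetric to the antisymmetric subspace. Second, choose a permutation-invariant shield state $\Sigma_N$ on $A_0^NB_0^N$ whose $(N{-}1)$-fold partial trace equals (or closely approximates) $\sigma$; a natural candidate is the maximally mixed state on the symmetric subspace of $(A_0B_0)^{\otimes N}$ under permutations of the $N$ pairs. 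Third, define
\[
    \zeta^{a_0b_0A_0^NB_0^N} := \sum_{k,l=0}^1 \ketbra{kk}{ll}^{a_0b_0}\otimes U_k^{\otimes N}\,\Sigma_N\,(U_l^\dagger)^{\otimes N},
\]
so that by construction $\zeta$ is invariant under any joint permutation of the shield pairs, and the one-shield marginal $\zeta^{a_0b_0A_0B_0}$ is independent of which $N{-}1$ pairs are traced out. Fourth, verify the two quantitative properties: (a) $\zeta^{T_{b_0B_0^N}}\ge 0$, via a Schur-complement / block-positivity argument on the $2\times 2$ key block matrix, using partial-transpose norm estimates on the Werner state and its $U^{\otimes N}$ conjugates (which decay polynomially in $d$); (b) the one-shield marginal is $\varepsilon$-close to the pbit in trace norm, combining the single-copy Horodecki bound with the closeness of $\tr_{(A_0B_0)^{N-1}}\Sigma_N$ to $\sigma$ via the triangle inequality.

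\textbf{Main obstacle.} The principal difficulty is the joint scaling of the shield dimension $d$ with both $\varepsilon$ and $N$: the PPT and marginal-closeness conditions pull in opposite directions, since larger $d$ improves data-hiding (helping PPT) but can dilute the pbit coherence carried by the off-diagonal key blocks $U_0^{\otimes N}\Sigma_N(U_1^\dagger)^{\otimes N}$. The key asymptotic fact saving the argument is that Werner-type partial-transpose norms on $(A_0B_0)^{\otimes N}$ decay polynomially in $d$ with well-controlled dependence on $N$; interlacing these two decay rates and selecting $d=d(\varepsilon,N)$ sufficiently large is the technical crux. A secondary, more delicate check is that Horodecki's PPT-forcing modification of the off-diagonal key blocks (used in the $N=1$ case) lifts compatibly to the $N$-fold symmetric setting without breaking permutation invariance, which may require an additional symmetrization of the modified blocks over the permutation group acting on shield pairs.
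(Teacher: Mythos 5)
Your proposed $\zeta$ has a fatal structural flaw: as written,
\[
  \zeta \;=\; \sum_{k,l=0}^{1}\ketbra{kk}{ll}^{a_0b_0}\otimes U_k^{\otimes N}\,\Sigma_N\,(U_l^\dagger)^{\otimes N}
\]
is a \emph{perfect} pbit of the form~\eqref{pbit:general form}, and perfect pbits with a nontrivial shield twist cannot be PPT. Concretely, taking the partial transpose on $b_0B_0^N$ sends $\ketbra{00}{11}^{a_0b_0}\mapsto\ketbra{01}{10}^{a_0b_0}$, so $\zeta^{T_{b_0B_0^N}}$ contains the $2\times 2$ key block
\[
  \tfrac12\begin{pmatrix}0 & \bigl(U_0^{\otimes N}\Sigma_N (U_1^\dagger)^{\otimes N}\bigr)^{T_{B_0^N}} \\[4pt] \bigl(U_1^{\otimes N}\Sigma_N (U_0^\dagger)^{\otimes N}\bigr)^{T_{B_0^N}} & 0\end{pmatrix}
\]
whose diagonal entries vanish. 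A positive-semidefinite operator with a vanishing diagonal block must have the entire row and column equal to zero, so $\zeta^{T_{b_0B_0^N}}\ge 0$ forces $\bigl(U_0^{\otimes N}\Sigma_N (U_1^\dagger)^{\otimes N}\bigr)^{T_{B_0^N}}=0$ — which kills the pbit coherence entirely. No amount of data-hiding or Werner-norm decay rescues this; the Schur-complement check you invoke in step (a) is exactly what rules your $\zeta$ out. The "PPT-forcing modification" you relegate to a secondary remark is in fact the heart of the construction and is incompatible with the state you wrote down.

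What the paper actually uses (citing the construction in \cite{Cubitt_2015}, originating in \cite{Horodecki_2009}) is a state~\eqref{eqn:extension PPT state} that is \emph{not} a pbit but is only $\varepsilon$-close to one: it has nonzero $\ketbra{01}{01}^{a_0b_0},\ketbra{10}{10}^{a_0b_0}$ blocks (carrying weight $\bigl[(\tfrac12-q)\tau_2\bigr]^{\otimes m}$) that, after partial transpose, pad the otherwise-zero diagonal, and its off-diagonal key blocks are a deliberately small perturbation $\bigl[q(\tau_1-\tau_2)/2\bigr]^{\otimes m}$ of the diagonal blocks, with $\tau_1,\tau_2$ built from symmetric/antisymmetric Werner projectors over $rN$ data-hiding qudit pairs. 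PPT is then a norm inequality between these blocks governed by $q$, $d$, $r$, $N$, and the $\varepsilon$-closeness of the one-shield marginal to a pbit follows because the construction is a tensor product over the $N$ copies, so the partial trace collapses $\tau_i^{\otimes rN}\mapsto\tau_i^{\otimes r}$ and returns the $N=1$ state. This tensor-product structure makes the state automatically permutation-symmetric over the $N$ shield copies, so the extra symmetrization step you suggest is neither needed nor part of the argument.
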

For the readers' convenience, we present the construction of $\zeta$, see \cite[(139)]{Horodecki_2009} and \cite[(S18)]{Cubitt_2015}:
\begin{equation}\label{eqn:extension PPT state}
\zeta^{a_0b_0 A^N_0 B^N_0} = \zeta^{a_0b_0 A^N_0 B^N_0}_{q,d,r,N,m}\propto \left(\begin{array}{cccc}
\left[q\left(\frac{\tau_1+\tau_2}{2}\right)\right]^{\otimes m} & 0 & 0 & {\left[q\left(\frac{\tau_1-\tau_2}{2}\right)\right]^{\otimes m}} \\
0 & {\left[\left(\frac{1}{2}-q\right) \tau_2\right]^{\otimes m}} & 0 & 0 \\
0 & 0 & {\left[\left(\frac{1}{2}-q\right) \tau_2\right]^{\otimes m}} & 0 \\
{\left[q\left(\frac{\tau_1-\tau_2}{2}\right)\right]^{\otimes m}} & 0 & 0 & {\left[q\left(\frac{\tau_1+\tau_2}{2}\right)\right]^{\otimes m}}
\end{array}\right),
\end{equation}
where \begin{align*}
    \tau_1 = \left(\frac{1}{2}(\frac{P_{\sym}}{d_\sym} + \frac{P_\asym}{d_\asym})\right)^{\otimes rN},\quad \tau_2 = \left(\frac{P_{\sym}}{d_\sym}\right)^{\otimes rN}.
\end{align*}
In this case $A_0 = B_0= (\mb C^d)^{\otimes rm}$. Using the criteria developed in \cite{Horodecki_2009}, the state~\eqref{eqn:extension PPT state} is PPT, if 
\begin{align*}
    0< q \le \frac{1}{3},\quad \frac{1-q}{q} \ge \left(\frac{d}{d-1}\right)^{rN}. 
\end{align*}
To ensure \eqref{eqn:PPT 1} and PPT, one can set $q = \frac{1}{3}$, $r = \lceil2m + \log m \rceil$ and $m \cong \log(1/\varepsilon)$, with the freedom of choosing $d,N$ such that $ \left(\frac{d}{d-1}\right)^{rN} \le 2$. 

Motivated from the above question, in the remaining of this section, we explore different applications of approximate private channels defined as follows. 
\begin{definition}[Approximate private channel]
    The quantum channel $\mc M^{A'\to B}$ where $A' = a_0'A_0', B = b_0B_0$ is called an $\varepsilon$ approximate private channel, if its induced \Choi operator is $\varepsilon$ close to some pbit $\gamma^{a_0b_0A_0B_0}$ in trace distance. 
\end{definition}
\subsection{Application 1: A different proof of superactivation effect}
\begin{prop}\label{prop:amplification 2}
    Suppose the channel $\mc M^{A' \to B}$ with \Choi operator $\zeta^{a_0b_0A_0B_0}$ is an $\varepsilon$-approximate private channel, i.e., $\|\zeta^{a_0b_0A_0B_0} - \gamma^{a_0b_0A_0B_0}\|_1 \le \varepsilon$ for some pbit $\gamma^{a_0b_0A_0B_0}$. Then for any $\lambda \in (0,1)$, 
\begin{align} 
\mc Q^{(1)} \left(\mc M^{A' \to B} \otimes \mc E_{\lambda,d}\right) \ge 1 - \lambda\, h(\frac{1+|c|}{2})- 4\varepsilon - 2h(\varepsilon),
\end{align}
where $c = \tr(\bra{00}^{a_0b_0} \gamma^{a_0b_0A_0B_0} \ket{11}^{a_0b_0})$. 
\end{prop}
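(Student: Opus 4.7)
The plan is to import the exact pbit calculation behind Proposition~\ref{prop:lower bound} (specialized to an erasure assist) and then pay an Alicki--Fannes--Winter (AFW) continuity price for replacing the perfect pbit $\gamma$ by the $\varepsilon$-close \Choi state $\zeta$. The structural feature that makes this efficient is that the key system $a_0$ has dimension $2$, so the AFW correction will be free of any dependence on the shield dimension $d$.

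First I would fix the ansatz input state $\rho^{A'A_0}=\tr_{a_0}\!\bigl(\ketbra{\psi}{\psi}^{a_0a_0'}\otimes\ketbra{\Psi}{\Psi}^{A_0A_0'}\bigr)$ used in Proposition~\ref{prop:lower bound}, whose purification $\ket{\Phi}^{a_0a_0'A_0A_0'}=\ket{\psi}^{a_0a_0'}\otimes\ket{\Psi}^{A_0A_0'}$ is maximally entangled between the reference $A=a_0A_0$ and the input $A'=a_0'A_0'$. Consequently $(id^{A}\otimes\mc N^{A'\to B})(\ketbra{\Phi}{\Phi})=\gamma^{a_0b_0A_0B_0}$ and $(id^{A}\otimes\mc M^{A'\to B})(\ketbra{\Phi}{\Phi})=\zeta^{a_0b_0A_0B_0}$; applying $\mc E_{\lambda,d}^{A_0\to FA_0}$ to the shield reference $A_0$ on either side produces output states that I denote $\sigma_\mc N$ and $\sigma_\mc M$.

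For the exact channel $\mc N$ I would compute $I_c(\mc N\otimes\mc E_{\lambda,d},\rho^{A'A_0})$ via the flagged-channel identity \eqref{coh:flag}: the no-erasure branch contributes $I(a_0\rangle b_0A_0B_0)_\gamma=1$ by \eqref{pbit:equality 1}, and the full-erasure branch is bounded below by $I(a_0\rangle b_0)_{\gamma^{a_0b_0}}=1-h\bigl(\tfrac{1+|c|}{2}\bigr)$ via \eqref{DPI:remove} and \eqref{pbit:equality 2}. Weighting by $1-\lambda$ and $\lambda$ yields $I_c(\mc N\otimes\mc E_{\lambda,d},\rho^{A'A_0})\ge 1-\lambda\,h\bigl(\tfrac{1+|c|}{2}\bigr)$. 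I would then transfer this estimate to $\mc M$ using $\|\zeta-\gamma\|_1\le\varepsilon$ together with contractivity under $id\otimes\mc E_{\lambda,d}$, which gives $\|\sigma_\mc M-\sigma_\mc N\|_1\le\varepsilon$; the AFW continuity bound for coherent information with reference $a_0$ of dimension $2$ then yields $|I_c(\mc M\otimes\mc E_{\lambda,d},\rho^{A'A_0})-I_c(\mc N\otimes\mc E_{\lambda,d},\rho^{A'A_0})|\le 4\varepsilon+2h(\varepsilon)$, and chaining with $\mc Q^{(1)}\ge I_c$ closes the argument.

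The only delicate step is the continuity estimate: one must invoke the version of AFW whose dimensional prefactor depends on the reference system rather than on the full output, otherwise the correction would acquire a $\log d$ factor coming from the shield and the resulting lower bound would degrade precisely in the large-$d$ regime where the superactivation phenomenon is most interesting. Everything else reduces to the flagged-channel identity and the pbit/data-processing facts already collected in Section~\ref{sec:prelim}.
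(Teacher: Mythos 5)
Your proposal is correct and yields the paper's bound with the same constant. The underlying ingredients are identical—the maximally entangled ansatz, the flagged-channel identity~\eqref{coh:flag}, the pbit facts~\eqref{pbit:equality 1}--\eqref{pbit:equality 2}, the data-processing step to drop the shield, and an AFW-type continuity bound whose dimensional prefactor keys off the reference $a_0$ of dimension~$2$—but the \emph{order} of the continuity step is organized differently. You first compute $I_c(\mc N\otimes\mc E_{\lambda,d},\rho)\ge 1-\lambda\,h(\tfrac{1+|c|}{2})$ for the exact pbit channel $\mc N$, then pay a single AFW correction by comparing the two composite output states $\sigma_{\mc M}=(id\otimes\mc E_{\lambda,d})(\zeta)$ and $\sigma_{\mc N}=(id\otimes\mc E_{\lambda,d})(\gamma)$, which are $\varepsilon$-close by contractivity of the trace norm. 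The paper instead decomposes $I_c(\mc M\otimes\mc E_{\lambda,d},\rho)$ into its two erasure branches with the approximate state $\zeta$ in place, drops the shield in the erased branch via data processing, and then applies the Fannes--Alicki bound separately to $I(a_0\rangle b_0A_0B_0)_\zeta$ and to $I(a_0\rangle b_0)_{\zeta^{a_0b_0}}$; since the two resulting corrections are averaged with weights $1-\lambda$ and $\lambda$, the total penalty is again $4\varepsilon+2h(\varepsilon)$. Your single-application version is marginally tidier and makes the contractivity step explicit; the paper's two-term version makes it more transparent where data processing is invoked and would generalize directly to the $N$-copy version (Proposition~\ref{prop:amplification 3}), where the paper needs to control several branches with distinct reductions of $\zeta$. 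Either way the conclusion and constant are the same.
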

\begin{proof}
    Note that $\mc Q^{(1)}(\mc M^{A' \to B} \otimes \mc E_{\lambda,d})  \ge I_c(\mc M^{A' \to B} \otimes \mc E_{\lambda,d}, \rho^{A'A_0})$, where the ansatz state $\rho^{A'A_0}$ is given by 
    \begin{align*}
        \rho^{A'A_0} = \tr_{a_0} \left(\ketbra{\psi}{\psi}^{a_0a_0'} \otimes \ketbra{\Psi}{\Psi}^{A_0A_0'}\right). 
    \end{align*}
    Then the coherent information is calculated by 
    \begin{align*}
        I_c(\mc M^{A' \to B} \otimes \mc E_{\lambda,d}, \rho^{A'A_0}) & = (1-\lambda) I_c(\mc M^{A' \to B} \otimes id^{A_0\to A_0}, \rho^{A'A_0}) + \lambda I_c(\mc M^{A' \to B} \otimes \mc E_1^{A_0\to C_0}, \rho^{A'A_0}) \\
        &= (1-\lambda) I(a_0 \rangle b_0A_0B_0)_{\zeta} + \lambda I(a_0 \rangle b_0B_0)_{\zeta^{a_0b_0B_0}} \\
        & \ge (1-\lambda) I(a_0 \rangle b_0A_0B_0)_{\zeta} + \lambda I(a_0 \rangle b_0)_{\zeta^{a_0b_0}}.
    \end{align*}
    For the first equality above, we use the fact that coherent information is additive under flag mixture of two channels~\eqref{coh:flag}; for the second equality, we use the definition of coherent information; for the last inequality, we use the data processing inequality for coherent information~\eqref{DPI:coherent}. Using the well-known continuity of conditional entropy (Fannes-Alicki inequality~\cite{fannes1973continuity}) and the fact that conditional entropy is the minus coherent information, we have 
\begin{align*}
    |I(a_0 \rangle b_0A_0B_0)_{\zeta} - I(a_0 \rangle b_0A_0B_0)_{\gamma}| \le 4\varepsilon + 2h(\varepsilon),\quad |I(a_0 \rangle b_0)_{\zeta^{a_0b_0}} - I(a_0 \rangle b_0)_{\gamma^{a_0b_0}}| \le 4\varepsilon + 2h(\varepsilon),
\end{align*}
we conclude the proof by applying Lemma~\ref{pbit:basic}. 
\end{proof}
Therefore, for fixed $\lambda \in [\frac{1}{2},1)$ and $\varepsilon$ small enough, the above lower bound is strictly positive, thus providing a simpler proof of superactivation effect~\cite{Smith_2008} when the approximate private channel is PPT which has zero quantum capacity. Note that we can slightly improve the bound $4\varepsilon + 2h(\varepsilon)$ using the recent results in \cite{berta2024continuity,audenaert2024continuity}.

\subsection{Application 2: Separation between anti-degradable channels and approximate private channels}
\label{subsec:sep-AD}

In Prop.~\ref{prop:amplification 2}, we show that tensoring an $\varepsilon$-approximate private channel with an anti-degradable erasure channel leads to superactivation. To demonstrate that this phenomenon cannot be attributed merely to approximate degradability, we further establish that $\varepsilon$-approximate private channels are quantitatively far, in the diamond norm, from the set of anti-degradable channels.

\begin{definition}[Anti-degradable channel]
A channel $\mc A^{A'\to B}$ is \emph{anti-degradable} if there exists a CPTP map $\mc T$ such that $\mc A=\mc T\circ \mc A^{c}$, where $\mc A^{c}$ is a complementary channel of $\mc A$. Denote the set of all such channels by $\mathrm{AD}(A'  \to  B)$.
\end{definition}
The key idea is to invoke the continuity of quantum capacities, shown in~\cite[Corollary 14]{Leung_2009}. 
\begin{lemma}\label{lemma:continuity of capacity}
The quantum capacity of a quantum channel with finite-dimensional output is continuous.
Quantitatively, if $\mathcal N,\mathcal M: A' \to B$ where the dimension of $B$ is $d_B$
and $\|\mathcal N-\mathcal M\|_{\diamond} \le \varepsilon$, then
\begin{equation}\label{eq:Q-cont}
\bigl|\mc Q(\mathcal N) -\mc Q(\mathcal M)\,\bigr|  \le  8\varepsilon\,\log d_B  +  4\,h(\varepsilon).
\end{equation}
\end{lemma}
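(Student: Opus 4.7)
The plan is to combine Alicki--Fannes--Winter continuity of the one-shot coherent information with a regularization-preserving operational argument, following the template of Leung and Smith~\cite{Leung_2009}.

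\textbf{Step 1 (one-shot bound).} For any input $\rho^A$ with purification $\ket{\psi}^{RA}$, I would set $\sigma^{RB}_{\mc N}=(id_R\otimes \mc N)(\ketbra{\psi}{\psi}^{RA})$ and analogously $\sigma^{RB}_{\mc M}$. Monotonicity of the trace distance under CPTP maps, together with the diamond-norm hypothesis, yields $\|\sigma^{RB}_{\mc N}-\sigma^{RB}_{\mc M}\|_1\le \varepsilon$. Writing the coherent information as $I_c(\rho,\mc N)=I(R;B)_{\sigma_{\mc N}}-S(\rho^A)$, the input-side entropy $S(\rho^A)$ is common to both channels and cancels in the difference. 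Applying the Alicki--Fannes--Winter continuity bound for quantum mutual information, whose leading term is $2\varepsilon\log \min(d_R,d_B)\le 2\varepsilon\log d_B$, and then optimizing over inputs would yield
\[
|\mc Q^{(1)}(\mc N)-\mc Q^{(1)}(\mc M)|\;\le\;2\varepsilon\log d_B+O(h(\varepsilon)).
\]

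\textbf{Step 2 (regularization).} The regularized capacity is $\mc Q(\mc N)=\lim_{n\to\infty}\tfrac1n\mc Q^{(1)}(\mc N^{\otimes n})$. The naive route --- applying Step~1 to $\mc N^{\otimes n}$ versus $\mc M^{\otimes n}$, where $\|\mc N^{\otimes n}-\mc M^{\otimes n}\|_\diamond\le n\varepsilon$ and the output dimension is $d_B^n$ --- yields a bound that diverges after dividing by $n$. To circumvent this, I would use an operational argument: take a near-optimal $n$-shot quantum code for $\mc N$ of rate $R<\mc Q(\mc N)$ with vanishing decoding error, and then feed the same encoder/decoder through $\mc M^{\otimes n}$. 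The decoder error picks up at most $n\varepsilon$ in trace distance, which can be absorbed by reserving a small fraction of the channel uses for classical error-correction side information, at a rate cost of roughly $\varepsilon\log d_B+h(\varepsilon)$ per channel use. Letting $n\to\infty$ yields the one-sided bound $\mc Q(\mc M)\ge \mc Q(\mc N)-\bigl(4\varepsilon\log d_B+2h(\varepsilon)\bigr)$; swapping $\mc N$ and $\mc M$ gives the matching inequality, with an extra factor of two arising from controlling both the output-$B$ and the complementary-$E$ marginals, producing the constants $8$ and $4$ in \eqref{eq:Q-cont}.

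The main obstacle is Step~2: lifting the one-shot continuity to the regularized capacity cannot be achieved by Fannes-type inequalities alone, since the diamond-norm error grows linearly in $n$ while the output dimension grows exponentially, making the naive estimate useless in the limit. The remedy is either operational (a coding argument that controls cumulative decoding error via classical assistance) or analytic (a Shirokov-style continuity estimate for regularized coherent information that is uniform in the block length $n$). Either route reduces the problem to a bound that depends only on the output dimension $d_B$ rather than the input dimension.
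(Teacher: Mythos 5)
The paper does not prove Lemma~\ref{lemma:continuity of capacity}; it invokes it as \cite[Corollary~14]{Leung_2009}, so the relevant comparison is to that reference's argument. Your Step~1 (AFW continuity of the one-shot coherent information, with $S(\rho^A)$ cancelling) is fine in spirit. The gap is in Step~2. The operational route you sketch does not close: an $n$-shot code built for $\mc N$ and run through $\mc M^{\otimes n}$ incurs decoding error up to $n\varepsilon$, which exceeds $1$ for any fixed $\varepsilon>0$ once $n$ is large --- at that point the coding fidelity is destroyed and there is nothing left for ``classical side information'' to repair at a per-use overhead of only $O(\varepsilon\log d_B + h(\varepsilon))$. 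This is a hope, not an argument, and the final constant bookkeeping (``extra factor of two from $B$ and $E$ marginals'') is grafted on rather than derived from anything in the sketch.

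What Leung and Smith actually do is analytic and sidesteps this entirely: a telescoping hybrid estimate on the coherent information. Set $\mc H_k=\mc N^{\otimes k}\otimes\mc M^{\otimes(n-k)}$ and write
\[
I_c\bigl(\rho,\mc N^{\otimes n}\bigr)-I_c\bigl(\rho,\mc M^{\otimes n}\bigr)=\sum_{k=1}^n\Bigl[I_c(\rho,\mc H_k)-I_c(\rho,\mc H_{k-1})\Bigr].
\]
Adjacent hybrids differ only in the $k$-th tensor factor, so their output states are within $\varepsilon$ in trace norm, and they share the same marginal on $R\,B_{\neq k}$ because both $\mc N$ and $\mc M$ are trace-preserving (tracing out $B_k$ gives the same reduced state regardless of which one acted there). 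The chain rule then reduces each one-step difference to two conditional entropies of the form $S(B_k\,|\,\cdot\,)$, whose AFW bound involves only the \emph{single-use} output dimension $d_B$, not $d_B^n$. Summing $n$ such terms and dividing by $n$ removes the $n$-dependence and passes to the regularized $\mc Q$; the two conditional-entropy contributions are exactly what produce the constants $8$ and $4$ in \eqref{eq:Q-cont}. This telescope is the missing idea in your Step~2.
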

\begin{theorem}[Quantitative separation from anti-degradable channels]
\label{thm:sep-AD}
Let $\mc M^{A'\to B}$ be an $\varepsilon$-approximate private channel with \Choi operator $\zeta^{a_0b_0A_0B_0}$ satisfying $\|\zeta-\gamma\|_1\le \varepsilon$ for some pbit $\gamma^{a_0b_0A_0B_0}$. Let $d = \mathrm{dim} A_0 = \mathrm{dim} B_0$ and 
\[
c :=\tr\Bigl(\bra{00}^{a_0b_0}\,\gamma^{a_0b_0A_0B_0}\,\ket{11}^{a_0b_0}\Bigr),
\quad
\Delta(\lambda,\varepsilon,c):=1-\lambda\,h\Bigl(\tfrac{1+|c|}{2}\Bigr)-4\varepsilon-2h(\varepsilon).
\]
Then for any $\lambda\in[\tfrac12,1)$, we have 
\begin{equation}
\inf_{\mc A\in \mathrm{AD}(A'  \to  B)} \,\|\mc M-\mc A\|_\diamond
\ge \min \left\{ g^{-1} (\Delta(\lambda,\varepsilon,c)), \frac{1}{2} \right\},
\end{equation}
where $g(x):= 8x\log(2d(d+1)) + 4h(x)$ is a strictly increasing function on $[0,\frac{1}{2}]$. 
\end{theorem}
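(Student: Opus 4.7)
\textbf{Proof proposal for Theorem~\ref{thm:sep-AD}.}

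The plan is to argue by contrapositive, combining three ingredients already developed in the paper: the amplification lower bound of Proposition~\ref{prop:amplification 2}, the fact that the anti-degradable class is closed under tensor products and under tensoring with any quantum-erasure channel of parameter $\lambda\ge\tfrac12$, and the Leung–Smith continuity bound for quantum capacity (Lemma~\ref{lemma:continuity of capacity}). Concretely, if some $\mc A\in\mathrm{AD}(A'\to B)$ were very close to $\mc M$ in diamond norm, then $\mc A\otimes\mc E_{\lambda,d}$ would remain anti-degradable (hence have zero quantum capacity), while $\mc M\otimes\mc E_{\lambda,d}$ has strictly positive coherent information $\Delta(\lambda,\varepsilon,c)$. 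Continuity then forces a quantitative lower bound on the distance.

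The steps, in order, are as follows. First, fix $\lambda\in[\tfrac12,1)$ and any $\mc A\in\mathrm{AD}(A'\to B)$; set $\delta:=\|\mc M-\mc A\|_\diamond$ and assume $\delta\le \tfrac12$ (otherwise the claimed bound is immediate from the $\min$ with $\tfrac12$). Second, since $\mc E_{\lambda,d}$ with $\lambda\ge\tfrac12$ is anti-degradable and anti-degradability is preserved under tensor products, we have $\mc A\otimes\mc E_{\lambda,d}\in\mathrm{AD}$ and therefore $\mc Q(\mc A\otimes\mc E_{\lambda,d})=0$. Third, Proposition~\ref{prop:amplification 2} yields
\begin{equation}
\mc Q(\mc M\otimes\mc E_{\lambda,d})\;\ge\;\mc Q^{(1)}(\mc M\otimes\mc E_{\lambda,d})\;\ge\;\Delta(\lambda,\varepsilon,c).
\end{equation}
Fourth, diamond-norm contractivity under tensoring by a fixed CPTP map (cf.~\eqref{eq:diamond-contract}) gives
\begin{equation}
\bigl\|\mc M\otimes\mc E_{\lambda,d}-\mc A\otimes\mc E_{\lambda,d}\bigr\|_\diamond\;\le\;\delta.
\end{equation}
Fifth, the combined output dimension is $d_B\cdot(d+1)=2d(d+1)$, so Lemma~\ref{lemma:continuity of capacity} applied to $\mc M\otimes\mc E_{\lambda,d}$ and $\mc A\otimes\mc E_{\lambda,d}$ yields
\begin{equation}
\Bigl|\mc Q(\mc M\otimes\mc E_{\lambda,d})-\mc Q(\mc A\otimes\mc E_{\lambda,d})\Bigr|\;\le\;8\delta\log\bigl(2d(d+1)\bigr)+4h(\delta)\;=\;g(\delta).
\end{equation}
Combining the last three displays gives $\Delta(\lambda,\varepsilon,c)\le g(\delta)$.

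Finally, since $g$ is continuous and strictly increasing on $[0,\tfrac12]$ with $g(0)=0$, its inverse is well defined on the range, and we obtain $\delta\ge g^{-1}(\Delta(\lambda,\varepsilon,c))$. Taking the infimum over $\mc A\in\mathrm{AD}(A'\to B)$ and combining with the trivial case $\delta>\tfrac12$ produces the stated bound $\min\{g^{-1}(\Delta(\lambda,\varepsilon,c)),\tfrac12\}$. The only mildly delicate point is the dimensional bookkeeping in the continuity estimate (one must use the full output space $b_0B_0\otimes C_0$ of dimension $2d(d+1)$ rather than just $d_B$), together with the observation that $\mc E_{\lambda,d}$ is anti-degradable precisely for $\lambda\ge\tfrac12$; everything else is a direct chain of inequalities.
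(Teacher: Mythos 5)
Your proof is correct and follows essentially the same route as the paper's: tensor with the anti-degradable erasure channel $\mc E_{\lambda,d}$ (anti-degradability being preserved under tensor products), invoke Proposition~\ref{prop:amplification 2} to lower-bound $\mc Q^{(1)}(\mc M\otimes\mc E_{\lambda,d})$ by $\Delta(\lambda,\varepsilon,c)$, note $\mc Q(\mc A\otimes\mc E_{\lambda,d})=0$, and apply the continuity estimate of Lemma~\ref{lemma:continuity of capacity} together with diamond-norm contractivity from \eqref{eq:diamond-contract} and the output dimension $2d(d+1)$. Your version is in fact marginally tidier on two minor points that the paper handles informally: you explicitly dispose of the trivial case $\delta>\tfrac12$ up front, and you cleanly separate the passage from $\mc Q^{(1)}$ to $\mc Q$ before invoking the continuity bound (which is stated for $\mc Q$).
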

\begin{proof}
First note that if $\mc A^{A_1\to B_1}\in \mathrm{AD}(A_1\to B_1)$ and $\mc F^{A_2 \to B_2}\in \mathrm{AD}(A_2 \to B_2)$, then $\mc A\otimes \mc F\in \mathrm{AD}(A_1A_2 \to B_1B_2)$. Therefore, via Proposition \ref{prop:amplification 2}, we have 
\begin{align*}
    \Delta(\lambda,\varepsilon,c) \le \mc Q^{(1)} (\mc M \otimes \mc E_{\lambda,d})= \mc Q^{(1)} (\mc M \otimes \mc E_{\lambda,d}) - \mc Q^{(1)} (\mc A \otimes \mc E_{\lambda,d})
\end{align*}
for any anti-degradable channel $\mc A = \mc A^{A'\to B}$. By Lemma~\ref{lemma:continuity of capacity}, we have 
\begin{align*}
     |\mc Q(\mc M \otimes \mc E_{\lambda,d}) - \mc Q(\mc A \otimes \mc E_{\lambda,d})| & \le 8 \|\mc M \otimes \mc E_{\lambda,d}- \mc A\otimes \mc E_{\lambda,d}\|_{\diamond} \log(2d(d+1)) + 4 h(\|\mc M \otimes \mc E_{\lambda,d}- \mc A\otimes \mc E_{\lambda,d}\|_{\diamond}) \\
     & \le  8 \|\mc M- \mc A\|_{\diamond} \log(2d(d+1)) + 4 h(\|\mc M- \mc A\|_{\diamond}),
\end{align*}
where the last inequality follows from $\|\mc M \otimes \mc E_{\lambda,d}- \mc A\otimes \mc E_{\lambda,d}\|_{\diamond} \le \|\mc M- \mc A\|_{\diamond}$ and $g(x):= 8x\log(2d(d+1)) + 4h(x)$ is a strictly increasing function on $[0,\frac{1}{2}]$. Therefore, assuming $\|\mc M- \mc A\|_{\diamond} \le \frac{1}{2}$, we conclude the proof. 
\end{proof}
A useful upper bound for the quantum capacity of a general channel is to reduce to
``nearby'' tractable families, e.g.\ degradable or anti-degradable channels.
For degradable proximity, several works bound $\mc Q(\mc N)$ in terms of an
\emph{approximate degradability parameter} $\delta_{\mathrm{deg}}(\mc N)$
(see, e.g.,~\cite{Sutter_2017,leditzky2017useful,Zhu_2024,Zhu_2025}).
However, our separation shows that this program can fail for
$\varepsilon$-approximate private channels.


\subsection{Application 3: Quantum capacity detection at arbitrarily large level}\label{sec:many copy}
In this section we apply the technique developed above to prove the following.  
For every integer $n\ge 1$ there exists a quantum channel
$\mathcal{N}_n^{A_n\to B_n}$ such that
\begin{equation}\label{eqn:threshold}
  \mathcal{Q}^{(1)}  \big(\mathcal{N}_n^{\otimes n}\big)=0
  \quad\text{while}\quad
  \mathcal{Q}\big(\mathcal{N}_n\big)>0.
\end{equation}
The phenomenon in \eqref{eqn:threshold} was first observed by Cubitt et al.~\cite{Cubitt_2015}; our construction is more explicit and arguably simpler.

\medskip
\begin{remark}
    The result above \textbf{does not} resolve the computability of the (regularized) quantum capacity
\(
  \mathcal{Q}(\mathcal{N})=\lim_{k\to \infty}\frac{1}{k}\,\mathcal{Q}^{(1)}(\mathcal{N}^{\otimes k}).
\)
In principle, the above channel $\mathcal{N}$ could satisfy
\(
  \mathcal{Q}(\mathcal{N})=\frac{1}{N}\,\mathcal{Q}^{(1)}  \big(\mathcal{N}^{\otimes N}\big)
\)
for some finite $N\ge 1$; our construction or the result in \cite{Cubitt_2015} neither rules this out nor provides an effective bound on such a $N$.
\end{remark}
\begin{figure}
    \centering
    \includegraphics[width=0.5\linewidth]{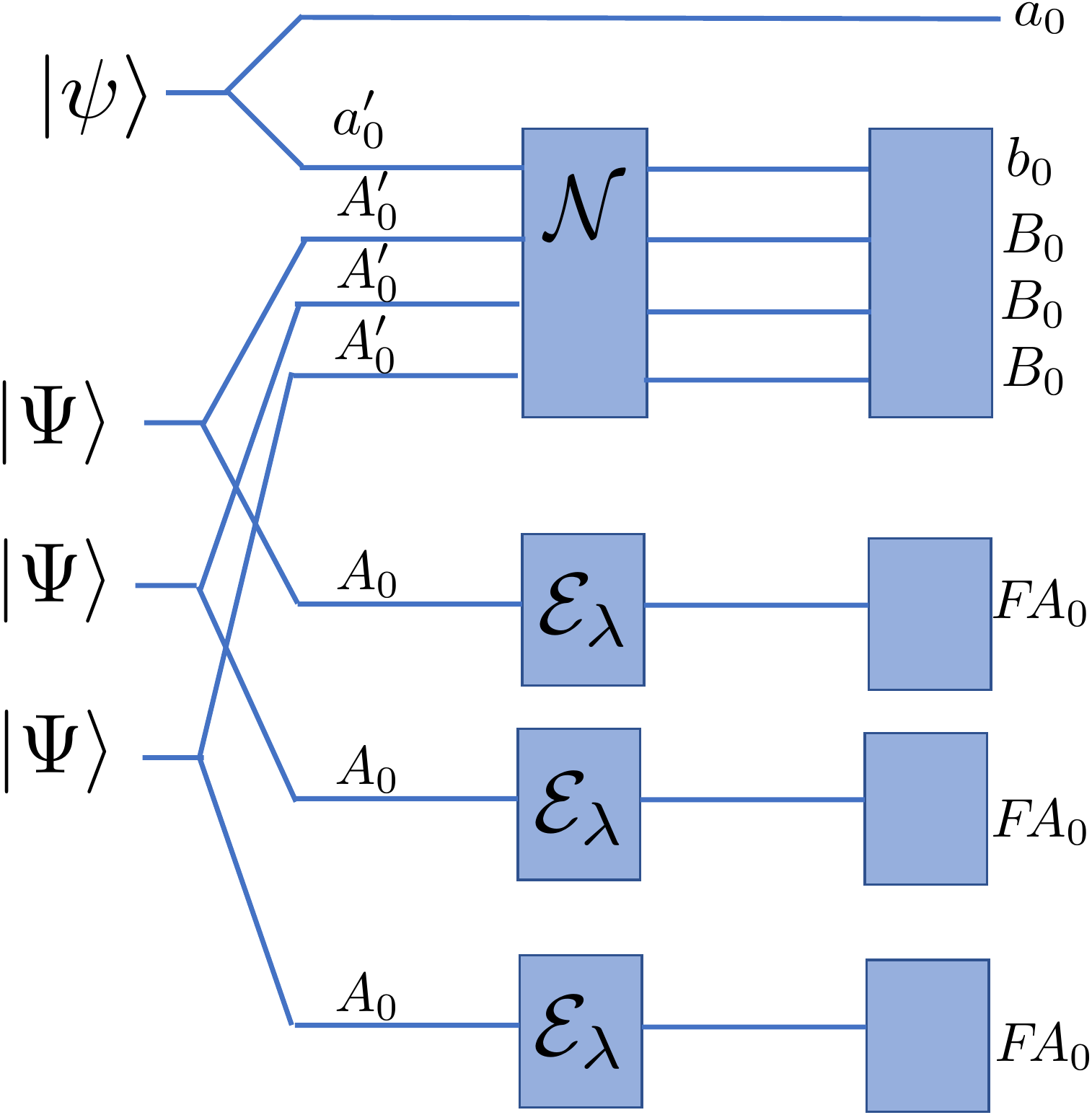}
    \caption{Superactivation with the help of many copy use of channels}
    \label{fig:amplification 2}
\end{figure}

A first step is a $N$-copy generalization of Proposition \ref{prop:amplification 2}:
\begin{prop}\label{prop:amplification 3}
For any $N\ge 1$ and $\varepsilon \in (0,1)$, suppose the state $\zeta^{a_0b_0 A^N_0 B^N_0}$ induces a quantum channel $\Gamma^{a_0' A_0'^N \to b_0B_0^N}$ and satisfies
\begin{align*}
    \left\lVert \zeta^{a_0b_0 A_0 B_0} - \gamma^{a_0b_0A_0B_0} \right\rVert_1 \leq \varepsilon
\end{align*}
for some pbit $\gamma^{a_0b_0A_0B_0}$. Then for any $\lambda \in (0,1)$, 
\begin{align} 
\mc Q^{(1)} \left(\Gamma^{a_0' A_0'^N \to b_0B_0^N} \otimes (\mc E_{\lambda}^{A_0 \to FA_0})^{\otimes N} \right) \ge 1 - \lambda^N h(\frac{1+|c|}{2})- 4\varepsilon - 2h(\varepsilon),
\end{align}
where $c = \tr(\bra{00}^{a_0b_0} \gamma^{a_0b_0A_0B_0} \ket{11}^{a_0b_0})$. 
\end{prop}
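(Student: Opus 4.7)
The plan is to mirror the argument in the proof of Proposition~\ref{prop:amplification 2} while exploiting the $2^N$-outcome flagged structure of $\mc E_\lambda^{\otimes N}$. I would take as input ansatz the $N$-copy analogue
\[
\rho^{a_0' A_0'^N A_0^N} \;:=\; \tr_{a_0}\Bigl(\ketbra{\psi}{\psi}^{a_0 a_0'} \otimes \bigotimes_{i=1}^N \ketbra{\Psi_i}{\Psi_i}^{A_{0,i} A_{0,i}'}\Bigr),
\]
purified by adjoining $a_0$, where $\ket{\psi}$ is the two-qubit and each $\ket{\Psi_i}$ the $d$-dimensional maximally entangled state on $A_{0,i} A_{0,i}'$. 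The key observation is that after $\Gamma$ acts on $a_0' A_0'^N$ the joint state on $a_0 A_0^N b_0 B_0^N$ is, by definition of the \Choi operator, exactly $\zeta^{a_0 b_0 A_0^N B_0^N}$. Thus the same $A_0^N$ register simultaneously plays the role of the reference side of $\Gamma$'s Choi state and the input of the $N$ erasure channels.

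Next I would decompose the coherent information over the $2^N$ flagged outcomes of $\mc E_\lambda^{\otimes N}$. Iterating Lemma~\ref{lemma:basic} and invoking the product-state invariance \eqref{DPI:equality} to drop the fixed states contributed by erased copies, for each non-erased index set $S \subseteq \{1,\ldots,N\}$ one is left with the marginal $\zeta^{a_0 b_0 B_0^N A_0^S}$ (writing $A_0^S := \bigotimes_{i \in S} A_{0,i}$), so that
\[
I_c\bigl(\Gamma \otimes \mc E_\lambda^{\otimes N},\rho\bigr) \;=\; \sum_{S \subseteq \{1,\ldots,N\}} (1-\lambda)^{|S|}\,\lambda^{N-|S|}\, I\bigl(a_0 \rangle b_0 B_0^N A_0^S\bigr)_\zeta.
\]

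To bound each branch, for any $S \neq \emptyset$ I would fix some $i_0 \in S$ and apply the data processing inequality \eqref{DPI:remove} on Bob's side (tracing out every $B_0$ index except $i_0$ and every $A_0$ index in $S$ except $i_0$) to reduce to $I(a_0 \rangle b_0 B_{0,i_0} A_{0,i_0})_\zeta$. By the extension property of Lemma~\ref{lemma:PPT extension}, this single-index marginal equals $\zeta^{a_0 b_0 A_0 B_0}$ independently of $i_0$, and the Fannes–Alicki continuity bound together with Lemma~\ref{pbit:basic} then yield $\ge 1 - 4\varepsilon - 2h(\varepsilon)$. For $S = \emptyset$ I would instead trace out all of $B_0^N$ to obtain $I(a_0 \rangle b_0)_\zeta \ge 1 - h\bigl(\frac{1+|c|}{2}\bigr) - 4\varepsilon - 2h(\varepsilon)$ by the same continuity estimate. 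Weighting the two regimes by $(1-\lambda^N)$ and $\lambda^N$ respectively, they combine to
\[
I_c \;\ge\; 1 - \lambda^N\, h\Bigl(\frac{1+|c|}{2}\Bigr) - 4\varepsilon - 2h(\varepsilon),
\]
which is the claim.

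The main conceptual step is recognizing that a \emph{single} surviving shield copy already suffices: data processing on Bob's side lets one discard the remaining shield copies and reduce to the single-pbit marginal, so only the fully-erased outcome (of probability $\lambda^N$, exponentially small in $N$) retains the unfavorable $h(\frac{1+|c|}{2})$ term. This is precisely where the extension property of $\zeta^{a_0 b_0 A_0^N B_0^N}$ is crucial, since it guarantees the reduced state on $a_0 b_0 A_{0,i_0} B_{0,i_0}$ is independent of the choice of $i_0 \in S$ and $\varepsilon$-close to the pbit. Because the $\varepsilon$-continuity corrections are uniform across all $2^N$ branches, they do not accumulate into an exponential factor.
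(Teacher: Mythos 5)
Your proposal is correct and follows essentially the same route as the paper's proof: the same maximally entangled ansatz, the same decomposition over the $2^N$ erasure flag outcomes, the same data-processing reduction of each non-fully-erased branch to the single-index marginal $\zeta^{a_0b_0A_0B_0}$ (using the extension property of Lemma~\ref{lemma:PPT extension} together with \eqref{DPI:remove} and \eqref{DPI:equality}), and the same Fannes--Alicki continuity estimate combined with Lemma~\ref{pbit:basic} before summing the $\lambda^N$ and $(1-\lambda^N)$ branches.
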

\begin{proof}
Denote $$A_0^N = A_{0,1}\cdots A_{0,N},\quad A_0'^N = A'_{0,1}\cdots A'_{0,N},\quad A' = a_0' A_0'^N$$
where each $A_{0,i}$ is isomorphic to $A_0$ labeled by $i$. The \Choi operator of $\Gamma^{A' \to b_0B_0^N}$ is given by
\begin{align*}
   \zeta^{a_0b_0 A^N_0 B^N_0} = \left( id^{a_0 A_0^N\to a_0 A_0^N} \otimes \Gamma^{A' \to b_0B_0^N} \right) \left(\ketbra{\psi}{\psi}^{a_0a_0'} \otimes \ketbra{\Psi}{\Psi}^{A_{0,1}A_{0,1}'^1} \otimes \cdots \otimes \ketbra{\Psi}{\Psi}^{A_{0,N}A_{0,N}'}\right).
\end{align*}
Denote $\wt B_0 := F A_0 \cong A_0 \oplus A_0$ and
\begin{equation}
    \rho^{a_0b_0B_0^N \wt B_0^N} := \left(id^{a_0 \to a_0} \otimes \Gamma^{A' \to b_0B_0^N} \otimes (\mc E_{\lambda}^{A_0 \to \wt B_0})^{\otimes N} \right)  \left(\ketbra{\psi}{\psi}^{a_0a_0'} \otimes \ketbra{\Psi}{\Psi}^{A_{0,1}A_{0,1}'^1} \otimes \cdots \otimes \ketbra{\Psi}{\Psi}^{A_{0,N}A_{0,N}'} \right),
\end{equation}
Choose the input state as 
\begin{align}\label{ansatz state}
    \sigma^{A'A_0^N} = \tr_{a_0}\left(\ketbra{\psi}{\psi}^{a_0a_0'} \otimes \ketbra{\Psi}{\Psi}^{A_{0,1}A_{0,1}'^1} \otimes \cdots \otimes \ketbra{\Psi}{\Psi}^{A_{0,N}A_{0,N}'}\right),
\end{align}
then via Lemma \ref{lemma:basic} for flagged channels, 
\begin{equation}\label{N copy:e1}
\begin{aligned}
    & I(a_0 \rangle b_0B_0^N \wt B_0^N)_{\rho^{a_0b_0B_0^N \wt B_0^N}} = I_c(\Gamma^{A' \to b_0B_0^N} \otimes (\mc E_{\lambda}^{A_0 \to \wt B_0})^{\otimes N}, \sigma^{A'A_0^N} ) \\
    & = \sum_{\textbf{b} \in \{0,1\}^N} (1-\lambda)^{\sum_{i=1}^N b_i} \lambda^{N - \sum_{i=1}^N b_i}I_c(\Gamma^{A' \to b_0B_0^N} \otimes \bigotimes_{i=1}^N \mc N_{b_i}, \sigma^{A'A_0^N}),
\end{aligned}
\end{equation}
where $\mc N_0 = \mc E_1^{A_0 \to A_0}$ and $\mc N_1 = id^{A_0 \to A_0}$. Given $\textbf{b} \in \{0,1\}^N$, we have two cases: (i) $\sum_{i=1}^N b_i = 0$; (ii) $\sum_{i=1}^N b_i>0$. The calculation proceeds as follows: 
\begin{align*}
    \text{(i)} \sum_{i=1}^N b_i = 0: & \quad \lambda^N I_c(\Gamma^{A' \to b_0B_0^N} \otimes \bigotimes_{i=1}^N \mc E_1^{A_{0,i}\to A_{0,i}}, \sigma^{A'A_0^N} ) \\
    & = \lambda^N I(a_0 \rangle b_0 B_0^NA_0^N)_{\rho^{a_0b_0B_0^N} \otimes  \mu^{\otimes N}}  = \lambda^N I(a_0 \rangle b_0 B_0^N)_{\rho^{a_0b_0B_0^N}} \ge \lambda^N I(a_0 \rangle b_0)_{\zeta^{a_0b_0}},
\end{align*}
where the above equalities follow from the definition of coherent information; the inequality follows from data processing inequality \eqref{DPI:remove}. For the other case,
\begin{align*}
    \text{(ii)} \sum_{i=1}^N b_i > 0: & \quad \sum_{\substack{\exists 1 \le j \le N\\ b_j = 1}} (1-\lambda)^{\sum_{i=1}^N b_i} \lambda^{N - \sum_{i=1}^N b_i} I_c(\Gamma^{A' \to b_0B_0^N} \otimes \bigotimes_{i=1}^N \mc N_{b_i}^{A_{0,i}\to A_{0,i}}, \sigma^{A'A_0^N} ) \\
    & \ge \sum_{\substack{\exists 1 \le j \le N\\ b_j = 1}} (1-\lambda)^{\sum_{i=1}^N b_i} \lambda^{N - \sum_{i=1}^N b_i} I_c(\Gamma^{A' \to b_0B_0^N} \otimes id^{A_{0,j}\to A_{0,j}}, \sigma^{A'A_{0,j}} ) \\
    & = \sum_{\substack{\exists 1 \le j \le N\\ b_j = 1}} (1-\lambda)^{\sum_{i=1}^N b_i} \lambda^{N - \sum_{i=1}^N b_i} I(a_0\rangle b_0 A_{0,j}B_0^N)_{\zeta^{a_0b_0A_{0,j} B_0^N}} \\
    & \ge \sum_{\substack{\exists 1 \le j \le N\\ b_j = 1}} (1-\lambda)^{\sum_{i=1}^N b_i} \lambda^{N - \sum_{i=1}^N b_i} I(a_0\rangle b_0 A_{0,j}B_{0,j})_{\zeta^{a_0b_0A_{0,j}B_{0,j}}} \\
    & = \sum_{\substack{\exists 1 \le j \le N\\ b_j = 1}} (1-\lambda)^{\sum_{i=1}^N b_i} \lambda^{N - \sum_{i=1}^N b_i} I(a_0\rangle b_0 A_{0}B_{0})_{\zeta^{a_0b_0A_{0}B_{0}}} = (1-\lambda^N) I(a_0\rangle b_0 A_{0}B_{0})_{\zeta^{a_0b_0A_{0}B_{0}}}.
\end{align*}
For the first inequality above, note that at least one register has perfect channel $\mc N_1$, then $I_c(\Gamma^{A' \to b_0B_0^N} \otimes \bigotimes_{i=1}^N \mc N_{b_i}^{A_{0,i}\to A_{0,i}}, \sigma^{A'A_0^N} ) \ge I_c(\Gamma^{A' \to b_0B_0^N} \otimes id^{A_{0,j}\to A_{0,j}}, \sigma^{A'A_{0,j}} )$ follows from data processing inequality if there are more than one register with perfect channel; otherwise we have equality since the register has a completely erasure map and we can apply \eqref{DPI:equality}. The second inequality follows from data processing inequality~\eqref{DPI:remove}.  

Using the same argument as in the proof of Proposition \ref{prop:amplification 2}, we have \begin{align*}
    |I(a_0 \rangle b_0A_0B_0)_{\zeta} - I(a_0 \rangle b_0A_0B_0)_{\gamma}| \le 4\varepsilon + 2h(\varepsilon),\quad |I(a_0 \rangle b_0)_{\zeta^{a_0b_0}} - I(a_0 \rangle b_0)_{\gamma^{a_0b_0}}| \le 4\varepsilon + 2h(\varepsilon),
\end{align*}
Combining the above two estimates and plugging them into \eqref{N copy:e1}, we have 
\begin{align*}
    \mc Q^{(1)} \left(\Gamma^{a_0' A_0'^N \to b_0B_0^N} \otimes (\mc E_{\lambda}^{A_0 \to FA_0})^{\otimes N} \right) & \ge I(a_0 \rangle b_0B_0^N \wt B_0^N)_{\rho^{a_0b_0B_0^N \wt B_0^N}} \\
    & \ge \lambda^N \left((1- h(\frac{1+|c|}{2}))- (2 h(\varepsilon) + 4\varepsilon)\right) + (1-\lambda^N)\left(1 - (2 h(\varepsilon) + 4\varepsilon)\right)\\
    & = 1 - \lambda^N h(\frac{1+|c|}{2})- 4\varepsilon - 2h(\varepsilon).
\end{align*}
\end{proof}
\noindent Using Proposition \ref{prop:amplification 3}, we can prove the following super-activation result:
\begin{theorem}\label{main:non-additivity}
    Given $N\ge 1$ and denote $A' = a_0 A_0^N, B = b_0B_0^N$. Suppose $\Gamma_\kappa^{A' \to F B}$ is given by
    \begin{align*}
        \Gamma_\kappa^{A' \to F B} = (1-\kappa) \ketbra{0}{0}^F\otimes \Gamma^{A' \to B} + \kappa \ketbra{1}{1}^F \otimes \mc E_1^{A' \to B},\quad \kappa\in (0,\frac{1}{2})
    \end{align*}
    where $\Gamma^{A' \to B}$ is induced by the state $\zeta^{AB} = \zeta^{a_0b_0A_0^NB_0^N}$, with $\|\zeta^{a_0b_0A_0B_0} - \gamma^{a_0b_0A_0B_0}\|_1 \le \varepsilon$ for some pbit $\gamma^{a_0b_0A_0B_0}$. Then for \begin{equation}
        N > \frac{\log(1 - 2h(\varepsilon) - 4\varepsilon - \frac{\kappa}{1-\kappa}) - \log(h(\frac{1+|c|}{2}))}{\log \lambda},
    \end{equation}
    where $c = \tr(\bra{00}^{a_0b_0} \gamma^{a_0b_0A_0B_0} \ket{11}^{a_0b_0})$, we have 
    \begin{align*}
        \mc Q^{(1)}\left((\Gamma_\kappa^{A' \to F B} \oplus \mc E_{\lambda}^{A_0 \to FA_0})^{\otimes N}\right) > 0. 
    \end{align*}
\end{theorem}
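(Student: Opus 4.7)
\medskip
\noindent\textit{Proof proposal.}
The plan is to lift the one-copy amplification of Proposition~\ref{prop:amplification 3} to the $N$-copy direct-sum setting by isolating a single favourable product branch. First, the direct-sum identity \eqref{coh:switch} in Lemma~\ref{lemma:basic} gives
\[
\mc Q^{(1)}\bigl((\Gamma_\kappa \oplus \mc E_\lambda)^{\otimes N}\bigr)
\;\ge\;
\mc Q^{(1)}\bigl(\Gamma_\kappa \otimes \mc E_\lambda^{\otimes (N-1)}\bigr),
\]
so it suffices to exhibit an input $\rho$ whose coherent information on $\Gamma_\kappa \otimes \mc E_\lambda^{\otimes (N-1)}$ is strictly positive for $N$ above the claimed threshold.

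Next I would expand the flagged mixture $\Gamma_\kappa = (1-\kappa)\Gamma + \kappa\,\mc E_1$ via the flagged-channel identity \eqref{coh:flag} to obtain, for any such $\rho$,
\[
I_c\bigl(\Gamma_\kappa \otimes \mc E_\lambda^{\otimes (N-1)},\rho\bigr)
= (1-\kappa)\,I_c\bigl(\Gamma \otimes \mc E_\lambda^{\otimes (N-1)},\rho\bigr)
+ \kappa\,I_c\bigl(\mc E_1 \otimes \mc E_\lambda^{\otimes (N-1)},\rho\bigr).
\]
For the $\kappa$-term, since $\mc E_1$ replaces the entire input $A' = a_0' A_0'^N$ by a fixed state, the qubit reference $a_0$ decouples from the output and a short calculation (the coherent-information analogue of a replacement channel) gives $I_c(\mc E_1 \otimes \mc E_\lambda^{\otimes (N-1)},\rho) \ge -1$. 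For the $(1-\kappa)$-term, I would reuse the ansatz of Proposition~\ref{prop:amplification 3}, pairing $N-1$ of the $N$ shield registers $A_{0,i}'$ of $\Gamma$ with the $N-1$ erasure inputs and completing the remaining shield register via an external purifier. Repeating the Hamming-weight case analysis from that proof together with the Fannes--Alicki continuity estimate against the exact pbit $\gamma$ yields a bound of the form
\[
I_c\bigl(\Gamma \otimes \mc E_\lambda^{\otimes (N-1)},\rho\bigr)
\;\ge\;
\Delta_N \;:=\; 1 - \lambda^{N} h\bigl(\tfrac{1+|c|}{2}\bigr) - 4\varepsilon - 2h(\varepsilon).
\]

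Combining the two terms gives $I_c \ge (1-\kappa)\Delta_N - \kappa$, so positivity reduces to $\Delta_N > \kappa/(1-\kappa)$. Isolating $\lambda^N$ and taking logarithms (and noting that $\log\lambda<0$ flips the inequality) rearranges this directly into the stated threshold on $N$. The main technical obstacle lies in the adaptation of Proposition~\ref{prop:amplification 3} to the $(N-1)$-erasure setting: one must decide how the residual shield register enters the Hamming-weight stratification -- either as an additional ``effective'' erasure subsumed into the $\lambda^N$ exponent, or via an external reference that must then be tracked through the data-processing steps \eqref{DPI:remove} and \eqref{DPI:equality} -- and verify that the $4\varepsilon + 2h(\varepsilon)$ continuity overhead remains valid as a single uniform term. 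All remaining manipulations are routine applications of the identities catalogued in Section~\ref{sec:prelim}.
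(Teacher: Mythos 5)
Your approach follows the paper's: direct-sum reduction, flag decomposition of $\Gamma_\kappa$, the worst-case bound $-1$ for the $\kappa$-branch, Proposition~\ref{prop:amplification 3} for the $(1-\kappa)$-branch, and then solving $\bigl(1-\kappa\bigr)\Delta_N - \kappa > 0$ for $N$. But your greater care at the direct-sum step actually surfaces a genuine index mismatch that the paper's own proof glosses over. The identity \eqref{coh:switch}, applied to $(\Gamma_\kappa\oplus\mc E_\lambda)^{\otimes N}$, can only reach $\Gamma_\kappa\otimes\mc E_\lambda^{\otimes(N-1)}$ (the $\ell=1$ branch, $N$ tensor factors total); the paper's proof invokes $\Gamma_\kappa\otimes\mc E_\lambda^{\otimes N}$, which has $N+1$ factors and cannot be selected from an $N$-fold power. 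You get $N-1$ erasure factors, as you should, and you correctly flag that Proposition~\ref{prop:amplification 3} is set up to pair $N$ erasures with the $N$ shield registers of $\Gamma$.

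Where your proposal falls short is in closing that gap. You then assert $\Delta_N = 1 - \lambda^N h\bigl(\tfrac{1+|c|}{2}\bigr) - 4\varepsilon - 2h(\varepsilon)$ as if the $N$-erasure conclusion still holds, but with only $N-1$ erasures the Hamming stratification runs over $\{0,1\}^{N-1}$, so the all-erased branch has weight $\lambda^{N-1}$, not $\lambda^N$. And if the residual shield register $A_{0,N}$ is routed into the purifying reference (your ``external purifier''), two further things break: (i) the $\kappa$-branch bound deteriorates from $-S(a_0)=-1$ to $-S(a_0A_{0,N})=-(1+\log d)$, because $\mc E_1$ now decouples the \emph{entire} enlarged reference $a_0A_{0,N}$ from the output, wrecking the constant-in-$d$ estimate; (ii) the all-erased contribution becomes $I(a_0A_{0,N}\rangle b_0B_0^N)$, which is not controlled by $I(a_0\rangle b_0)$ via the simple data-processing step used in Proposition~\ref{prop:amplification 3}. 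So the claimed $\Delta_N$ does not follow from what you propose. The cleanest repair — which applies equally to the paper — is to either state the theorem for $(\Gamma_\kappa\oplus\mc E_\lambda)^{\otimes(N+1)}$, or to take $\Gamma$ with $N-1$ shield registers, so that the $\ell=1$ direct-sum branch matches the hypotheses of Proposition~\ref{prop:amplification 3} exactly; this shifts the threshold on $N$ by one without changing the qualitative conclusion.
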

\begin{proof}
    Using the same input state $\sigma^{A'A_0^N}$~\eqref{ansatz state}, and applying Lemma \ref{lemma:basic}, we have 
    \begin{align*}
           & \mc Q^{(1)}\left((\Gamma_\kappa^{A' \to F B} \oplus \mc E_{\lambda}^{A_0 \to FA_0})^{\otimes N}\right) \ge \mc Q^{(1)}\left(\Gamma_\kappa^{A' \to F B} \otimes (\mc E_{\lambda}^{A_0 \to FA_0})^{\otimes N}\right) \\
           & \ge I_c(\Gamma_\kappa^{A' \to FB} \otimes (\mc E_{\lambda}^{A_0 \to FA_0})^{\otimes N}, \sigma^{A'A_0^N} ) \\
           & = (1-\kappa)I_c(\Gamma^{A' \to B} \otimes (\mc E_{\lambda}^{A_0 \to FA_0})^{\otimes N}, \sigma^{A'A_0^N} ) + \kappa I_c(\mc E_1^{A' \to B} \otimes (\mc E_{\lambda}^{A_0 \to FA_0})^{\otimes N}, \sigma^{A'A_0^N}).
    \end{align*}
    To proceed, for the first quantity above, we apply Proposition \ref{prop:amplification 3}; for the second quantity, we use the worse case bound $I_c(\mc E_1^{A' \to B} \otimes (\mc E_{\lambda}^{A_0 \to FA_0})^{\otimes N}, \sigma^{A'A_0^N}) \ge I_c(\mc E_1^{A'A_0^N \to A'A_0^N}, \sigma^{A'A_0^N}) = -S(\sigma^{A'A_0^N}) = -\log d_{a_0} = -1$. Therefore, we have 
    \begin{align*}
        \mc Q^{(1)}\left((\Gamma_\kappa^{A' \to F B} \oplus \mc E_{\lambda}^{A_0 \to FA_0})^{\otimes N}\right)\ge (1-\kappa) \left(1 - \lambda^N h(\frac{1+|c|}{2})- 4\varepsilon - 2h(\varepsilon)\right) -\kappa. 
    \end{align*}
    Let the above quantity be greater than zero, we get 
    \begin{align*}
        1 - \lambda^N h(\frac{1+|c|}{2})- 4\varepsilon - 2h(\varepsilon) > \frac{\kappa}{1-\kappa} \iff  N > \frac{\log(1 - 2h(\varepsilon) - 4\varepsilon - \frac{\kappa}{1-\kappa}) - \log(h(\frac{1+|c|}{2}))}{\log \lambda}.
    \end{align*}
\end{proof}
To ensure additivity up to any level, we use the following flag trick:
\begin{prop}\label{main:additivity}
    Suppose $\Gamma^{A_1\to B_1}$ is a zero-capacity channel. Then for any $\kappa \in (0,1)$ and $n\ge 1$, there exists $\lambda = \lambda(n,\kappa) <1$ such that $$\mc Q^{(1)}\bigg((\Gamma_\kappa^{A_1 \to F_1B_1} \oplus \mc E_{\lambda}^{A_2 \to F_2B_2})^{\otimes n}\bigg) = 0,$$ where
    \begin{equation}
        \Gamma_\kappa^{A_1 \to F_1B_1} := (1-\kappa)\ketbra{0}{0}^{F_1} \otimes \Gamma^{A_1\to B_1} + \kappa \ketbra{1}{1}^{F_1} \otimes \mc E_1^{A_1\to B_1}.
    \end{equation}
    In particular, we can choose 
    \begin{equation}
    \lambda = (1+\kappa^n)^{-1/n}.
    \end{equation}
\end{prop}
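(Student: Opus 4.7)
The plan is to combine the direct-sum switching identity \eqref{coh:switch} with the flagged-channel expansion \eqref{coh:flag}, then use the hypothesis $\mc Q(\Gamma)=0$ together with data processing to bound every component of the resulting convex decomposition. Concretely, \eqref{coh:switch} immediately reduces the problem to showing $\mc Q^{(1)}(\Gamma_\kappa^{\otimes \ell}\otimes \mc E_\lambda^{\otimes (n-\ell)})\le 0$ for every $0\le \ell\le n$. Because both $\Gamma_\kappa^{\otimes \ell}$ and $\mc E_\lambda^{\otimes (n-\ell)}$ are themselves flagged channels whose pure components lie in $\{\Gamma,\mc E_1,id\}$, applying \eqref{coh:flag} to both factors rewrites $I_c(\Gamma_\kappa^{\otimes \ell}\otimes \mc E_\lambda^{\otimes(n-\ell)},\rho)$ as a convex combination over $(\mathbf{b},\mathbf{c})\in\{0,1\}^\ell\times \{0,1\}^{n-\ell}$ of $I_c$ values for tensor products of those pure components, with weights $(1-\kappa)^{\ell-|\mathbf b|}\kappa^{|\mathbf b|}(1-\lambda)^{(n-\ell)-|\mathbf c|}\lambda^{|\mathbf c|}$.

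For each such component I would bound $I_c$ using data processing: writing $\mc E_1=\mc P_\mu\circ \tr$ and invoking \eqref{DPI:channel} gives
\[
I_c\Bigl(\bigotimes_{i}\Phi_{b_i}\otimes \bigotimes_{j}\Psi_{c_j},\rho\Bigr)\le I_c\bigl(\Gamma^{\otimes g(\mathbf b)}\otimes id^{\otimes i(\mathbf c)},\rho^{\text{alive}}\bigr),
\]
where $\Phi_0=\Gamma$, $\Phi_1=\Psi_1=\mc E_1$, $\Psi_0=id$, the integer $g(\mathbf b)$ counts the alive $\Gamma$-blocks and $i(\mathbf c)$ the alive identity blocks, and $\rho^{\text{alive}}$ is the reduction of $\rho$ to the non-erased registers. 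Since $\mc Q(\Gamma)=0$ forces $\mc Q^{(1)}(\Gamma^{\otimes g})\le g\mc Q(\Gamma)=0$, every component with $i(\mathbf c)=0$ contributes nonpositively. The components with $i(\mathbf c)\ge 1$ can contribute positively through the identity channels, and these must be offset by the negative contributions of components in which many registers are replaced by $\mc E_1$.

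On product inputs $\rho=\bigotimes_k\rho_k$ the register-wise analysis is clean: each $\Gamma_\kappa$-type register contributes $(1-\kappa)I_c(\Gamma,\rho_k)-\kappa S(\rho_k)\le -\kappa S(\rho_k)\le 0$, while each $\mc E_\lambda$-type register contributes $(1-2\lambda)S(\rho_k)\le 0$ provided $\lambda\ge \tfrac12$, and one checks directly that $\lambda=(1+\kappa^n)^{-1/n}\ge \tfrac12$ for every $\kappa\in(0,1)$ and $n\ge 1$. The hard part—and what pins down the specific threshold $\lambda=(1+\kappa^n)^{-1/n}$, equivalently $\lambda^n+(\kappa\lambda)^n=1$—is extending this bound to arbitrary, possibly entangled, inputs, where the register-wise product decomposition of $I_c$ breaks down. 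I would handle this either by proving that the supremum over inputs is attained on product states (via a convexity argument for flagged channels built from $\Gamma$, $id$, and $\mc E_1$), or by invoking Araki--Lieb / strong subadditivity to balance the all-identity contribution of probabilistic weight $(1-\lambda)^{n-\ell}(1-\kappa)^\ell$ against the all-replacement contribution of weight $\lambda^{n-\ell}\kappa^\ell$ in the flagged decomposition; the algebraic identity $\lambda^n+(\kappa\lambda)^n=1$ is exactly what makes the worst-case convex combination over $\ell\in\{0,\ldots,n\}$ evaluate to $\le 0$ uniformly in $\rho$.
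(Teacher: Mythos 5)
There is a genuine gap. You lay out the reductions correctly — \eqref{coh:switch} to fix $\ell$, then \eqref{coh:flag} to expand $I_c$ as a convex combination over flag bit-strings with the three elementary channels $\Gamma$, $\mathcal E_1$, $id$ — and you even correctly diagnose the real difficulty: the register-wise product bound you derive on $\bigotimes_k\rho_k$ does not automatically survive entangled inputs. But you then stop with two plans (``reduce to product states via convexity'' or ``Araki–Lieb balancing'') without carrying either one out, and neither of those is actually needed.

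The paper's resolution is cruder and goes through for arbitrary inputs directly. After the flagged expansion you have weights summing to $1$ over bit-strings $\mathbf b\in\{0,1\}^n$; partition them into three disjoint groups. (i) The all-replacement string $\mathbf b=0$, of weight $\kappa^\ell\lambda^{n-\ell}$, contributes exactly $I_c(\mathcal E_1^{\otimes n},\rho^A)=-S(\rho^A)$. (ii) Strings with some $\Gamma$ alive but no $id$ alive on the erasure side: here $\mathcal E_1=\mathcal E_1\circ\Gamma$, so replacing dead factors by $\Gamma$ and then discarding the fixed-state outputs gives $I_c\le I_c(\Gamma^{\otimes\ell},\rho_{A_1^\ell})\le\mathcal Q^{(1)}(\Gamma^{\otimes\ell})\le 0$, with no sign loss. (iii) Strings with at least one $id$ alive, whose total weight is $1-\lambda^{n-\ell}$, each contributing at most $I_c(id^{\otimes n},\rho^A)=S(\rho^A)$ by data processing. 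Summing the three groups gives $\mathcal Q^{(1)}\le\bigl(1-\lambda^{n-\ell}-\kappa^\ell\lambda^{n-\ell}\bigr)S(\rho^A)$ for every input $\rho^A$, which is $\le 0$ as soon as $\lambda^{n-\ell}(1+\kappa^\ell)\ge 1$ for all $0\le\ell\le n$, and $\lambda=(1+\kappa^n)^{-1/n}$ satisfies all these constraints simultaneously (since $\kappa^\ell\ge\kappa^n$ and $n-\ell\le n$). No product-state reduction, no strong subadditivity: the only ingredients are the crude bound $I_c\le S(\rho^A)$, the post-composition DPI, and $\mathcal Q^{(1)}(\Gamma^{\otimes\ell})\le 0$. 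You had all the pieces; what was missing was this grouping, which makes the negative ``ballast'' from case (i) offset the positive weight from case (iii) uniformly over $\rho$.
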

\begin{proof}
    First, denote $$\mc N := \Gamma_\kappa^{A_1 \to F_1B_1} \oplus \mc E_{\lambda}^{A_2 \to F_2B_2},$$
    we apply Lemma \ref{lemma:basic} for direct sum channels: 
    \begin{align*}
        \mc Q^{(1)}(\mc N^{\otimes n}) = \max_{0\le \ell \le n} \mc Q^{(1)}\left((\Gamma_\kappa^{A_1 \to F_1B_1})^{\otimes \ell} \otimes (\mc E_{\lambda}^{A_2 \to F_2B_2})^{\otimes (n-\ell)}\right).
    \end{align*}
    Choosing $\ell \in [1,n-1]$ and $\rho^A := \rho^{A_1^\ell A_2^{n-\ell}}$ as the maximizer, then applying Lemma \ref{lemma:basic} for flagged channels:
    \begin{align}
        \mc Q^{(1)}(\mc N^{\otimes n}) & = I_c\left((\Gamma_\kappa^{A_1 \to F_1B_1})^{\otimes \ell} \otimes (\mc E_{\lambda}^{A_2 \to F_2B_2})^{\otimes (n-\ell)}, \rho^A\right) \\
        & = \sum_{\textbf{b} \in \{0,1\}^n} (1-\kappa)^{\sum_{i=1}^\ell b_i} \kappa^{\ell - \sum_{i=1}^\ell b_i} (1-\lambda)^{\sum_{j=\ell + 1}^n b_i} \lambda^{n-\ell - \sum_{j=\ell + 1}^n b_i}I_c(\bigotimes_{i=1}^\ell \mc M_{b_i} \otimes \bigotimes_{j=\ell+1}^{n-\ell} \mc N_{b_j}, \rho^A), \label{sum:e1}
    \end{align}
    where we denote $$\mc M_0 = \mc E_1^{A_1\to B_1},\ \mc M_1 = \Gamma^{A_1 \to B_1},\ \mc N_0 = \mc E_1^{A_2\to B_2},\ \mc N_1 = \mc I^{A_2 \to B_2} $$
    and we used the calculation for tensor products of flagged channels: 
    \begin{align*}
        (\Gamma_\kappa^{A_1 \to F_1B_1})^{\otimes \ell} = \sum_{\textbf{b} \in \{0,1\}^\ell} (1-\kappa)^{\sum_{i=1}^\ell b_i} \kappa^{\ell - \sum_{i=1}^\ell b_i} \ketbra{\textbf{b}}{\textbf{b}}^{F_1^\ell} \otimes \bigotimes_{i=1}^\ell \mc M_{b_i}
    \end{align*}
    and similar for $(\mc E_{\lambda}^{A_2 \to F_2B_2})^{\otimes (n-\ell)}$. 

    Given $\textbf{b} \in \{0,1\}^n$, we have three cases: (i) $\sum_{i=1}^n b_i = 0$, (ii) $\sum_{i=1}^\ell b_i>0, \sum_{j=\ell + 1}^{n} b_j = 0$ and (iii) $\sum_{j=\ell + 1}^{n} b_j > 0$. Then the calculation of \eqref{sum:e1} can be divided into three cases: 
    \begin{align*}
       & \text{(i)} \sum_{i=1}^n b_i = 0:\quad \kappa^\ell \lambda^{n-\ell} I_c\left((\mc E_1^{A_1\to B_1})^{\otimes \ell} \otimes (\mc E_1^{A_2\to B_2})^{\otimes (n-\ell)}, \rho^A\right) = -\kappa^\ell \lambda^{n-\ell} S(\rho^A). \\
       & \text{(ii)} \sum_{i=1}^\ell b_i>0, \sum_{j=\ell + 1}^{n} b_j = 0: \sum_{\textbf{b}\in \{0,1\}^\ell,|\textbf{b}|\neq 0} (1-\kappa)^{\sum_{i=1}^\ell b_i} \kappa^{\ell - \sum_{i=1}^\ell b_i} \lambda^{n-\ell}I_c\left(\bigotimes_{i=1}^\ell \mc M_{b_i} \otimes (\mc E_1^{A_2\to B_2})^{\otimes (n-\ell)}, \rho^A\right) \\
       & \hspace{4.1cm} \le \sum_{\textbf{b}\in \{0,1\}^\ell,|\textbf{b}|\neq 0} (1-\kappa)^{\sum_{i=1}^\ell b_i} \kappa^{\ell - \sum_{i=1}^\ell b_i} \lambda^{n-\ell}I_c\left((\Gamma^{A_1\to B_1})^{\otimes \ell} \otimes (\mc E_1^{A_2\to B_2})^{\otimes (n-\ell)}, \rho^A\right) \\
       & \hspace{4.1cm} = (1-\kappa^\ell)\lambda^{n-\ell} I_c\left((\Gamma^{A_1\to B_1})^{\otimes \ell} \otimes (\mc E_1^{A_2\to B_2})^{\otimes (n-\ell)}, \rho^A\right) \\
       & \hspace{4.1cm} \le (1-\kappa^\ell)\lambda^{n-\ell} I_c\left((\Gamma^{A_1\to B_1})^{\otimes \ell}, \rho_{A_1^\ell}\right) \le 0. \\ 
       & \text{(iii)} \sum_{j=\ell + 1}^{n} b_j > 0: \sum_{\substack{\exists \ell + 1\le j \le n\\ b_j = 1}} (1-\kappa)^{\sum_{i=1}^\ell b_i} \kappa^{\ell - \sum_{i=1}^\ell b_i} (1-p)^{\sum_{j=\ell + 1}^n b_i} p^{n-\ell - \sum_{j=\ell + 1}^n b_i} I_c\left(\bigotimes_{i=1}^\ell \mc M_{b_i} \otimes \bigotimes_{j=\ell+1}^{n-\ell} \mc N_{b_j}, \rho^A\right) \\
        & \hspace{2cm} \le \sum_{\substack{\exists \ell + 1\le j \le n\\ b_j = 1}} (1-\kappa)^{\sum_{i=1}^\ell b_i} \kappa^{\ell - \sum_{i=1}^\ell b_i} (1-p)^{\sum_{j=\ell + 1}^n b_i} \lambda^{n-\ell - \sum_{j=\ell + 1}^n b_i} I_c\left((\mc I^{A_1\to B_1})^{\otimes \ell} \otimes (\mc I^{A_2\to B_2})^{\otimes (n-\ell)}, \rho^A\right) \\
        &  \hspace{2cm} = (1-\lambda^{n-\ell}) S(\rho^A).
    \end{align*}
    Combining the above three cases, we have 
    \begin{align*}
        \mc Q^{(1)}(\mc N^{\otimes n}) \le (1-\lambda^{n-\ell} - \kappa^\ell \lambda^{n-\ell} )S(\rho^A).
    \end{align*}
    The above is less than zero if $\lambda \ge (1 + \kappa^\ell)^{-\frac{1}{n-\ell}}\in (0,1)$. In particular, we can choose $\lambda = (1+\kappa^n)^{-1/n}$.
\end{proof}
In summary, we fix $n\ge 1$ and $\varepsilon>0$ sufficiently small, we first choose $\kappa \in (0,\frac{1}{2})$ such that $1 - 2h(\varepsilon) - 4\varepsilon - \frac{\kappa}{1-\kappa}>0$. Via Proposition \ref{main:additivity}, we choose $\lambda(n,\kappa)<1$ such that 
\begin{equation}
    \mc Q^{(1)}\left((\Gamma_\kappa^{A' \to F B} \oplus \mc E_{\lambda}^{A_0 \to FA_0})^{\otimes n}\right) = 0.
\end{equation}
Via Theorem \ref{main:non-additivity}, we choose $N > \frac{\log(1 - 2h(\varepsilon) - 4\varepsilon - \frac{\kappa}{1-\kappa}) - \log(h(\frac{1+|c|}{2}))}{\log \lambda}$, we have 
    \begin{align}
        \mc Q^{(1)}\left((\Gamma_\kappa^{A' \to F B} \oplus \mc E_{\lambda}^{A_0 \to FA_0})^{\otimes N}\right) > 0. 
    \end{align}   
Our construction simplifies the erasure channel by omitting the $a_0$ component. By presenting the discussion in a more accessible way, we hope it will stimulate further exploration into the computability of quantum capacity.

\section{Conclusion and Outlook}
This work develops a unified picture of quantum capacity amplification built from private-state induced channels. In Section~\ref{sec:amp} we proved quantitative lower bounds on the achievable coherent information under explicit constructions (see, e.g., Theorem~\ref{thm:amplification}), and exhibited both amplification and super-amplification regimes, including concrete illustrations for erasure and depolarizing channels. Assuming the Spin Alignment Conjecture (SAC)~\cite{leditzky2023platypus}, Section~\ref{sec:single letter} established a single-letter formula for the quantum capacity of a broad class of private channels that are neither degradable, anti-degradable, nor PPT, and leveraged this to build channels with vanishing quantum capacity yet unbounded private capacity. Finally, Section~\ref{sec:approximate} extended the picture to approximate private channels, proving metric separation from the anti-degradable set (via diamond-norm lower bounds and continuity), giving an alternative superactivation proof in the approximate setting (extending Smith--Yard~\cite{Smith_2008}), and clarifying the scope of computability results in light of~\cite{Cubitt_2015}: these constructions do not settle the computability of the regularization of quantum capacity.

\emph{Limitations and outlook for future work}:
\begin{itemize}
    \item Several statements rely on SAC, with partial solution given in~\cite{Alhejji_2025, Alhejji_2024}. A complete solution is still open. 
    \item Amplification constants in our quantitative bounds are likely suboptimal, since it is given by maximally entangled state ansatz.
    \item Most constructions are built around a special class of private states, leaving open how universal the phenomenon is beyond this template. A general open question: suppose $\mc P(\mc N)> \mc Q(\mc N)$, does there exist an anti-degradable (or zero-capacity) channel $\mc M$ such that $\mc Q(\mc N\otimes \mc M) > \mc Q(\mc N)$?
\end{itemize}
\indent We hope these results and questions stimulate a more systematic theory of how private structure governs coherent transmission. 

\section*{Acknowledgement}
Peixue Wu thanks Graeme Smith, Debbie Leung, Vikesh Siddhu, Zhen Wu for helpful discussions. 

\bibliographystyle{marcotomPB} 
\bibliography{privacy}

\begin{thebibliography}{10}

\bibitem{Alhejji_2025}
M.~A. Alhejji.
\newblock {\em ``Refining Ky Fan’s Majorization Relation with Linear Programming''}.
\newblock \href{http://dx.doi.org/10.1007/s00023-025-01592-w}{Annales Henri Poincaré } (2025).

\bibitem{Alhejji_2024}
M.~A. Alhejji and E.~Knill.
\newblock {\em ``Towards a Resolution of the Spin Alignment Problem''}.
\newblock \href{http://dx.doi.org/10.1007/s00220-024-04980-1}{Communications in Mathematical Physics {\bf 405}(5)} (2024).

\bibitem{audenaert2024continuity}
K.~Audenaert, B.~Bergh, N.~Datta, M.~G. Jabbour, Ángela Capel, and P.~Gondolf.
\newblock {\em ``Continuity bounds for quantum entropies arising from a fundamental entropic inequality''}, (2024).
\newblock Available online: \url{https://arxiv.org/abs/2408.15306}.

\bibitem{berta2024continuity}
M.~Berta, L.~Lami, and M.~Tomamichel.
\newblock {\em ``Continuity of entropies via integral representations''}.
\newblock \href{http://dx.doi.org/10.1109/TIT.2025.3527858}{IEEE Transactions on Information Theory pages 1--1} (2025).

\bibitem{bhalerao2025}
S.~Bhalerao and F.~Leditzky.
\newblock {\em ``Improving quantum communication rates with permutation-invariant codes''}.
\newblock arXiv preprint arXiv:2508.09978 , (2025).

\bibitem{bhatia2013matrix}
R.~Bhatia.
\newblock {\em Matrix analysis}.
\newblock volume 169, \href{http://dx.doi.org/https://doi.org/10.1007/978-1-4612-0653-8}{Springer Science \& Business Media} (2013).

\bibitem{Chessa_2021}
S.~Chessa and V.~Giovannetti.
\newblock {\em ``Partially Coherent Direct Sum Channels''}.
\newblock \href{http://dx.doi.org/10.22331/q-2021-07-15-504}{Quantum {\bf 5}:\,504} (2021).

\bibitem{Cubitt_2015}
T.~Cubitt, D.~Elkouss, W.~Matthews, M.~Ozols, D.~Pérez-García, and S.~Strelchuk.
\newblock {\em ``Unbounded number of channel uses may be required to detect quantum capacity''}.
\newblock \href{http://dx.doi.org/10.1038/ncomms7739}{Nature Communications {\bf 6}(1)} (2015).

\bibitem{devetak2005private}
I.~Devetak.
\newblock {\em ``The Private Classical Capacity and Quantum Capacity of a Quantum Channel''}.
\newblock \href{http://dx.doi.org/10.1109/TIT.2004.839515}{IEEE Transactions on Information Theory {\bf 51}(1):\,44--55} (2005).

\bibitem{devetak2005capacity}
I.~Devetak and P.~W. Shor.
\newblock {\em ``The capacity of a quantum channel for simultaneous transmission of classical and quantum information''}, (2004).
\newblock Available online: \url{https://arxiv.org/abs/quant-ph/0311131}.

\bibitem{Fanizza_2020}
M.~Fanizza, F.~Kianvash, and V.~Giovannetti.
\newblock {\em ``Quantum Flags and New Bounds on the Quantum Capacity of the Depolarizing Channel''}.
\newblock \href{http://dx.doi.org/10.1103/physrevlett.125.020503}{Physical Review Letters {\bf 125}(2)} (2020).

\bibitem{fannes1973continuity}
M.~Fannes.
\newblock {\em ``A continuity property of the entropy density for spin lattice systems''}.
\newblock \href{http://dx.doi.org/10.1007/BF01646490}{Communications in Mathematical Physics {\bf 31}:\,291--294} (1973).

\bibitem{Fukuda_2007}
M.~Fukuda and M.~M. Wolf.
\newblock {\em ``Simplifying additivity problems using direct sum constructions''}.
\newblock \href{http://dx.doi.org/10.1063/1.2746128}{Journal of Mathematical Physics {\bf 48}(7)} (2007).

\bibitem{holevo2001evaluating}
A.~S. Holevo and R.~F. Werner.
\newblock {\em ``Evaluating capacities of bosonic Gaussian channels''}.
\newblock Physical Review A {\bf 63}(3):\,032312, (2001).

\bibitem{Horodecki_2005}
K.~Horodecki, M.~Horodecki, P.~Horodecki, and J.~Oppenheim.
\newblock {\em ``Secure Key from Bound Entanglement''}.
\newblock \href{http://dx.doi.org/10.1103/physrevlett.94.160502}{Physical Review Letters {\bf 94}(16)} (2005).

\bibitem{Horodecki_2009}
K.~Horodecki, M.~Horodecki, P.~Horodecki, and J.~Oppenheim.
\newblock {\em ``General Paradigm for Distilling Classical Key From Quantum States''}.
\newblock \href{http://dx.doi.org/10.1109/tit.2008.2009798}{IEEE Transactions on Information Theory {\bf 55}(4):\,1898–1929} (2009).

\bibitem{Koudia_2022}
S.~Koudia, A.~S. Cacciapuoti, K.~Simonov, and M.~Caleffi.
\newblock {\em ``How Deep the Theory of Quantum Communications Goes: Superadditivity, Superactivation and Causal Activation''}.
\newblock \href{http://dx.doi.org/10.1109/comst.2022.3196449}{IEEE Communications Surveys \& Tutorials {\bf 24}(4):\,1926–1956} (2022).

\bibitem{leditzky2017useful}
F.~Leditzky, N.~Datta, and G.~Smith.
\newblock {\em ``Useful states and entanglement distillation''}.
\newblock \href{http://dx.doi.org/10.1109/TIT.2017.2776907}{IEEE Transactions on Information Theory {\bf 64}(7):\,4689--4708} (2017).

\bibitem{leditzky2023generic}
F.~Leditzky, D.~Leung, V.~Siddhu, G.~Smith, and J.~A. Smolin.
\newblock {\em ``Generic Nonadditivity of Quantum Capacity in Simple Channels''}.
\newblock \href{http://dx.doi.org/10.1103/physrevlett.130.200801}{Physical Review Letters {\bf 130}(20)} (2023).

\bibitem{leditzky2023platypus}
F.~Leditzky, D.~Leung, V.~Siddhu, G.~Smith, and J.~A. Smolin.
\newblock {\em ``The Platypus of the Quantum Channel Zoo''}.
\newblock \href{http://dx.doi.org/10.1109/tit.2023.3245985}{IEEE Transactions on Information Theory {\bf 69}(6):\,3825–3849} (2023).

\bibitem{leditzky2018dephrasure}
F.~Leditzky, D.~Leung, and G.~Smith.
\newblock {\em ``Dephrasure Channel and Superadditivity of Coherent Information''}.
\newblock \href{http://dx.doi.org/10.1103/physrevlett.121.160501}{Physical Review Letters {\bf 121}(16)} (2018).

\bibitem{Leung_2014}
D.~Leung, K.~Li, G.~Smith, and J.~A. Smolin.
\newblock {\em ``Maximal Privacy without Coherence''}.
\newblock \href{http://dx.doi.org/10.1103/physrevlett.113.030502}{Physical Review Letters {\bf 113}(3)} (2014).

\bibitem{Leung_2009}
D.~Leung and G.~Smith.
\newblock {\em ``Continuity of Quantum Channel Capacities''}.
\newblock \href{http://dx.doi.org/10.1007/s00220-009-0833-1}{Communications in Mathematical Physics {\bf 292}(1):\,201–215} (2009).

\bibitem{lim2018activation}
Y.~Lim and S.~Lee.
\newblock {\em ``Activation of the quantum capacity of Gaussian channels''}.
\newblock \href{http://dx.doi.org/10.1103/physreva.98.012326}{Physical Review A {\bf 98}(1)} (2018).

\bibitem{lim2019activation}
Y.~Lim, R.~Takagi, G.~Adesso, and S.~Lee.
\newblock {\em ``Activation and superactivation of single-mode Gaussian quantum channels''}.
\newblock \href{http://dx.doi.org/10.1103/physreva.99.032337}{Physical Review A {\bf 99}(3)} (2019).

\bibitem{lloyd1997capacity}
S.~Lloyd.
\newblock {\em ``Capacity of the noisy quantum channel''}.
\newblock \href{http://dx.doi.org/10.1103/physreva.55.1613}{Physical Review A {\bf 55}(3):\,1613–1622} (1997).

\bibitem{shor2002}
P.~Shor.
\newblock {\em ``The quantum channel capacity and coherent information''}, (2002).
\newblock Available online: \url{https://www.msri.org/publications/ln/msri/2002/quantum/shor/1}.

\bibitem{siddhu2020leaking}
V.~Siddhu.
\newblock {\em ``Leaking information to gain entanglement''}, (2020).
\newblock Available online: \url{https://arxiv.org/abs/2011.15116}.

\bibitem{siddhu2021entropic}
V.~Siddhu.
\newblock {\em ``Entropic singularities give rise to quantum transmission''}.
\newblock \href{http://dx.doi.org/10.1038/s41467-021-25954-0}{Nature Communications {\bf 12}(1)} (2021).

\bibitem{siddhu2021positivity}
V.~Siddhu and R.~B. Griffiths.
\newblock {\em ``Positivity and Nonadditivity of Quantum Capacities Using Generalized Erasure Channels''}.
\newblock \href{http://dx.doi.org/10.1109/tit.2021.3080819}{IEEE Transactions on Information Theory {\bf 67}(7):\,4533–4545} (2021).

\bibitem{Smith_2009}
G.~Smith and J.~A. Smolin.
\newblock {\em ``Extensive Nonadditivity of Privacy''}.
\newblock \href{http://dx.doi.org/10.1103/physrevlett.103.120503}{Physical Review Letters {\bf 103}(12)} (2009).

\bibitem{Smith_2025}
G.~Smith and P.~Wu.
\newblock {\em ``Additivity of Quantum Capacities in Simple Non-Degradable Quantum Channels''}.
\newblock \href{http://dx.doi.org/10.1109/tit.2025.3583936}{IEEE Transactions on Information Theory {\bf 71}(8):\,6134–6154} (2025).

\bibitem{Smith_2008}
G.~Smith and J.~Yard.
\newblock {\em ``Quantum Communication with Zero-Capacity Channels''}.
\newblock \href{http://dx.doi.org/10.1126/science.1162242}{Science {\bf 321}(5897):\,1812–1815} (2008).

\bibitem{Sutter_2017}
D.~Sutter, V.~B. Scholz, A.~Winter, and R.~Renner.
\newblock {\em ``Approximate Degradable Quantum Channels''}.
\newblock \href{http://dx.doi.org/10.1109/tit.2017.2754268}{IEEE Transactions on Information Theory {\bf 63}(12):\,7832–7844} (2017).

\bibitem{Wang_2018}
X.~Wang, W.~Xie, and R.~Duan.
\newblock {\em ``Semidefinite Programming Strong Converse Bounds for Classical Capacity''}.
\newblock \href{http://dx.doi.org/10.1109/tit.2017.2741101}{IEEE Transactions on Information Theory {\bf 64}(1):\,640–653} (2018).

\bibitem{watrous2012}
J.~Watrous.
\newblock {\em ``Simpler semidefinite programs for completely bounded norms''}, (2012).
\newblock Available online: \url{https://arxiv.org/abs/1207.5726}.

\bibitem{wilde2011classical}
M.~M. Wilde.
\newblock {\em ``From classical to quantum Shannon theory''}.
\newblock arXiv preprint arXiv:1106.1445 , (2011).

\bibitem{Winter_2016}
A.~Winter and D.~Yang.
\newblock {\em ``Potential Capacities of Quantum Channels''}.
\newblock \href{http://dx.doi.org/10.1109/tit.2016.2519920}{IEEE Transactions on Information Theory {\bf 62}(3):\,1415–1424} (2016).

\bibitem{wuzhen2025_2}
Z.~Wu, Z.~Ma, and J.~Fullwood.
\newblock {\em ``On small perturbations of coherent information''}.
\newblock arXiv preprint arXiv:2507.16920 , (2025).

\bibitem{wuzhen2025_1}
Z.~Wu, Q.~Zhao, and Z.~Ma.
\newblock {\em ``Superadditivity of quantum capacity in simple channels''}.
\newblock \href{http://dx.doi.org/https://doi.org/10.1103/14j4-6tfc}{Physical Review A {\bf 112}(3):\,032615} (2025).

\bibitem{wuzhen2025_3}
Z.~Wu and S.-Q. Zhou.
\newblock {\em ``Extreme Capacities in Generalized Direct Sum Channels''}, (2025).
\newblock Available online: \url{https://arxiv.org/abs/2510.10711}.

\bibitem{Zhu_2025}
C.~Zhu, H.~Mao, K.~Fang, and X.~Wang.
\newblock {\em ``Geometric optimization for quantum communication''}, (2025).
\newblock Available online: \url{https://arxiv.org/abs/2509.15106}.

\bibitem{Zhu_2024}
C.~Zhu, C.~Zhu, and X.~Wang.
\newblock {\em ``Estimate Distillable Entanglement and Quantum Capacity by Squeezing Useless Entanglement''}.
\newblock \href{http://dx.doi.org/10.1109/jsac.2024.3380081}{IEEE Journal on Selected Areas in Communications {\bf 42}(7):\,1850–1860} (2024).

\end{thebibliography}
\end{document}